\pgfplotsset{compat=1.14}
\newcommand{\tuple}[1]{\left\langle #1 \right\rangle}     
\newcommand{\sem}[1]{\left\llbracket #1 \right\rrbracket} 
\newcommand{\set}[1]{\left\{ #1 \right\}}                 
\newcommand{\true}{\textit{true}}
\newcommand{\false}{\textit{false}}
\newcommand{\defeq}{\triangleq}
\renewcommand{\vec}[1]{\mathbf{#1}}
\definecolor{black}{RGB}{0,0,0}
\definecolor{green}{RGB}{25,160,80}
\definecolor{red}{RGB}{192,57,43}
\definecolor{blue}{RGB}{41,128,185}
\definecolor{orange}{RGB}{255,99,0}
\definecolor{gray}{RGB}{75,75,75}
\definecolor{lightgrey}{RGB}{240,240,240}
\definecolor{purple}{RGB}{155, 89, 182}
\definecolor{darkgrey}{RGB}{52, 73, 94}
\lstdefinestyle{base}{
  language=C,
  emptylines=1,
  breaklines=true,
  basicstyle=\color{black}\fontfamily{pzc}\selectfont\tt,
  commentstyle=\color{gray}\rm\itshape,
  keywordstyle=\rm\bfseries,
  identifierstyle=\rm\itshape,
  escapechar=@,
  morekeywords=[1]{then,do,halt},
  morekeywords=[2]{assert},
  morekeywords=[3]{requires,ensures},
  keywordstyle	= [2]\color{red}\bf,
  keywordstyle	= [3]\bf\color{gray},
  numberstyle=\footnotesize\color{gray},
  moredelim=**[is][\bf\color{green}]{@!}{!@},
  moredelim=**[is][\bf\color{red}]{@?}{?@},
  literate=
    {<=}{{$\leq$}}1
    {>=}{{$\geq$}}1
    {!}{{$\neg$}}1
    {!=}{{$\neq$}}1
    {||}{{$\lor$}}1
    {&&}{{$\land$}}1
    {->}{{$\rightarrow$}}1
    {_1}{$_{1}$}2
    {_2}{$_{2}$}2
    {_3}{$_{3}$}2
}
\newcommand{\cinline}[1]{\mbox{\lstinline[style=base,mathescape]{#1}}}
\DeclareFontFamily{U}  {MnSymbolC}{}
\DeclareFontShape{U}{MnSymbolC}{m}{n}{
    <-6>  MnSymbolC5
   <6-7>  MnSymbolC6
   <7-8>  MnSymbolC7
   <8-9>  MnSymbolC8
   <9-10> MnSymbolC9
  <10-12> MnSymbolC10
  <12->   MnSymbolC12}{}
\DeclareSymbolFont{MnSyC}{U}{MnSymbolC}{m}{n}
\DeclareFontFamily{U}  {MnSymbolA}{}
\DeclareFontShape{U}{MnSymbolA}{m}{n}{
    <-6>  MnSymbolA5
   <6-7>  MnSymbolA6
   <7-8>  MnSymbolA7
   <8-9>  MnSymbolA8
   <9-10> MnSymbolA9
  <10-12> MnSymbolA10
  <12->   MnSymbolA12}{}
\DeclareSymbolFont{MnSyA}{U}{MnSymbolA}{m}{n}
\DeclareMathSymbol{\righthalfcup}{\mathrel}{MnSyC}{184}
\DeclareMathSymbol{\leftmodels}{\mathrel}{MnSyA}{226}
\newcommand{\romanqed}{$\righthalfcup$}
\newenvironment{example}[1][]{%
  \par%
  \pushQED{\let\qedsymbol\romanqed\qed}%
  \trivlist%
\item\ignorespaces%
  \refstepcounter{theorem}%
  \textbf{Example \arabic{section}.\arabic{theorem}}\ifx&#1& \else \;(#1) \fi \hspace{2pt}
}{%
  \popQED\endtrivlist%
}
\newmdenv[
    linecolor=red,
    topline=false,
    bottomline=false,
    rightline=false,
    outermargin=-11pt,
    innermargin=-11pt,
    linewidth=2pt
]{changebar}
\newcommand{\interp}{\mathcal{I}}
\newcommand{\elbl}{L}
\renewcommand{\mp}{\textit{mp}}
\renewcommand{\wp}{\textit{wp}}
\newcommand{\TF}{\textbf{TF}}
\newcommand{\SF}{\textbf{SF}}
\newcommand{\MP}{\textbf{MP}}
\newcommand{\conv}{\textit{conv}}
\newcommand{\Var}{\textsf{Var}}
\newcommand{\LVar}{\textsf{LVar}}
\newcommand{\GVar}{\textsf{GVar}}
\newcommand{\State}{\textsf{State}}
\newcommand{\RegExp}{\textsf{RegExp}}
\newcommand{\oRegExp}{{\omega\textsf{-RegExp}}}
\newcommand{\tfsem}[1]{\mathcal{T}\!\sem{#1}}
\newcommand{\tr}[3]{#2 \rightarrow_{#1} #3}
\newcommand{\trstar}[3]{#2 \rightarrow_{#1}^* #3}
\newcommand{\mpsem}[1]{\mathcal{T}^\omega\!\sem{#1}}
\newcommand{\abssem}[1]{\interp\!\sem{#1}}
\newcommand{\absosem}[1]{\interp^\omega\!\!\sem{#1}}
\newcommand{\pathexp}[3]{\textit{PathExp}_{#1}\!\left(#2,#3\right)}
\newcommand{\opathexp}[2]{\textit{PathExp}_{#1}^\omega\!\left(#2\right)}
\newcommand{\paths}[3]{\textit{Paths}_{#1}\!\left(#2,#3\right)}
\newcommand{\opaths}[2]{\textit{Paths}_{#1}^\omega\!\left(#2\right)}
\newcommand{\pathrep}[3]{\textit{PathRep}_{#1}\!\left(#2,#3\right)}
\newcommand{\opathrep}[2]{\textit{PathRep}_{#1}^\omega\!\left(#2\right)}
\newcommand{\feval}{\textsf{eval}}
\newcommand{\flink}{\textsf{link}}
\newcommand{\ffind}{\textsf{find}}
\newcommand{\tightparen}[1]{\left(\mkern-5mu #1 \mkern-5mu\right)}
\newcommand{\summary}{S}
\begin{document}

\title{Termination Analysis without the Tears}

\author{Shaowei Zhu}
\email{shaoweiz@cs.princeton.edu}
\affiliation{%
	\institution{Princeton University}
	\city{Princeton}
	\state{NJ}
	\country{USA}
}

\author{Zachary Kincaid}
\email{zkincaid@cs.princeton.edu}
\affiliation{%
  \institution{Princeton University}
  \city{Princeton}
  \state{NJ}
  \country{USA}
}

\begin{abstract}
  Determining whether a given program terminates is the quintessential
undecidable problem.  Algorithms for termination analysis may be classified
into two groups: (1) algorithms with strong behavioral guarantees that
work in limited circumstances (e.g., complete synthesis of linear
ranking functions for polyhedral loops), and (2) algorithms that are
widely applicable, but have weak behavioral guarantees
(e.g., {\sc Terminator}).  This paper investigates the space in
between: \textit{how can we design practical termination analyzers with
  useful behavioral guarantees?}

This paper presents a termination analysis that is both
\textit{compositional} (the result of analyzing a composite program is
a function of the analysis results of its components) and
\textit{monotone} (``more information into the analysis yields more
information out'').  The paper has two key contributions.  The first
is an extension of Tarjan's method for solving path problems in graphs
to solve \textit{infinite} path problems.  This provides a foundation
upon which to build compositional termination analyses.  The second is
a collection of monotone conditional termination analyses based on
this framework.  We demonstrate that our tool ComPACT
(Compositional and Predictable Analysis for Conditional Termination)
is competitive with state-of-the-art termination tools while providing
stronger behavioral guarantees.

\end{abstract}

\begin{CCSXML}
	<ccs2012>
	<concept>
	<concept_id>10003752.10010124.10010138.10010143</concept_id>
	<concept_desc>Theory of computation~Program analysis</concept_desc>
	<concept_significance>500</concept_significance>
	</concept>
	<concept>
	<concept_id>10011007.10010940.10010992.10010998.10011000</concept_id>
	<concept_desc>Software and its engineering~Automated static analysis</concept_desc>
	<concept_significance>500</concept_significance>
	</concept>
	<concept>
	<concept_id>10003752.10003766.10003776</concept_id>
	<concept_desc>Theory of computation~Regular languages</concept_desc>
	<concept_significance>500</concept_significance>
	</concept>
	</ccs2012>
\end{CCSXML}

\ccsdesc[500]{Theory of computation~Program analysis}
\ccsdesc[500]{Software and its engineering~Automated static analysis}
\ccsdesc[500]{Theory of computation~Regular languages}

\keywords{Algebraic program analysis, termination analysis, loop summarization, algebraic path problems}

\maketitle

\section{Introduction} \label{sec:introduction}

Termination is an important correctness property in itself, and is a
sub-problem of proving total correctness, liveness properties
\cite{POPL:CGPRV2007,POPL:CK2011,PLDI:CK2013,CAV:CKP2015,TACAS:BCIKP2016},
and bounds on resource usage
\cite{SAS:AAGP2008,POPL:GMC2009,PLDI:GZ2010,PLDI:CHS2015,FMCAD:SZV2015}.
Determining whether a program terminates is undecidable, and so
progress on automated tools for termination analysis is driven by
heuristic reasoning techniques.  While these heuristics are often
effective in practice, they can be brittle and unpredictable.  For
example, termination analyzers may themselves fail to terminate on
some input programs, or report false alarms, or return different
results for the same input, or suffer from ``butterfly effects'', in
which a small changes to the program's source code drastically changes
the analysis.


This paper is motivated by the principle that \textit{changes to a
  program should have a predictable impact on its analysis}.  We
develop a style of termination analysis that achieves two particular
desiderata:
\begin{itemize}
\item \textit{Compositionality}: composite programs are analyzed by
  analyzing their sub-components and then combining the results.
  Compositionality implies that changing part of a program only
  changes the analysis of that part.  It enables prompt user
  interaction, since an analysis need not reanalyze the whole program
  to respond to a program change.
\item \textit{Monotonicity}: more information into the analysis yields
  more information out.  Monotonicity implies that certain actions,
  e.g., a user annotating a procedure with additional pre-conditions,
  or an abstract interpreter instrumenting loop invariants into a
  program, \textit{cannot} degrade analysis results.
\end{itemize}

Our approach is based on the paradigm of \textit{algebraic program
  analysis} \cite{JACM:Tarjan1981,JACM:Tarjan1981b,FMCAD:FK2015}.
An algebraic program analysis is described by an algebraic structure in which the elements
represent properties of finite program executions and the operations compose
those properties via sequencing, choice, and iteration (mirroring the
structure of regular expressions).  To verify a safety property, an
algebraic program analyzer computes a regular expression recognizing
all paths through a program to a point of interest, interprets the
regular expression within the given structure, and checks whether the
resulting property entails the property of interest.

In this paper, we extend the algebraic approach to reason about
\textit{infinite} program paths, and thereby provide a conceptual and
algorithmic basis for compositional analysis of liveness properties
such as termination.  Conceptually, our method proves that a program
terminates by computing a transition formula for each loop that
over-approximates the behavior of its body, and then proving that the
corresponding relation admits no infinite sequences.  (Our approach
is not unique in this regard (see Section~\ref{sec:related})---we
provide a unifying foundation for such analyses).

A drawback of using summaries to prove termination is that the loop
body summary \textit{over-approximates} its behavior, and so the
summary may not terminate even if the original loop does.  The
advantage is that we can reason about the summary effectively, whereas
any non-trivial question about behavior of the original loop is
undecidable.  This is the key idea that enables the design of
\textit{monotone} termination analyses.

A particular challenge of compositional termination analysis is that
termination arguments must be synthesized independently of the
surrounding context of the loop (that is, without supporting invariants).
We meet this challenge with a set of methods that exploit loop
summarization to generate monotone \textit{conditional} termination
arguments.  These methods synthesize both a termination argument and
an initial condition under which that argument holds, with the latter
acting as a surrogate for a supporting invariant.

\noindent \paragraph{Contributions}
The contributions of this paper are:
\begin{itemize}
\item A framework for designing compositional analyses of infinite
  program paths.  This framework extends Tarjan's method for solving
  path problems \cite{JACM:Tarjan1981,JACM:Tarjan1981b} from finite to
  infinite paths.
\item An efficient algorithm for computing an $\omega$-regular
  expression that recognizes the infinite paths through a control flow
  graph, which forms the algorithmic foundation of our program
  analysis framework.
\item The first termination analysis that is compositional, monotone, and applies to a general program model.  We present a set of combinators for constructing a family of such (conditional)
  termination analyses based on our framework. In particular, we introduce \textit{phase analysis}, which improves the precision of a given conditional
  termination analysis by partitioning the space of transitions in a
  loop into phases.


\end{itemize}

\section{Overview} \label{sec:overview}

In Section~\ref{sec:ata}, we define an algebraic framework for
analyzing liveness properties of programs.  An analysis proceeds in
two steps: (1) compute an $\omega$-regular expression that recognizes
the paths through a program, and (2) interpret that $\omega$-regular
expression with an algebraic structure corresponding to a program
analysis of interest.

We illustrate this process in Figure~\ref{fig:overview-ex}.  Consider
the example program given by its control flow graph (CFG) in
Figure~\ref{fig:overview-ex-cfg} (concrete
syntax for the CFG is given in
Figure~\ref{fig:overview-ex-ast}).  Note that conditional control flow
is encoded as \textit{assumptions}, which do
not change the program variables but can only be executed if the
assumed condition holds (e.g., if the program is in a state where
\cinline{m} is less than \cinline{step} then it may execute the
assumption \cinline{[m < step]}, otherwise it is blocked).

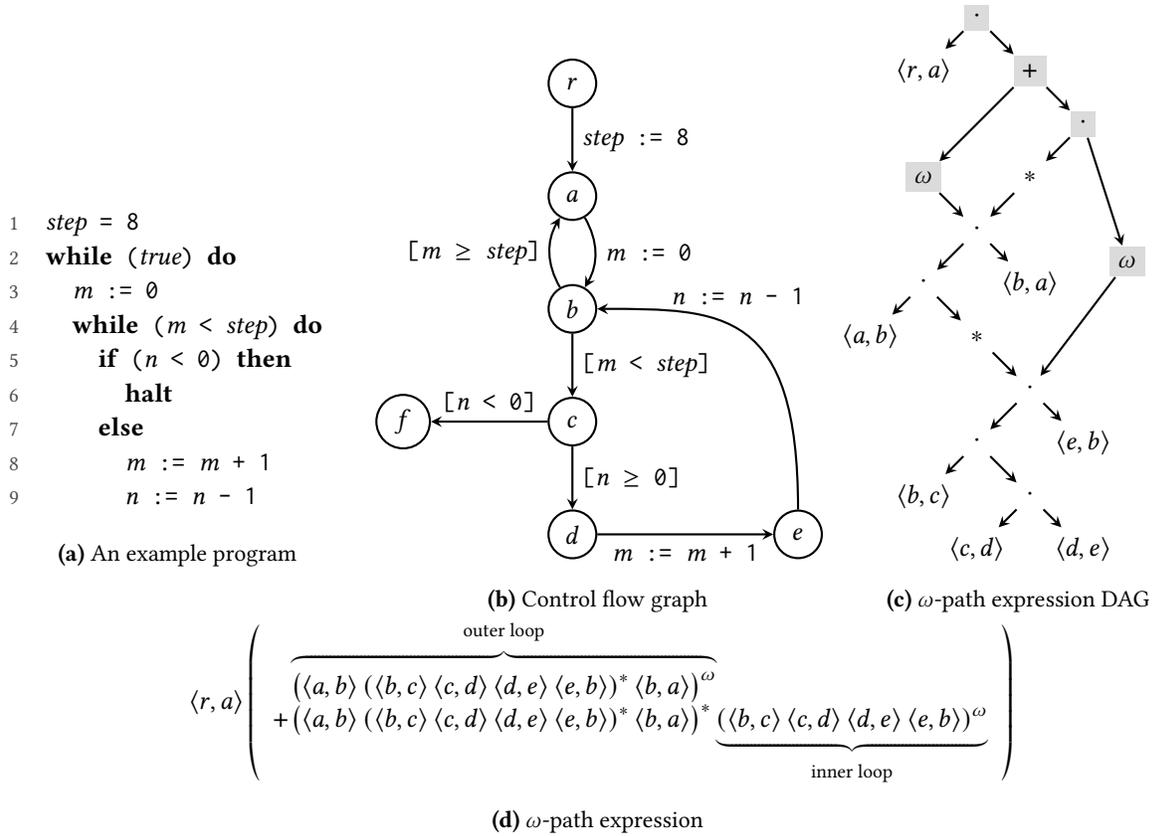
\begin{figure*}
  \begin{subfigure}[b]{5cm}
\begin{lstlisting}[style=base,numbers=left,xleftmargin=2em]
step = 8
while (true) do @\tikzsavept{overview-ex-outer}@
  m := 0
  while (m < step) do @\tikzsavept{overview-ex-inner}@
    if (n < 0) then
      halt
    else
      m := m + 1
      n := n - 1 @\tikzsavept{overview-ex-end}@
\end{lstlisting}
\caption{An example program \label{fig:overview-ex-ast}}
  \end{subfigure}
  \begin{subfigure}[t]{6cm}
\begin{tikzpicture}[thick,>=stealth,
      v/.style={circle,draw,minimum height=18pt},
      node distance=1.5cm]

  \node [v] (r) {$r$};
  \node [v,below of=r] (a) {$a$};
  \node [v,below of=a] (b) {$b$};
  \node [v,below of=b] (c) {$c$};
  \node [v,below of=c] (d) {$d$};
  \node [v,right of=d, node distance = 3cm] (e) {$e$};
  \node [v,left of=c,node distance=2.25cm] (f) {$f$};
  \draw (r) edge[->] node[right]{\cinline{step := 8}} (a);
  \draw (a) edge[->,bend left] node[right]{\cinline{m := 0}} (b);
  \draw (b) edge[->,bend left] node[left]{\cinline{[m >= step]}} (a);
  \draw (b) edge[->] node[right]{\cinline{[m < step]}} (c);
  \draw (c) edge[->] node[above]{\cinline{[n < 0]}} (f);
  \draw (c) edge[->] node[right]{\cinline{[n >= 0]}} (d);
  \draw (d) edge[->] node[below]{\cinline{m := m + 1}} (e);
  \draw [->] (e) .. controls +(90:3cm) and +(0:2cm) .. (b);
  \node [right of=b,xshift=20pt,yshift=5pt] {\cinline{n := n - 1}};
\end{tikzpicture}
\caption{Control flow graph \label{fig:overview-ex-cfg}}
  \end{subfigure}
  \begin{subfigure}[t]{5cm}
\begin{tikzpicture}[thick,>=stealth,omega/.style={fill=gray!20},]
  \node [omega] (abcdebao) {$\omega$};
  \node [below right of=abcdebao] (abcdeba) {$\cdot$};
  \node [above right of=abcdeba] (abcdebas) {$*$};
  \node [above right of=abcdebas,omega] (snd) {$\cdot$};
  \node [above left of=snd,omega] (sum) {$+$};
  \node [above left of=sum,omega] (root) {$\cdot$};
  \node [below left of=root] (ra) {$\tuple{r,a}$};
  \node [below right of=abcdeba] (ba) {$\tuple{b,a}$};
  \node [below left of=abcdeba] (abcdeb) {$\cdot$};
  \node [below left of=abcdeb] (ab) {$\tuple{a,b}$};
  \node [below right of=abcdeb] (bcdebs) {$*$};
  \node [right of=bcdebs,xshift=1cm,yshift=1cm,omega] (bcdebo) {$\omega$};
  \node [below right of=bcdebs] (bcdeb) {$\cdot$};
  \node [below right of=bcdeb] (eb) {$\tuple{e,b}$};
  \node [below left of=bcdeb] (bcde) {$\cdot$};
  \node [below left of=bcde] (bc) {$\tuple{b,c}$};
  \node [below right of=bcde] (cde) {$\cdot$};
  \node [below left of=cde] (cd) {$\tuple{c,d}$};
  \node [below right of=cde] (de) {$\tuple{d,e}$};
  \draw (cde) edge[->] (cd);
  \draw (cde) edge[->] (de);
  \draw (bcde) edge[->] (cde);
  \draw (bcde) edge[->] (bc);
  \draw (bcdeb) edge[->] (bcde);
  \draw (bcdeb) edge[->] (eb);
  \draw (bcdebs) edge[->] (bcdeb);
  \draw (abcdeb) edge[->] (bcdebs);
  \draw (abcdeb) edge[->] (ab);
  \draw (abcdeba) edge[->] (abcdeb);
  \draw (abcdeba) edge[->] (ba);
  \draw (abcdebao) edge[->] (abcdeba);
  \draw (abcdebas) edge[->] (abcdeba);
  \draw (bcdebo) edge[->] (bcdeb);
  \draw (snd) edge[->] (abcdebas);
  \draw (snd) edge[->] (bcdebo);
  \draw (abcdebao) edge[<-] (sum);
  \draw (sum) edge[->] (snd);
  \draw (sum) edge[<-] (root);
  \draw (ra) edge[<-] (root);
\end{tikzpicture}
\caption{$\omega$-path expression DAG \label{fig:overview-ex-dag}}
  \end{subfigure}
  \begin{subfigure}{\textwidth}
\[
\tuple{r,a}\left(\begin{array}{l@{\hspace*{1pt}}l}
  & \overbrace{\left(\tuple{a,b}\left(\tuple{b,c}\tuple{c,d}\tuple{d,e}\tuple{e,b}\right)^*\tuple{b,a}\right)^\omega}^{\text{outer loop}}\\
  + & \left(\tuple{a,b}\left(\tuple{b,c}\tuple{c,d}\tuple{d,e}\tuple{e,b}\right)^*\tuple{b,a}\right)^*\underbrace{\left(\tuple{b,c}\tuple{c,d}\tuple{d,e}\tuple{e,b}\right)^\omega}_{\text{inner loop}}
\end{array}\right)
\]
\caption{$\omega$-path expression \label{fig:overview-ex-pe}}
  \end{subfigure}
\caption{An example program, its control flow graph, and a corresponding $\omega$-path expression \label{fig:overview-ex}}
\end{figure*}

\paragraph{Step 1: Compute an $\omega$-path expression}

Using the algorithm described in Section~\ref{sec:pathexp}, we can
compute an $\omega$-regular expression
that represents all infinite paths in the
CFG that begin at the entry vertex $r$
(Figure~\ref{fig:overview-ex-pe}). 
The  expression can
be represented efficiently as a directed acyclic graph (DAG), where each leaf is labeled by a
control flow edge, each internal node with an operator
(one of: choice ($+$), concatenation ($\cdot$), iteration ($*$), or infinite repetition ($\omega$)---see Section~\ref{sec:flow-graphs}),
and edges are
drawn from operators to operands (Figure~\ref{fig:overview-ex-dag}).
Observe that each node in the DAG
corresponds to either a regular expression (white nodes) or an
$\omega$-regular expression (gray nodes).

\paragraph{Step 2: Interpretation}
The result of a particular analysis is computed by interpreting a
path expression for a program within some abstract domain.  The
domain consists of (1) a \textit{regular algebra}, which is equipped
with choice, concatenation, and iteration operators and which can be
used to interpret regular expressions, and (2) an
\textit{$\omega$-regular algebra}, which is equipped with choice,
concatenation, and $\omega$-iteration operators, and which can be used to
interpret $\omega$-regular expressions.

Our main interest in this paper is in a family of termination
analyses.  In this family, the regular algebra is the algebra of
\textit{transition formulas}, which we denote by $\TF$.  A
\textit{transition formula} is a logical formula over the variables of
the program (in Figure~\ref{fig:overview-ex}: $m,n,\cinline{step}$) along with primed copies
($m',n',\cinline{step}'$) representing the program variables before
and after executing a computation, respectively.  The choice operation
for $\TF$ is disjunction, concatenation is relational composition, and
iteration over-approximates reflexive transitive closure (a
particular iteration operator is defined in Section~\ref{sec:background}).
  The $\omega$-regular algebra is an
algebra of \textit{mortal preconditions}, which we denote by $\MP$; in
fact, we will define several such algebras in this paper, but they
share a common structure.  A mortal precondition is a state formula
(over the program variables ($m,n,\cinline{step}$)) that is satisfied only by \textit{mortal states}, from which the program must terminate.
The choice operation for $\MP$ is
conjunction (a mortal state must be mortal on \textit{all} paths),
concatenation is weakest precondition (a state is mortal only if it can
reach only mortal states), and $\omega$-iteration computes a mortal precondition
for a transition formula (we will define several mortal precondition
operators in Section~\ref{sec:mp-operator-design}).  We
compute a mortal precondition for a program by traversing its $\omega$-path
expression DAG from the bottom up, using $\TF$ to interpret regular
expression operators and $\MP$ to interpret $\omega$-regular
expression operators.

We illustrate a selection of the interpretation steps.  We use
$\tfsem{-}$ and $\mpsem{-}$ to denote the interpretation of a regular and $\omega$-regular expression, respectively.
For the leaves
of the path expression DAG, we may simply encode the meaning of the
corresponding program command into logic; e.g., the transition
formulas for the edges $\tuple{c,d}$ and $\tuple{d,e}$ (corresponding
to the commands \cinline{[n >= 0]} and \cinline{m := m + 1},
resp.) are:
\begin{align*}
  \tfsem{\tuple{c,d}} &= n \geq 0 \land m' = m \land n' = n \land \cinline{step}' = \cinline{step}\\
  \tfsem{\tuple{d,e}} &= m' = m + 1\land n' = n \land \cinline{step}' = \cinline{step}
\end{align*}

Proceeding up the DAG, we compute a transition formula for the regular
expression $\tuple{c,d}\tuple{d,e}$ by taking the relational
composition of $\tfsem{\tuple{c,d}}$ and $\tfsem{\tuple{d,e}}$
\begin{align*}
  \tfsem{\tuple{c,d}\tuple{d,e}} &= \tfsem{\tuple{c,d}} \circ \tfsem{\tuple{d,e}}\\
  &\equiv
  \begin{array}{l@{\hspace*{1pt}}l}
    & n \geq 0\\
    \land & m' = m +1 \land n' = n \land \cinline{step}' = \cinline{step}
  \end{array}
\end{align*}
Similarly, we sequence $\tfsem{\tuple{c,d}\tuple{d,e}}$ with
$\tfsem{\tuple{b,c}}$ on the left and $\tfsem{\tuple{e,b}}$ on the
right to get a summary for the body of the inner loop $\textit{inner}
\defeq \tuple{b,c}\tuple{c,d}\tuple{d,e}\tuple{e,b}$:
\[ \tfsem{\textit{inner}} \equiv
\begin{array}{l@{\hspace*{1pt}}l}
  & m < \cinline{step} \land n \geq 0\\
  \land& m' = m + 1 \land n' = n - 1 \land \cinline{step}' = \cinline{step}
\end{array}
\]
The $\textit{inner}$ node has two parents, corresponding to
$\textit{inner}^*$ and $\textit{inner}^\omega$.  For the first, we
over-approximate the transitive closure of the formula $\tfsem{\textit{inner}}$:

\[\tfsem{\textit{inner}^*} \equiv \exists k.
\tightparen{\begin{array}{l@{\null}l}
  & \tightparen{\begin{array}{l@{\null}l}
    & k = 0\\
    \lor & \tightparen{\begin{array}{l@{\null}l}
      & k \geq 1 \land m < \cinline{step} \land n \geq 0\\
      \land & m' \leq \cinline{step} \land n' \geq -1
    \end{array}}
  \end{array}}\\
  \land & m' = m + k \land n' = n - k \land \cinline{step}' = \cinline{step}
\end{array}}\]
(In the above formula, the existentially quantified variable $k$ represents the
number of times the loop is taken.  The first conjunct encodes that if the loop
is taken at least once, then its guard must hold in the initial state, and the
post-image of its guard must hold in the final state.  The second conjunct
encodes that $m$ increases by 1 at each iteration, $n$ decreases by 1, and
\cinline{step} is constant. See Section~\ref{sec:transition-formulas} for
details on how we compute the transitive closure of any transition formula.)

For the second parent, $\textit{inner}^\omega$, we compute a mortal precondition for the formula
$\tfsem{\textit{inner}}$.  Observing that $(\cinline{step} - m)$ is a
ranking function for this loop (i.e., the difference between $\cinline{step}$ and $\cinline{m}$ is
non-negative and decreasing), we may simply take
$\mpsem{\textit{inner}^\omega} = \true$:
the inner loop terminates starting from any state.

Now consider the $\omega$-node corresponding to
the outer loop, $\textit{outer} = \tuple{a,b}\textit{inner}^*\tuple{b,a}$.  This loop illustrates a trade-off of
compositionality.  On one hand, compositionality makes proving
termination easier: by the time that we reach the $\omega$-node, we
have already built a transition formula that summarizes the body of
the outer loop.  Despite the fact that the body contains an inner
loop, we can use a theorem prover to answer questions about its
behavior (conservatively, since the summary is an over-approximation).  On
the other hand, compositionality makes termination proving more
difficult: a compositional analysis cannot prove that $n$
\textit{decreases} at each iteration, since it does not have access to
the surrounding context of the loop that initializes \cinline{step} to
\cinline{8}.  In Section~\ref{sec:mp-phase-analysis} we provide a
method that (for this particular loop) effectively performs a case split on
whether $n$ increases, decreases, or remains constant, and generates a mortal
precondition that is sufficient for all three cases: $\mpsem{\textit{outer}^\omega} = \cinline{step} > 0$.  Thus,
we have a \textit{conditional} termination argument: the
outer loop terminates as long as it begins in a state where \cinline{step}
is positive.

Continuing up the DAG, we combine the mortal preconditions of the inner and outer loops to get
\[ \mpsem{\textit{outer}^\omega + \textit{outer}^*\textit{inner}^\omega} \equiv \cinline{step} > 0\ . \]
Finally, we compute a mortal precondition for the root of the DAG (and
thus the whole program) by taking the weakest precondition of
$\cinline{step} > 0$ under the transition formula $\tfsem{\tuple{r,a}}
= \cinline{step}' = 8 \land m' = m \land n' = n$, yielding the
  formula $\true$.  Thus, by propagating the conditional termination argument for the outer loop backwards through its context, the analysis discharges the assumption of the conditional termination argument,
  and  verifies that the program always terminates.

\section{Background} \label{sec:background}

\subsection{Flow Graphs and Path Expressions} \label{sec:flow-graphs}

A \textbf{control flow graph} $G = \tuple{V,E,r}$ consists of a set of
vertices $V$, a set of directed edges $E \subseteq V \times V$, and a
root vertex $r \in V$ with no incoming edges.  A
\textbf{path} in $G$ is a finite sequence $e_1e_2\dots e_n \in E^*$
such that for each $i$, the destination of $e_i$ matches the source of
$e_{i+1}$; an \textbf{$\omega$-path} is an infinite sequence
$e_1e_2\dots \in E^\omega$ such that any finite prefix is a path.  For
any vertices $u,v \in V$, we use $\paths{G}{u}{v}$ to denote the (regular) set
of paths in $G$ from $u$ to $v$, and we use $\opaths{G}{u}$ to denote the (regular)
set of $\omega$-paths in $G$ starting from $u$.

We say that a vertex $u$ \textbf{dominates} a vertex $v$ if every path
from $r$ to $v$ includes $u$.  Every vertex dominates itself; we say
$u$ \textbf{strictly dominates} $v$ if $u$ dominates $v$ and $u \neq
v$.  We say that $u$ is the \textbf{immediate dominator} of $v$ if it
is the unique vertex that strictly dominates $v$ and is dominated by
every vertex that strictly dominates $v$.  The immediate dominance
relation forms a tree structure with $r$ as the root; we use
$\textit{children}(v)$ to denote the set of vertices whose immediate
dominator is $v$.  We say that $G$ is \textbf{reducible} if every
cycle contains an edge $\tuple{u,v}$ such that $v$ dominates $u$.

Taking the alphabet $\Sigma$ to be the set of edges in a given control flow
graph $G$, a regular set of (finite) paths in $G$ can be represented by a
regular expression, and a regular set of $\omega$-paths in $G$ can be recognized
by an $\omega$-regular expression; we call such regular expressions
($\omega$-)\textbf{path expressions}. The syntax of regular ($\RegExp(\Sigma)$)
and $\omega$-regular ($\oRegExp(\Sigma)$) expressions over an alphabet $\Sigma$
is given by (see e.g.  \cite{Book:POMC}, Ch. 4):
\begin{align*}
  a \in \Sigma\\
  e \in \RegExp(\Sigma) & ::= a \mid 0 \mid 1 \mid e_1 + e_2 \mid e_1e_2 \mid e^*\\
  f \in \oRegExp(\Sigma) & ::= e^\omega \mid ef \mid f_1 + f_2
\end{align*}
where 0 recognizes the empty language, 1 recognizes the empty word, $+$
corresponds to union, juxtaposition (or $\cdot$) to concatenation, $*$
to unbounded repetition, and $\omega$ to infinite repetition.  

\subsection{Logic and Geometry} \label{sec:logic}

The syntax of linear integer arithmetic (LIA) is given
as follows:
\begin{align*}
  x \in \textsf{Variable}\\
  n \in \mathbb{Z}\\
  t \in \textsf{Term} & ::= x  \mid n \mid n \cdot t \mid t_1 + t_2\\
  F \in \textsf{Formula} & ::= t_1 \leq t_2 \mid  t_1 = t_2  \mid F_1 \land F_2 \mid F_1 \lor F_2 \mid \lnot F \\ &\quad \, \mid \exists x. F \mid \forall x. F
\end{align*}

Let $X \subseteq \textsf{Variable}$ be a set of variables.  A \textbf{valuation}
over $X$ is a map $v : X \rightarrow \mathbb{Z}$.  If $F$ is a formula
whose free variables range over $X$ and $v$ is a valuation over $X$,
then we say that $v$ satisfies $F$ (written $v \models F$) if the
formula $F$ is true when interpreted over the standard model of the
integers, using $v$ to interpret the free variables.  We write $F
\models G$ if every valuation that satisfies $F$ also satisfies $G$.

For a formula $F$, we use $F[x \mapsto t]$ to denote the formula
obtained by substituting each free occurrence of the variable $x$ with the
term $t$.  We use the same notation to represent parallel substitution
of multiple variables by multiple terms; e.g., if $X$ is a set of
variables and $X' = \{ x' : x \in X \}$ is a set of ``primed''
versions of those variables, then $F[X \mapsto X']$ denotes the result
of replacing each variable in $x$ with its corresponding $x'$.
Substitution binds more tightly than logical connectives, so e.g., in
the formula $F \land G[x\mapsto y]$, $x$ is replaced with $y$ within
$G$, but not within $F$.

Let $F$ be an LIA formula with free variables $\vec{x} = x_1,\dots,x_n$.  The \textbf{convex hull} of $F$,
denoted $\conv(F)$, is the strongest (unique up to equivalence)
formula of the form $A\vec{x} \geq \vec{b}$ that is entailed by $F$, where $A$ is an integer matrix and $\vec{b}$ is an integer vector.
\citet{FMCAD:FK2015} give an algorithm for computing $\conv(F)$.



\subsection{Transition Formulas}
\label{sec:transition-formulas}
Fix a finite set $\Var$ of variables, and let $\Var' = \{ x'
: x \in \Var \}$ denote a set of ``primed copies'', presumed to be
disjoint from $\Var$.  A \textbf{state formula} is an LIA
formula whose free variables range over $\Var$.  A \textbf{transition
  formula} is an LIA formula whose free variables range
over $\Var \cup \Var'$.  We use $\SF{}$ and $\TF{}$ to denote sets
of state and transition formulas, respectively.  Define a
\textit{state} to be a valuation over $\Var$ (the set of which we denote \textsf{State}) and a \textit{transition}
to be a valuation over $\Var \cup \Var'$.  
Any pair of states $s,s'$
defines a transition $[s,s']$ which interprets each $x \in \Var$ as
$s(x)$ and each $x' \in \Var'$ as $s'(x)$.  A transition formula $F$
defines a relation $\rightarrow_F$ on states, with $s
\rightarrow_F s' \iff [s,s'] \models F$.

Define the \textit{relational composition} of two transition formulas
to be the formula
\[ F_1 \circ F_2 \defeq \exists \Var''. F_1[\Var' \mapsto \Var''] \land F_2[\Var \mapsto \Var'']\ . \]
For any $k \in \mathbb{N}$, we use $F^k$ to denote the $k$-fold
relational composition of $F$ with itself.  For a transition formula $F$ and a state formula $S$, define the weakest precondition of $S$ under $F$ to be the formula
\[ \wp(F,S) \defeq \forall \textsf{Var}'. F \Rightarrow S[\textsf{Var} \mapsto \textsf{Var}']\ . \]

We suppose the existence of an operation $(-)^\star$ that over-approximates the
reflexive transitive closure of a transition formula (i.e., for any transition
formula $F$, we have $\rightarrow_F^* \subseteq
\rightarrow_{F^\star}$).  Several such operators exist
\cite{FMCAD:FK2015,PACMPL:KCBR2018,POPL:KBCR2019,POPL:CBKR2019,CAV:SK2019};
here we will describe one such method, based on techniques from
\cite{ENTCS:ACI2010,FMCAD:FK2015}.

Let $\vec{x}'$ and $\vec{x}$ be vectors containing the variables $\Var'$
and $\Var$, respectively; let $n = |\Var|$ be the dimension of these
vectors.  In general, the transitive closure of a transition formula
cannot be expressed in first-order logic.  Two special cases where the transitive closure can be expressed are:
\begin{enumerate}
\item If $F$ takes the form $\textit{pre} \land \textit{post}$, where
  the free variables of $\textit{pre}$ range over $\Var$ and the free
  variables of $\textit{post}$ range over $\Var'$, then $F$ is already
  transitively closed, so we need only to take its reflexive closure:
  $(\textit{pre} \land \textit{post}) \lor \left(\bigwedge_{x \in
    \Var} x' = x\right)$
\item If $F$ takes the form $A\vec{x}' \geq A\vec{x} + \vec{b}$, then for
  any $k \in \mathbb{N}$, we have that $F^k$ is equivalent to
  $A\vec{x}' \geq A\vec{x} + k\vec{b}$, and so the formula $\exists k. k
  \geq 0 \land A\vec{x}' \geq A\vec{x} + k\vec{b}$ represents the reflexive transitive
  closure of $F$.
\end{enumerate}

Let $F$ be a transition formula.  We cannot expect $F$ to take one of
the above forms, but we \textit{can} always over-approximate $F$ by a
formula that does:
\begin{enumerate}
\item Let $\textit{Pre}(F) \defeq \exists \Var'.F$ and let
  $\textit{Post}(F) \defeq \exists \Var.F$.  We have that $F \models
  \textit{Pre}(F) \land \textit{Post}(F)$, and $\textit{Pre}(F) \land
  \textit{Post}(F)$ takes form (1) above.
\item For each variable $x$, let $\delta_x$ denote a fresh variable
  which we use to represent the difference between $x'$ and $x$; we
  use $\vec{\delta}$ to denote a vector containing the $\delta_x$
  variables.  The convex hull
  \[\conv\left(\exists
  \Var,\Var'. F \land \bigwedge_{x \in \Var} \delta_x = x' - x\right)
  \]
  takes the form $A\vec{\delta} \geq \vec{b}$.  Then we have $F
  \models A\vec{x}' \geq A\vec{x} + \vec{b}$, and $A\vec{x}' \geq
  A\vec{x} + \vec{b}$ takes form (2) above.
\end{enumerate}
Combining (1) and (2), we define an operation $\exp$ by
\begin{align*}
    \exp(F,k) \defeq &\left(\left( \bigwedge_{x \in \Var} x' = x\right) \lor (\textit{Pre}(F) \land \textit{Post}(F))\right) \\ 
    &\land A\vec{x}' \geq A\vec{x} + k\vec{b}
\end{align*}
and observe that for any $k \in \mathbb{N}$, we have that $F^k \models
\exp(F,k)$.  Finally, we over-approximate transitive closure by
existentially quantifying over the number of loop iterations:
\[ F^\star \defeq \exists k. k \geq 0 \land \exp(F,k)\ . \]
\begin{lemma} \label{lem:exp-monotone}
  The $(-)^\star$ and $\exp$ operators are monotone in the
  sense that if $F \models G$, then $F^\star \models G^\star$ and
  $\exp(F,k) \models \exp(G,k)$ (where $k$ is a
  variable symbol).
\end{lemma}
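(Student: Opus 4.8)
The plan is to reduce both claims to the monotonicity of the individual operations appearing in the definitions of $\exp$ and $(-)^\star$. Since $F^\star = \exists k.\, k \geq 0 \land \exp(F,k)$ and the operations $\exists k.\,(\cdot)$ and $(k \geq 0) \land (\cdot)$ are monotone for entailment, it suffices to prove the $\exp$ clause; the $(-)^\star$ clause then follows. I would write $\exp(F,k)$ as a conjunction $R_F \land L_F$, where $R_F \defeq \left(\bigwedge_{x \in \Var} x' = x\right) \lor (\textit{Pre}(F) \land \textit{Post}(F))$ and $L_F \defeq A_F\vec{x}' \geq A_F\vec{x} + k\vec{b}_F$, with $A_F\vec{\delta} \geq \vec{b}_F$ denoting $\conv$ of the $\delta$-projection $\psi_F \defeq \exists \Var,\Var'.\, F \land \bigwedge_{x \in \Var} \delta_x = x' - x$ of $F$ (and similarly for $G$), and handle the two conjuncts separately.

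For $R_F$: $\textit{Pre}$ and $\textit{Post}$ are existential projections of $F$, hence monotone, and conjunction and disjunction are monotone, so $F \models G$ yields $R_F \models R_G$. For $L_F$, the one auxiliary fact worth isolating is monotonicity of $\conv$: if $\phi \models \psi$ then $\psi \models \conv(\psi)$ by definition, hence $\phi \models \conv(\psi)$; since $\conv(\psi)$ has the form $A\vec{x} \geq \vec{b}$ and $\conv(\phi)$ is \emph{the strongest} formula of that form entailed by $\phi$, we get $\conv(\phi) \models \conv(\psi)$. Applying this to $\psi_F$ and $\psi_G$ gives $(A_F\vec{\delta} \geq \vec{b}_F) \models (A_G\vec{\delta} \geq \vec{b}_G)$; moreover, since $\conv(\psi_F)$ is by its defining property the smallest polyhedron definable by integer inequalities that contains the models of $\psi_F$, and those models lie inside the polyhedron $P_G$ defined by $\conv(\psi_G)$, this upgrades to a genuine containment $P_F \subseteq P_G$ of the real polyhedra $P_F, P_G$ (so the integer/real distinction does no harm here).

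The remaining step — which I expect to be the main obstacle — is to transfer $P_F \subseteq P_G$ to the entailment $L_F \models L_G$. The key observation is that for real $t > 0$ the solution set of $A_F\vec{\delta} \geq t\vec{b}_F$ is the dilation $tP_F$, while for $t = 0$ it is the recession cone of $P_F$; since $P_F \subseteq P_G$ and both dilation and passage to the recession cone are monotone on nonempty polyhedra, $\{\vec{\delta} : A_F\vec{\delta} \geq t\vec{b}_F\} \subseteq \{\vec{\delta} : A_G\vec{\delta} \geq t\vec{b}_G\}$ for every $t \geq 0$. Instantiating $\vec{\delta} := \vec{x}' - \vec{x}$ and $t := k$ gives $L_F \models L_G$ whenever $k \geq 0$; since $\exp$ is only ever used under $k \geq 0$ (it occurs solely inside $\exists k.\, k \geq 0 \land \cdot$), this suffices to conclude the $(-)^\star$ clause. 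The one loose end is the degenerate case $P_F = \emptyset$ — equivalently, $F$ unsatisfiable — where the $t = 0$ instance above can fail; but then $F^\star$ reduces to the identity relation $\bigwedge_{x \in \Var} x' = x$, which entails $G^\star$ unconditionally. Everything else is routine unfolding of the definitions.
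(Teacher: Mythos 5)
The paper states Lemma~3.1 without proof, so there is no ``official'' argument to compare against; judged on its own, your proof is essentially sound, and its one deliberate deviation from the literal statement is the right call. The restriction to $k \geq 0$ is not just convenient but necessary: for $F \defeq (x' = x + 1)$ and $G \defeq (x' = x+1) \lor (x' = x+2)$ one gets $\exp(F,k) \equiv (x' = x + k)$ and $\exp(G,k) \equiv (x + k \leq x' \leq x + 2k)$, and at $k = -1$ the former is satisfiable while the latter is not, so the unrestricted entailment $\exp(F,k) \models \exp(G,k)$ fails. As you observe, every use of $\exp$ in the paper (inside $F^\star$ and inside $\mp_{\exp}$) sits under a $k \geq 0$ conjunct on the same $k$, so the restricted entailment is exactly what the downstream results consume, and your patch for the degenerate case ($F$ unsatisfiable, where the recession-cone step needs $P_F \neq \emptyset$) correctly recovers $F^\star \models G^\star$ via the identity relation.

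The step you should make more honest is the upgrade from the paper's definition of $\conv$ to the real containment $P_F \subseteq P_G$. The paper defines $\conv(\psi)$ as the strongest formula $A\vec{x} \geq \vec{b}$ entailed by $\psi$, with entailment over \emph{integer} valuations; that pins the result down only up to integer equivalence, and integer-equivalent systems can carve out different real polyhedra (e.g.\ $2\delta \geq 1$ versus $\delta \geq 1$). With a non-canonical representative the dilation step---and indeed the $k \geq 0$ claim itself---can fail: if $F$ and $G$ are logically equivalent but $\conv$ returned $2\delta \geq 1$ for $\psi_F$ and $\delta \geq 1$ for $\psi_G$, then $2(x'-x) \geq k$ does not entail $x'-x \geq k$ (take $k = 3$, $x' - x = 2$). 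So your phrase ``by its defining property the smallest polyhedron containing the models'' is not a restatement of the paper's definition but an additional, and genuinely needed, assumption that $\conv$ returns the actual rational convex hull of the model set---which is what the cited Farzan--Kincaid algorithm computes; state that reliance explicitly. Incidentally, once $P_F \subseteq P_G$ is in hand you can bypass dilations and recession cones: for $P_F \neq \emptyset$, Farkas' lemma gives a nonnegative matrix $\Lambda$ with $\Lambda A_F = A_G$ and $\Lambda \vec{b}_F \geq \vec{b}_G$, and left-multiplying $A_F\vec{x}' \geq A_F\vec{x} + k\vec{b}_F$ by $\Lambda$ yields $A_G\vec{x}' \geq A_G\vec{x} + k\vec{b}_G$ for every $k \geq 0$ in one line, which also makes the role of the sign of $k$ transparent.
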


\subsection{Transition Systems}
A \textbf{transition system} $T$ is a pair $T = \tuple{S_T, R_T}$
where $S_T$ is a set of states and $R_T \subseteq S_T \times S_T$ is a
transition relation.  We write $\tr{T}{s}{s'}$ to denote that the pair
$\tuple{s,s'}$ belongs to $R_T$.  We say that a state $s \in S_T$ is
\textbf{mortal} if there exists no infinite sequence $s \rightarrow_T s_1 \rightarrow_T
s_2 \rightarrow_T s_3 \dots$.  A \textbf{mortal precondition} for $T$ is a state 
formula such that any state that satisfies the formula is mortal.

Each transition formula $F$ defines a transition system, where the
state space is $\textsf{State}$, and where the transition relation is
$\rightarrow_F$.  Define a \textbf{mortal precondition operator} to be
a function $\mp : \TF \rightarrow \SF$, which given a transition formula
$F$, computes a state formula $\mp(F)$ that is a mortal precondition for 
$F$.  We say that $\mp$ is \textbf{monotone} if
for any transition formulas $F_1,F_2$ with $F_1 \models F_2$, we have
$\mp(F_2) \models \mp(F_1)$ (Note that this definition is
  \textit{antitone} with respect to the entailment ordering,
  but since \textit{weaker} mortal preconditions are more desirable it
  is natural to order mortal preconditions by reverse entailment.)

\begin{example} \label{ex:LLRF}
  \citet{PLDI:GMR2015} give a complete method for synthesizing linear
  lexicographic ranking functions (LLRFs) for transition formulas.  We
  may define a monotone mortal precondition operator
  $\mp_{\textit{LLRF}}$ as follows:
  \[ \mp_{\textit{LLRF}}(F) \defeq \begin{cases}
      \true & \text{if there is an LLRF for } F\\
      \neg \textit{Pre}(F) & \text{otherwise}
  \end{cases}\]
  The fact that $\mp_{\textit{LLRF}}$ is monotone follows from
  the fact that if $F_1 \models F_2$ then 
  $\textit{Pre}(F_1) \models \textit{Pre}(F_2)$ and 
  any LLRF for $F_2$ is also an LLRF for $F_1$, 
  and the completeness of \citet{PLDI:GMR2015}'s
  LLRF synthesis procedure.
\end{example}

Within this paper, a program is represented as a \textbf{labeled
  control flow graph} $P = \tuple{G,\elbl}$, where $G = \tuple{V,E,r}$
is a control flow graph, and $\elbl : E \rightarrow \TF{}$ is a
function that labels each edge with a transition formula.  $P$ defines
a transition system $\textit{TS}(P)$ where the state space is $V \times
\textsf{State}$, and where $\tr{P}{\tuple{v_1,s_1}}{\tuple{v_2,s_2}}$
iff $\tuple{v_1,v_2} \in E$ and $[s_1,s_2] \models \elbl(v_1,v_2)$.

\section{An Efficient \texorpdfstring{$\omega$}{w}-Path Expression Algorithm} \label{sec:pathexp}

This section describes an algorithm for computing an
$\omega$-regular expression that recognizes all infinite paths in a
graph that start at a designated vertex.  The algorithm
is based on Tarjan's path expression algorithm, which computes path
expressions that recognize \textit{finite} paths that start at a
designated vertex \cite{JACM:Tarjan1981}.  Our algorithm operates in
$O(|E|\alpha(|E|) + t)$ time, where $\alpha$ is the inverse Ackermann
function and $t$ is technical parameter that is $O(|V|)$ for reducible
flow graphs and is at most $O(|V|^3)$, matching the complexity of
Tarjan's algorithm.

It is technically convenient to formulate our algorithms on
\textit{path graphs} rather than control flow graphs.  A \textbf{path
  graph} for a flow graph $G = \tuple{V,E,r}$ is a graph $H =
\tuple{U,W}$ where $U \subseteq V$ and $W \subseteq U \times
\RegExp(E) \times U$ is a set of directed edges labeled by regular expressions over $E$, and such that for every $\tuple{u, e, v} \in W$, $e$
recognizes a subset of $\paths{G}{u}{v}$.  We say that $H$
\textit{represents} a path $p$ from $u$ to $v$ (in $G$, with $u,v \in
U$) if there is a path \[(w_1,e_1,w_2)(w_2,e_2,w_3) \dots
(w_{n},e_n,w_{n+1})\] in $H$ with $w_1 = u$ and $w_{n+1} = v$ and $p$
is recognized by the regular expression $e_1e_2 \dots e_n$.
Similarly, $H$ represents an $\omega$-path $p$ if there is a
decomposition $p = p_1p_2p_3 \dots$ and an $\omega$-path
$(w_1,e_1,w_2)(w_2,e_2,w_3) \dots$ in $H$ with $p_i$ recognized by
$e_i$ for all $i$.  We use $\pathrep{H}{u}{v}$ to denote the set of
paths from $u$ to $v$ that are represented by $H$, and
$\opathrep{H}{v}$ to denote the set of $\omega$-paths starting at $v$
that are represented by $H$.  We say that $H$ is \textbf{complete}
for a set of edges $E' \subseteq E$ if
\begin{enumerate}
\item For each $u,v \in U$, $\pathrep{H}{u}{v}$ is the set of paths from $u$
  to $v$ in $G$ consisting only of edges from $E'$.
\item For each $v \in U$, $\opathrep{H}{v}$ is the set of
  $\omega$-paths from $v$ in $G$ consisting only of edges from $E'$,
  and which visit some vertex of $U$ infinitely often.
\end{enumerate}

\subsection{A Na\"{i}ve Algorithm}
Algorithm~\ref{alg:solve-dense} is a na\"{i}ve algorithm 
for computing path expressions, which is used as a sub-procedure in
the main algorithm.  It is a variation of the classic state
elimination algorithm for converting finite automata to regular
expressions.  The input to Algorithm~\ref{alg:solve-dense} is a path
graph $H = \tuple{U,W}$ (for some flow graph $G$) and a root vertex
$r$ (not necessarily the root of $G$); its output is a pair consisting
of an $\omega$-path expression that recognizes $\opathrep{H}{r}$ and a
function that maps each vertex $v \in U$ to a path expression that
recognizes $\pathrep{H}{r}{v}$.  The idea is to
successively eliminate the outgoing edges of every vertex in the graph
except the root, while preserving the set of paths (and
$\omega$-paths) emanating from vertices whose outgoing edges have not
yet been removed.  The algorithm operates in $O(|U|^3)$ time.

\begin{algorithm}
  \caption{Na\"{i}ve path expression algorithm \label{alg:solve-dense}}
  \Fn{$\textsf{solve-dense}(H,r)$}{
  \Input{Path graph $H = \tuple{U,W}$, vertex $r \in U$ with no incoming edges}
  \Output{Pair $\tuple{\textit{pe}^\omega,\textit{pe}}$ where
 $\textit{pe}^\omega \in \oRegExp(E)$ recognizes $\opathrep{H}{r}$ and
 $\textit{pe} : U \rightarrow \RegExp(E)$ maps each $v \in U$ to a path expression that recognizes $\pathrep{H}{r}{v}$.}
  \tcc{$\textit{pe}(u,v)$ recognizes paths from $u$ to $v$ represented by $H$}
  $\textit{pe} \gets \lambda \tuple{u,v}. 0$\;
  \ForEach{$\tuple{u,e,v} \in W$}{
    $\textit{pe}(u,v) \gets \textit{pe}(u,v) + e$\;
  }
  \tcc{Suppose $V$ is ordered as $V = \{r=v_0,v_1,\dots,v_n\}$}
  \For{$i$ = $n$ downto 1}{
    \For{$j$ = $i-1$ downto $0$}{
      $e_{ji} \gets \textit{pe}(v_j,v_i) \cdot \textit{pe}(v_i,v_i)^*$\;
      \For{$k$ = $n$ downto $1$, $k \neq i$}{
        $\textit{pe}(v_j,v_k) \gets \textit{pe}(v_j,v_k) + e_{ji} \cdot \textit{pe}(v_i,v_k)$
      }
    }
  }
  \Return{$\tuple{\begin{array}{l}\sum_{i=1}^n \textit{pe}(r,v_i) \cdot \textit{pe}(v_i,v_i)^\omega,\\ \lambda v. \textit{pe}(r,v) \cdot \textit{pe}(v,v)^*\end{array}}$}
  }
\end{algorithm}

\subsection{\texorpdfstring{$\omega$}{w}-Path Expressions in Nearly Linear Time}

\begin{figure*}
  \begin{subfigure}[t]{5cm}
\begin{tikzpicture}[thick,>=stealth,
      v/.style={circle,draw,minimum height=18pt},
      node distance=1.5cm]

  \node [v] (r) {$1$};
  \node [v,below of=r,xshift=-2cm] (a) {$2$};
  \node [v,below of=r] (b) {$3$};
  \node [v,below of=r,xshift=2cm] (c) {$4$};
  \node [v,below of=b] (d) {$5$};

  \draw (r) edge[->] node[above left]{$a$} (a);
  \draw (r) edge[->] node[above right]{$b$} (c);
  \draw (a) edge[->] node[above]{$d$} (b);
  \draw (b) edge[->] node[left]{$f$} (d);
  \draw (c) edge[->] node[above]{$e$} (b);
  \draw (d) edge[->] node[below right]{$g$} (c);
  \draw (a) edge[->,looseness=4,out=270,in=200] node [below right]{$c$} (a);

\end{tikzpicture}
\caption{A control flow graph \label{fig:pe-cfg}}
  \end{subfigure}
  \begin{subfigure}[t]{3cm}
  \centering
\begin{tikzpicture}[thick,>=stealth,
      v/.style={},
      node distance=1.5cm]
        
  \node [v] (r) {$1$};
  \node [v,below of=r,xshift=-1cm] (a) {$2$};
  \node [v,below of=r,xshift=0cm] (b) {$3$};
  \node [v,below of=r,xshift=1cm] (c) {$4$};
  \node [v,below of=b] (d) {$5$};

  \draw (r) edge[->] (a);
  \draw (r) edge[->] (b);
  \draw (r) edge[->] (c);
  \draw (b) edge[->] (d);
\end{tikzpicture}
\caption{Dominator tree \label{fig:pe-dt}}
  \end{subfigure}
\begin{subfigure}[t]{3cm}
  \centering
\begin{tikzpicture}[thick,>=stealth,
      v/.style={},
      node distance=1.5cm]
  \node [v] (r) {$1$};
  \node [v,below of=r,xshift=-1cm] (a) {$2$};
  \node [v,below of=r,xshift=0cm] (b) {$3$};
  \node [v,below of=r,xshift=1cm] (c) {$4$};
  \node [v,below of=b,xshift=-1cm] (d) {$5$};
  \node [right of=d,xshift=2pt,fill=gray!5,inner sep=0] (l0) {\footnotesize$\begin{array}{r@{\null}l}
      E_0 &\defeq ac^*d+be\\
      E_1 &\defeq E_0\cdot(fge)^*\\ 
      E_2 &\defeq b+E_1\cdot fg
  \end{array}$};
  \draw (r) edge[->,dashed] node[left]{$ac^*$} (a);
  \draw (r) edge[->,dotted] node[fill=white] {$E_1$} (b);
  \draw (r) edge[->,dotted, right] node{$E_2$} (c);
  \draw (b) edge[->] node[left]{$f$} (d);

\end{tikzpicture}
\caption{Weighted forest \label{fig:pe-forest}}
\end{subfigure}
  \begin{subfigure}[t]{2.5cm}
  \centering
\begin{tikzpicture}[thick,>=stealth,
      v/.style={},
      node distance=1.5cm]
        
  \node [v] (a) {$2$};
  \node [v,below of=a] (b) {$3$};
  \node [v,below of=b] (c) {$4$};

  \draw (a) edge[->] (b);
  \draw (b) edge[->,bend left] (c);
  \draw (c) edge[->,bend left] (b);
  \draw (a) edge[->,looseness=4,out=270,in=200] (a);
\end{tikzpicture}
\caption{$\textit{SiblingGraph}(1)$ \label{fig:pe-sibling-graph}}
  \end{subfigure}
\begin{minipage}[b]{3.75cm}
\begin{subfigure}[t]{3.75cm}
  \centering
\begin{tikzpicture}[thick,>=stealth,
      v/.style={},
      node distance=1.5cm]
        
  \node [v] (r) {$1$};
  \node [v,right of=r] (a) {$2$};
  \draw (r) edge[->] node[above]{$a$} (a);
  \draw (a) edge[->,looseness=4,out=290,in=350] node [below]{$c$} (a);
\end{tikzpicture}
\caption{$\textit{ComponentGraph}(\{2\})$ \label{fig:pe-component-graph2}}
\end{subfigure}
\begin{subfigure}[t]{3.75cm}
  \centering
\begin{tikzpicture}[thick,>=stealth,
      v/.style={},
      node distance=1.5cm]
        
  \node [v] (r) {$1$};
  \node [v,below left of=r] (b) {$3$};
  \node [v,below right of=r] (c) {$4$};

  \draw (r) edge[->] node [above left]{\small$ac^*d$} (b);
  \draw (r) edge[->] node [above right]{$b$}(c);
  \draw (b) edge[<-,bend left=5] node [above]{$e$} (c);
  \draw (c) edge[<-,bend left] node [below]{$fg$} (b);
\end{tikzpicture}
\caption{$\textit{ComponentGraph}(\{3,4\})$ \label{fig:pe-component-graph34}}
\end{subfigure}
\end{minipage}
\caption{Operation of \textsf{solve-sparse} on an example control flow graph.}
\label{fig:pe}
\end{figure*}
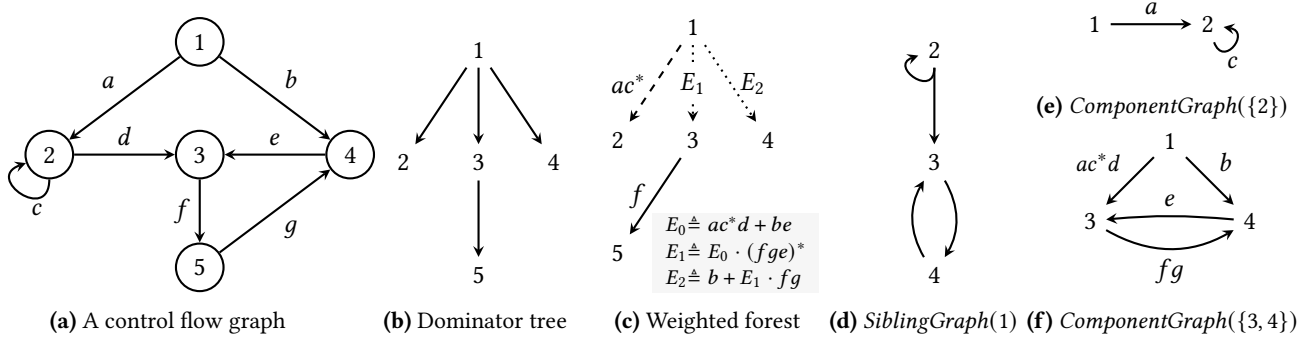

Algorithm~\ref{alg:solve-sparse} is an efficient $\omega$-path expression algorithm
that exploits sparsity of control
flow graphs.  Following Tarjan \cite{JACM:Tarjan1981}, the algorithm uses the dominator tree of
the graph to break it into single-entry components, and uses
a compressed weighted forest data structure to combine paths from different
components.

A \textbf{compressed weighted forest} is a data structure that
represents a forest of vertices with edges weighted by regular
expressions.  The data structure supports the following operations:
\begin{itemize}
\item $\flink(u,e,v)$: set $v$ to be the parent of $u$ by adding an edge from $v$ to $u$ labeled $e$ ($u$ must be a root)
\item $\ffind(v)$: return the (unique) vertex $u$ such that $u \rightarrow^* v$ and $u$ is a root
\item $\feval(v)$: return the regular expression $e_1\dotsi e_n$, where $u_1 \xrightarrow{e_1}  u_2 \xrightarrow{e_2} \dotsi \xrightarrow{e_n} v$ is the path from a root to $v$.
\end{itemize}
This data structure can be implemented so that each operation takes
$O(\alpha(n))$ amortized time, where $n$ is the number of vertices in
the forest \cite{JACM:Tarjan1979}.


The subroutine $\textsf{solve-sparse}(v)$ returns an $\omega$-path
expression that recognizes the set of $\omega$-paths $\opaths{G}{v}
\cap E|_v^\omega$, where \[ E|_v \defeq \{ \tuple{u_1,u_2} \in E : u_2
\text{ is strictly dominated by } v\}\ .\] Moreover, it maintains the
invariant that after completing a call to $\textsf{solve-sparse}(v)$,
we have that for every vertex $u$ that is dominated by $v$, $\ffind(u)
= v$ and $\feval(u)$ is a path expression that recognizes
$\paths{G}{v}{u} \cap E|_v^*$.

The $\textsf{solve-sparse}(v)$ subroutine is structured as a recursive traversal of the dominator tree (see Example~\ref{ex:pe}).  First, it
 calls $\textsf{solve-sparse}(c)$ for each child $c$ of
$v$ in the dominator tree.  Next, it computes a directed graph $G_v =
\textit{SiblingGraph}(v)$ whose vertices are $\textit{children}(v)$,
and such that there is an edge $\tuple{c_1,c_2}$ iff there is a path
from $c_1$ to $c_2$ in $G$ such that each edge (except the last) belongs to $E|_{c_1}$.  The
edges of $\textit{SiblingGraph}(v)$ can be computed efficiently as
\[ \set{\tuple{\ffind(u), c} : c,\ffind(u) \in \textit{children}(v), \tuple{u,c} \in E}\ . \]
The correctness argument for the edge computation is as follows.  If
there is a path from $c_1$ to $c_2$ consisting of $E|_{c_1}$ edges, it
takes the form $\pi \tuple{u,c_2}$ for some $\tuple{u,c_2}$ in $E$,
with $\pi \subseteq E|_{c_1}^*$ a path from $c_1$ to $u$.  Since
$\pi$ ends at $u$ and consists only of $E|_{c_1}$-edges, $u$ is dominated by $c_1$.  Since $\textsf{solve-sparse}(v)$
 calls $\textsf{solve-sparse}(c_1)$ before constructing
$\textit{SiblingGraph}$, we must have $\ffind(u) = c_1$ by the
invariants of $\textsf{solve-sparse}$.

Next, $\textsf{solve-sparse}$ computes the strongly connected components of
$G_v$ and processes them in topological order.  The loop
(lines~\ref{ln:scc-loop-begin}-\ref{ln:scc-loop-end}) maintains the
invariant that when processing a component $C$, for every sibling node
$u$ that is topologically ordered before $C$, we have that $\ffind(u)
= v$ and that $\feval(u)$ recognizes $\paths{G}{v}{u} \cap E|_v^*$.
To process a component $C$, we form a path graph $G_C =
\textit{ComponentGraph}(C)$ whose vertices are $C \cup \set{v}$ and
that is complete for $E|_v$, computing a path expression
$\textit{C-pe}(u)$ for each $u \in C$ that recognizes $\paths{G}{v}{u}
\cap E|_v^*$ using \textsf{solve-dense}, and then linking $u$ to $v$
with the path expression $\textit{C-pe}(u)$ in the compressed weighted
forest.  The edges of $\textit{ComponentGraph}(C)$ are obtained by
collecting all weighted edges of the form
$\tuple{\ffind(w),\feval(w),u}$ such that $u \in C$ and $\tuple{w,u}
\in E$; the fact that $G_C$ is complete for $E|_v$ follows from the
loop invariant, using an argument analogous to the correctness
argument for the $\textit{SiblingGraph}$ construction above.  Finally,
we return an $\omega$-path expression which is the sum of
(line~\ref{ln:component-pe}) the $\omega$-path expressions for each
component and (line~\ref{ln:scc-loop-end}) an $\omega$-path expression
for each child $c$, pre-concatenated with a path expression recognizing
$\paths{G}{v}{c} \cap E|_v^*$.

\begin{example} \label{ex:pe}
Figure~\ref{fig:pe} illustrates the $\textsf{solve-sparse}$ procedure.  Figure~\ref{fig:pe-cfg}
depicts a control flow graph, whose dominator tree appears in
Figure~\ref{fig:pe-dt} (for legibility, we refer to edges by label rather than by their endpoints).  Consider the operation of
$\textsf{solve-sparse}(1)$.  The compressed weighted forest after
 calling $\textsf{solve-sparse}$ on $1$'s children
$2,3,4$ is depicted in Figure~\ref{fig:pe-forest} (the single solid link
from $3$ to $4$ labeled $f$; the other links are added later).  The sibling graph for
$1$ is given in Figure~\ref{fig:pe-sibling-graph} -- observe that it
has two strongly connected components: $\{2\}$ and $\{3,4\}$, with
$\{2\}$ ordered topologically before $\{3,4\}$.

The loop (lines~\ref{ln:scc-loop-begin}-\ref{ln:scc-loop-end}) processes $\{2\}$ first, producing the component graph
in Figure~\ref{fig:pe-component-graph2}.  Then $2$ is linked to $1$ in
the compressed weighted forest (dashed edge of
Figure~\ref{fig:pe-forest}) with the regular expression $ac^*$ (the
paths from $1$ to $2$ represented by
$\textit{ComponentGraph}(\{2\})$).

Next, the loop processes the $\{3,4\}$ component, producing the
component graph in Figure~\ref{fig:pe-component-graph34}; note that
the edge from $2$ to $3$ in $G$ produces the edge from $1$ to $3$
(since $\ffind(2) = 1$) and the edge from $5$ to $4$ produces the edge
from $3$ to $4$ (since $\ffind(5) = 3$).  Then $3$ and $4$ are both
linked to $1$ in the compressed weighted forest (dotted edges of Figure~\ref{fig:pe-forest}).

Finally, \textsf{solve-sparse} returns the sum
$\textsf{solve-dense}(G_{\{2\}}, 1)$ and
$\textsf{solve-dense}(G_{\{3,4\}}, 1)$, which is the $\omega$-path
expression
\[
(ac^*d + be)(fge)^\omega + ac^\omega \let\qedsymbol\romanqed\qedhere
\]
\end{example}

Algorithm~\ref{alg:solve-sparse} operates in $O(|E|\alpha(|E|) + t)$
time, where $t$ is the time taken by the calls to
$\textsf{solve-dense}$.  For reducible flow graphs, each sibling
graph is a singleton (see \cite{JACM:Tarjan1981}), so the complexity
simplifies to $O(|E|\alpha(|E|))$.

\begin{algorithm}
  \Alg{$\opathexp{G}{r}$}{
    \Input{Graph $G = \tuple{V,E}$ and root vertex $r$}
    \Output{An $\omega$-path expression recognizing $\opaths{G}{r}$}
    $\textit{children} \gets$ dominator tree for $G$\;
    Init compressed weighted forest with vertices $V$\;
    \Return{\textsf{solve-sparse}$(r)$}

  \Fn{$\textsf{solve-sparse}(v)$}{
    \Input{Vertex $v \in V$}
    \Output{An $\omega$-path expression recognizing $\opaths{G}{v} \cap E|_v^\omega$}
    \ForEach{child $c \in \textit{children(v)}$}{
      $\textit{child-pe}_\omega(c) \gets \textsf{solve-sparse}(c)$\;
    }
    $G_v \gets \textit{SiblingGraph}(v)$\;
    $\textit{pe}_\omega \gets 0$ \tcc{accumulating $\omega$-path expression} \label{ln:scc-loop-begin}
    \ForEach{s.c.c. $C$ of $G_v$ in topological order}{
      $G_C \gets \textit{ComponentGraph}(C)$\;
      $\tuple{\textit{C-pe}_\omega,\textit{C-pe}} \gets \textsf{solve-dense}(G_C, v)$\;
      $\textit{pe}_\omega \gets \textit{pe}_\omega + \textit{C-pe}_\omega$\; \label{ln:component-pe}
      \ForEach{$u \in C$}{
        $\flink(u,\textit{C-pe}(u),v)$\;
        $\textit{pe}_\omega \gets \textit{pe}_\omega + \textit{C-pe}(u) \cdot \textit{child-pe}_\omega(u)$\;
      }
    } \label{ln:scc-loop-end}
    \Return{$\textit{pe}_\omega$}
  }
  }
  \caption{An $\omega$-path expression algorithm \label{alg:solve-sparse}}
\end{algorithm}

\section{Algebraic Termination Analysis} \label{sec:ata}

This section describes the process of interpreting an
\mbox{($\omega$-)}regular expression within a suitable algebraic structure.
As a particular case of interest, we show how to apply the algebraic
framework to termination analysis.

An \textbf{interpretation} over an alphabet $\Sigma$ consists
of a triple $\interp = \tuple{\mathbf{A},\mathbf{B},\elbl}$, where
$\mathbf{A}$ is a \textit{regular algebra}, $\mathbf{B}$ is a
\textit{$\omega$-regular algebra over $\mathbf{A}$}, and
$\elbl : \Sigma \rightarrow \mathbf{A}$ is a
\textit{semantic function}.  A \textbf{regular algebra} $\mathbf{A} =
\tuple{A,0^A,1^A,+^A,\cdot^A,\null^{*^A}}$ is an algebraic structure
equipped with two distinguished elements $0^A,1^A \in A$, two binary
operations $+^A$ and $\cdot^A$, and a unary operation $(-)^{*^A}$.  An
\textbf{$\omega$-algebra} over $\mathbf{A}$ is 4-tuple $\mathbf{B} =
\tuple{B,\cdot^B,+^B,^{\omega^B}}$ consisting of a universe $B$, an
operation $\cdot^B : A \times B \rightarrow B$, an operation $+^B : B
\times B \rightarrow B$, and an operation $(-)^{\omega^B} : A
\rightarrow B$.  A \textbf{semantic function} $\elbl :
\Sigma \rightarrow A$ maps the letters of $\Sigma$ into the regular
algebra $\mathbf{A}$.

Given an interpretation $\interp =
\tuple{\mathbf{A},\mathbf{B},\elbl}$ over an alphabet $\Sigma$, we can
evaluate any regular expression $e$ over $\Sigma$ to an element
$\abssem{e}$ of $\mathbf{A}$ and any $\omega$-regular expression $f$
over $\Sigma$ to an element $\absosem{f}$ of $\mathbf{B}$ by
interpreting each letter according to the semantic function and each
($\omega$-)regular  operator using its corresponding
operator in $\mathbf{A}$ or $\mathbf{B}$:\\
\begin{minipage}{\linewidth}
{\noindent\null\hfill
  $\abssem{a} \defeq \elbl(a) \quad\textit{for $a \in \Sigma$}$
  \hfill\null}\\
\noindent\begin{minipage}{0.4\linewidth}
\small
\begin{align*}
  \abssem{0} &\defeq 0^A\\
  \abssem{1} &\defeq 1^A\\
  \abssem{e_1e_2} &\defeq \abssem{e_1} \cdot^A \abssem{e_2}\\
  \abssem{e_1 + e_2} &\defeq \abssem{e_1} +^A \abssem{e_2}\\
  \abssem{e^*} &\defeq \abssem{e}^{*^A}
\end{align*}
\end{minipage}\hfill
\begin{minipage}{0.5\linewidth}
\small
\begin{align*}
  \absosem{e^\omega} &\defeq \abssem{e}^{\omega^B}\\
  \absosem{ef} &\defeq \abssem{e} \cdot^B \absosem{f}\\
  \absosem{f_1 + f_2} &\defeq \absosem{f_1} +^B \absosem{f_2}
\end{align*}
\end{minipage}

\end{minipage}

If an $\omega$-path expression $f$ is represented by a DAG with $n$
nodes, we can process the DAG bottom-up (as in
Section~\ref{sec:overview}) to compute $\absosem{f}$ in $O(n)$
operations of $\mathbf{A}$ and $\mathbf{B}$.  

\subsection{Termination Analysis}

This paper is primarily concerned with applying the above
algebraic framework to termination analysis.  The fundamental
operation of interest is this setting the $\omega$-iteration operator.

Fix a mortal precondition operator $\mp : \TF \rightarrow \SF$.  We
define a regular algebra of transition formulas, $\TF$, and an
$\omega$-regular algebra of mortal preconditions, $\MP$.  The universe
of $\TF$ is the set of transition formulas, and the universe of $\MP$
is the set of state formulas.  The operations are given below:

\begin{minipage}{0.45\linewidth}
\begin{align*}
  0^{\TF} &\defeq \false\\
  1^{\TF} &\defeq \bigwedge_{x \in  \Var} x' = x\\
  F_1 +^{\TF} F_2 &\defeq F_1 \lor F_2\\
  F_1 \cdot^{\TF} F_2 &\defeq F_1 \circ F_2\\
  F^{*^\TF} &\defeq F^\star
\end{align*}
\end{minipage}
\begin{minipage}{0.45\linewidth}
\begin{align*}
  F^{\omega^{\MP}} &\defeq \mp(F) \\   
  F \cdot^{\MP} S & \defeq \wp(F,S) \\
  S_1 +^{\MP} S_2 & \defeq S_1 \land S_2 
\end{align*}
\end{minipage}

Let $P = \tuple{G, \elbl}$ be a labeled control flow graph, which
defines a transition system $\textit{TS}(P)$.  Let $r$ be the root of $G$.  Using
the algorithm in Section~\ref{sec:pathexp}, we can compute an
$\omega$-regular expression that recognizes all $\omega$-paths in $G$
beginning at $r$.  By interpreting this regular expression (as above)
under the interpretation $\mathcal{T} \defeq \tuple{\TF,\MP,\elbl}$, we can
under-approximate the mortal initial states of $\textit{TS}(P)$.  The
correctness of this strategy is formalized below.

\begin{restatable}[Soundness]{proposition}{soundness} \label{lem:ata-soundness}
  Let $P = \tuple{G, \elbl}$ be a labeled CFG, let $r$
  be the root of $G$, and let $\opathexp{G}{r}$ be an $\omega$-path
  expression recognizing $\opaths{G}{r}$.  Then
  $\mpsem{\opathexp{G}{r}}$ is a mortal precondition for $\textit{TS}(P)$,
  in the sense that for any $s \models \mpsem{\opathexp{G}{r}}$,
  we have that $\tuple{r,s}$ is a mortal state of $\textit{TS}(P)$.  In
  particular, if $\mpsem{\opathexp{G}{r}}$ is valid, then the
  program $P$ has no infinite executions.
\end{restatable}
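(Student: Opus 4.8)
The plan is to establish soundness by induction on the structure of the $\omega$-path expression $f = \opathexp{G}{r}$, but the induction has to carry a stronger hypothesis relating the denotations of \emph{both} finite subexpressions (in $\TF$) and infinite subexpressions (in $\MP$) to the concrete transition system $\textit{TS}(P)$. First I would set up the semantic bridge: for a finite path expression $e \in \RegExp(E)$ with all edges lying between control locations, say that $\tfsem{e}$ is \emph{sound} if whenever $[s,s'] \models \tfsem{e}$ then for every path $p$ recognized by $e$ that starts at some vertex $u$ and ends at some vertex $v$, there is a concrete run $\tuple{u,s} \to^*_P \tuple{v,s'}$ following $p$ — more precisely, since a single $e$ may be used at different vertices, I would phrase it as: for every path $e_1\cdots e_n$ recognized by $e$ with matching endpoints, $\elbl(e_1)\circ\cdots\circ\elbl(e_n) \models \tfsem{e}$. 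Dually, for an $\omega$-path expression $f$, say $\mpsem{f}$ is \emph{sound for vertex $v$} if whenever $s \models \mpsem{f}$ then there is no infinite concrete run from $\tuple{v,s}$ following any $\omega$-path recognized by $f$ (starting at $v$).

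The inductive argument then proceeds case by case on the grammar. For the finite fragment, the base case $\tfsem{a} = \elbl(a)$ is immediate; the cases $0,1,+,\cdot$ follow directly from the definitions of the $\TF$ operations (disjunction, relational composition) matching union and concatenation of path sets; and the $*$ case uses the defining property of $(-)^\star$, namely $\rightarrow^*_F \subseteq \rightarrow_{F^\star}$, so that any finite iteration of paths through a loop is captured by $\tfsem{e^*} = \tfsem{e}^\star$. For the $\omega$-fragment: the case $f_1 + f_2$ uses $+^{\MP} = \land$, so $s \models \mpsem{f_1} \land \mpsem{f_2}$ rules out infinite runs on $\omega$-paths from either $f_1$ or $f_2$. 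The case $ef$ uses $\cdot^{\MP} = \wp$: if $s \models \wp(\tfsem{e}, \mpsem{f})$, then every state $s'$ reachable from $s$ via a (finite) path recognized by $e$ satisfies $\mpsem{f}$, and by the $f$-hypothesis no infinite continuation along an $\omega$-path of $f$ exists — here I need the observation that any infinite path recognized by $ef$ decomposes as a finite prefix recognized by $e$ followed by an $\omega$-suffix recognized by $f$, and that the finite prefix is \emph{nonempty}-or-empty but in either case lands in a state satisfying $\mpsem{f}$, using soundness of $\tfsem{e}$. The crucial case is $e^\omega$, where $\mpsem{e^\omega} = \mp(\tfsem{e})$: I need to show that if $s \models \mp(\tfsem{e})$ then there is no infinite concrete run from $\tuple{v,s}$ along an $\omega$-path recognized by $e^\omega$. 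Such a run decomposes into infinitely many finite segments, each recognized by $e$ and each returning to $v$; by soundness of $\tfsem{e}$, consecutive states $s = s_0, s_1, s_2, \dots$ along these segment boundaries satisfy $s_i \rightarrow_{\tfsem{e}} s_{i+1}$, so they form an infinite $\rightarrow_{\tfsem{e}}$-sequence. But $s_0 = s \models \mp(\tfsem{e})$, and $\mp(\tfsem{e})$ is by definition a mortal precondition for the transition system of $\tfsem{e}$, contradicting the existence of this infinite sequence.

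The main obstacle I expect is bookkeeping around the endpoints: a path expression in the DAG is only meaningful relative to the source/target vertices it connects, and the same $\TF$-element may be reused, so the induction hypothesis must quantify over all recognized paths-with-endpoints rather than speaking of "the" endpoints. A second subtlety is the decomposition of an $\omega$-path recognized by $e^\omega$ into infinitely many finite $e$-segments that all genuinely return to the loop head $v$ — this requires that $\opathexp{G}{r}$ was built so that the $e$ inside each $\omega$-subexpression recognizes cycles at a common vertex, which is exactly what the construction in Section~\ref{sec:pathexp} guarantees (each $\omega$-node arises from $\textit{pe}(v_i,v_i)^\omega$ in $\textsf{solve-dense}$, i.e., closed walks at $v_i$). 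Once these structural facts are in hand, the contradiction with the mortal-precondition property of $\mp$ closes the argument, and the final sentence of the proposition follows by taking $s$ arbitrary: if $\mpsem{f}$ is valid then every $\tuple{r,s}$ is mortal, i.e., $P$ has no infinite execution.
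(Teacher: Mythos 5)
Your proposal is correct and follows essentially the same route as the paper: the paper proves the same two inductive lemmas you describe, stated contrapositively in terms of concrete computations (if a regular expression recognizes the underlying path of a computation then its endpoints satisfy $\tfsem{e}$; if an $\omega$-regular expression recognizes the underlying path of an $\omega$-computation then the initial state falsifies $\mpsem{f}$), and then concludes the proposition exactly as you do. One small remark: your concern that the $e^\omega$ case needs $e$ to recognize cycles at a common loop-head vertex is unnecessary --- your own segment-boundary argument uses only state-level soundness of $\tfsem{e}$, so no appeal to the construction of Section~\ref{sec:pathexp} is needed, and indeed the proposition holds for \emph{any} $\omega$-path expression recognizing $\opaths{G}{r}$.
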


\begin{restatable}[Monotonicity]{proposition}{monotonicity} \label{lem:ata-monotonicity}
  Suppose that $\mp$ is a monotone mortal precondition operator, and $\MP$ and $\TF$ are defined as above.
  Let $f \in \oRegExp(E)$, and let $L_1,L_2 : E
  \rightarrow \TF$ be semantic functions such that for all $e \in E$,
  $L_1(e) \models L_2(e)$.  Define $\mathcal{T}_1 \defeq
  \tuple{\TF,\MP,L_1}$ and $\mathcal{T}_2 \defeq \tuple{\TF,\MP,L_2}$.
  Then $\mathcal{T}^\omega_2\!\sem{f} \models
  \mathcal{T}^\omega_1\!\sem{f}$.
\end{restatable}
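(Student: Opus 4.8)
The plan is to prove the monotonicity proposition by structural induction on the $\omega$-regular expression $f$, simultaneously establishing an auxiliary claim about the regular sub-expressions: for every regular expression $e \in \RegExp(E)$, we have $\mathcal{T}_1\sem{e} \models \mathcal{T}_2\sem{e}$ (i.e., the transition formula computed under the smaller labeling entails the one computed under the larger labeling). This auxiliary claim is the natural companion because the semantic function for $\omega$-regular expressions threads through regular sub-expressions via the $\cdot^{\MP}$ and $^{\omega^{\MP}}$ operators, both of which consume an element of $\TF$.

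First I would dispatch the auxiliary claim by induction on the structure of $e$. The base cases are the letters $a \in E$ (where $\mathcal{T}_1\sem{a} = L_1(a) \models L_2(a) = \mathcal{T}_2\sem{a}$ by hypothesis) and the constants $0, 1$ (trivial, since both interpretations assign the same formula). For the inductive cases: $e_1 + e_2$ follows because disjunction is monotone in each argument; $e_1 e_2$ follows because relational composition $F_1 \circ F_2$ is monotone in each argument (if $F_1 \models F_1'$ and $F_2 \models F_2'$ then $F_1 \circ F_2 \models F_1' \circ F_2'$, which is immediate from the definition $\exists \Var''. F_1[\Var' \mapsto \Var''] \land F_2[\Var \mapsto \Var'']$); and $e^*$ follows from Lemma~\ref{lem:exp-monotone}, which states precisely that $(-)^\star$ is monotone. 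These are all routine once the composition-monotonicity fact is noted.

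Next I would prove the main statement by induction on the structure of $f \in \oRegExp(E)$. There are three cases mirroring the grammar $f ::= e^\omega \mid ef \mid f_1 + f_2$, and crucially the direction of entailment to be established is the \emph{reverse} of the auxiliary claim (weaker mortal preconditions under the smaller labeling), reflecting the antitone nature of $\mp$. For $f = e^\omega$: $\mpsem{e^\omega} = \mp(\mathcal{T}\sem{e})$; by the auxiliary claim $\mathcal{T}_1\sem{e} \models \mathcal{T}_2\sem{e}$, and by the assumed monotonicity of $\mp$ (in the antitone sense from the definition in the excerpt), $\mp(\mathcal{T}_2\sem{e}) \models \mp(\mathcal{T}_1\sem{e})$, i.e. $\mathcal{T}^\omega_2\sem{e^\omega} \models \mathcal{T}^\omega_1\sem{e^\omega}$. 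For $f = ef'$: $\mpsem{ef'} = \wp(\mathcal{T}\sem{e}, \mpsem{f'})$; I need that $\wp$ is monotone in the sense that if $F_1 \models F_2$ and $S_2 \models S_1$ then $\wp(F_2, S_2) \models \wp(F_1, S_1)$ — this is immediate from $\wp(F,S) = \forall \Var'. F \Rightarrow S[\Var \mapsto \Var']$, since strengthening the antecedent $F$ and strengthening the consequent $S$ both make the implication easier. Combining the auxiliary claim ($\mathcal{T}_1\sem{e} \models \mathcal{T}_2\sem{e}$) with the inductive hypothesis ($\mathcal{T}^\omega_2\sem{f'} \models \mathcal{T}^\omega_1\sem{f'}$) gives the result. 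For $f = f_1 + f_2$: $\mpsem{f_1 + f_2} = \mpsem{f_1} \land \mpsem{f_2}$, and conjunction is monotone in each argument, so the two inductive hypotheses combine directly.

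I do not expect any serious obstacle here; the proof is essentially bookkeeping. The one point requiring care is keeping straight the two opposing entailment directions — the auxiliary claim on $\TF$ runs ``smaller labeling entails larger,'' whereas the main claim on $\MP$ runs ``larger's output entails smaller's output'' — and making sure each algebraic operation ($\lor$, $\circ$, $(-)^\star$, $\mp$, $\wp$, $\land$) is invoked with the correct variance. The only genuinely external input is Lemma~\ref{lem:exp-monotone} for the Kleene-star case; everything else is a one-line monotonicity observation about a first-order formula constructor.
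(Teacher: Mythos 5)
Your proposal is correct and follows essentially the same route as the paper: induction on the structure of $f$, paired with the companion fact that $\mathcal{T}_1\sem{e} \models \mathcal{T}_2\sem{e}$ for regular sub-expressions, with each case discharged by monotonicity of the corresponding operation ($\lor$, $\circ$, $(-)^\star$ via Lemma~\ref{lem:exp-monotone}, $\mp$, $\wp$, $\land$). The only cosmetic difference is that the paper cites the algebraic laws of $\TF$ and $\MP$ (Lemmas~\ref{lem:idempotent-semiring} and~\ref{lem:module}) to obtain these monotonicity facts, whereas you verify them directly from the logical definitions, which is equally valid.
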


\subsection{Inter-procedural Analysis} \label{sec:interprocedural}

Our algebraic framework extends to the inter-procedural case using the
method of \citet{FMSD:CPR2009}.  The essential idea is to merge the
control flow graphs of all procedures of a program into an
\textit{inter-procedural control flow graph} (ICFG) so that infinite
paths through the program---including paths that are infinite due to
the presence of recursion---correspond to infinite paths its ICFG.  We
may then compute $\omega$-path expressions for the ICFG and interpret
them, just as in the intra-procedural case.  That is, the same analysis
that is used to prove conditional termination for loops also can be
applied to recursive functions.  In the following we sketch the
inter-procedural extension; see \cite{TR:ZK2021} for details.

We represent a multi-procedure program as a tuple \[P =
\tuple{V,E,\textit{Proc},\Lambda,\textit{entry},\textit{exit}}\ ,\]
where $\tuple{V,E}$ is a finite directed graph, $\textit{Proc}$ is a
finite set of procedure names, $\Lambda : E \rightarrow (\TF \cup
\textit{Proc})$ labels each edge by either a transition formula or a
procedure call, and $\textit{entry},\textit{exit} : \textit{Proc}
\rightarrow V$ are functions associating each procedure name with an
entry and an exit vertex.  We presume that the set of variables $\Var$
is divided into a set of local variables $\LVar$ and a set of global
variables $\GVar$.  Note that procedures do not have parameters or
return values, but these can be modeled using global variables (see
Figure~\ref{fig:fib} for an example)

Fix a program $P$. Define its \textit{inter-procedural} control flow
graph $\textit{ICFG} \defeq (V,E_{\textit{ICFG}})$. as follows.  The
vertices $V$ are the same as the vertices of $P$.  The edges
$E_{\textit{ICFG}} \defeq E \cup \textit{Interproc}$ are the edges of $P$ plus an
additional set of \textit{inter-procedural} edges, which represent
transfer of control between procedures by connecting the source of a
call to the entry of the called procedure:
\[ \textit{Interproc} \defeq \set{ \tuple{u,\textit{entry}(p)} : \exists \tuple{u,v} \in E. \Lambda(u,v) = p }\ . \]
An example \textit{ICFG} appears in Figure~\ref{fig:fib}; dashed edges
correspond to inter-procedural edges.

Finally, we define a semantic function that can be used to interpret
the edges of $\textit{ICFG}$.  A \textit{summary assignment} is a
function $\summary : \textit{Proc} \rightarrow \TF$ that maps each
procedure to a transition formula that over-approximates its behavior.
For example, one possible summary assignment for Figure~\ref{fig:fib}
is $\summary(\cinline{fib}) = g \leq r'$, indicating that the output
of \cinline{fib} is no less than its input.  Summary assignments can
be computed using standard iterative techniques (some care needs to be
taken to ensure monotonicity; see \cite{TR:ZK2021} for details).  With
a summary assignment $\summary$ in hand, we can define a semantic
function $L_\summary: E_{\textit{ICFG}} \rightarrow \mathbf{TF}$ by
\[
L_\summary(u,v) \defeq \begin{cases}
  \Lambda(u,v) & \text{if } \tuple{u,v} \in E  \text{ and } \Lambda(u,v) \in \TF\\
  \summary(p) & \text{if } \tuple{u,v} \in E \text{ and } \Lambda(u,v) = p\\
  {\displaystyle\bigwedge_{x \in \GVar} x' = x} & \text{if } \tuple{u,v} \in \textit{Interproc}
\end{cases} \]

\begin{theorem}[Inter-Procedural Soundness]
  Let $P$ be a program.  For any procedure $p \in P$,
  $\mpsem{\opathexp{\textit{ICFG}}{\textit{entry}(p)}}$ is a mortal precondition for
  the procedure $p$, in the sense that for any state $s$ such that
  $s \models
  \mpsem{\opathexp{\textit{ICFG}}{\textit{entry}(p)}}$, we have that
  $\tuple{\textit{entry}(p),s}$ is a mortal state of $P$.
\end{theorem}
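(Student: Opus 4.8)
The plan is to reduce inter-procedural soundness to the intra-procedural soundness result, Proposition~\ref{lem:ata-soundness}, applied to the labeled control flow graph $\tuple{\textit{ICFG}, L_\summary}$ with root $\textit{entry}(p)$. Since Algorithm~\ref{alg:solve-sparse} guarantees that $\opathexp{\textit{ICFG}}{\textit{entry}(p)}$ recognizes \emph{all} $\omega$-paths of $\textit{ICFG}$ from $\textit{entry}(p)$, Proposition~\ref{lem:ata-soundness} immediately yields that $\mpsem{\opathexp{\textit{ICFG}}{\textit{entry}(p)}}$ is a mortal precondition for the (flat) transition system $\textit{TS}(\tuple{\textit{ICFG}, L_\summary})$. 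The remaining work is therefore to transfer mortality from this flat transition system back to the stack-based operational semantics of $P$: it suffices to show that whenever $\tuple{\textit{entry}(p), s}$ admits an infinite execution of $P$, the state $\tuple{\textit{entry}(p), s}$ also admits an infinite execution in $\textit{TS}(\tuple{\textit{ICFG}, L_\summary})$ --- so that $s \models \mpsem{\opathexp{\textit{ICFG}}{\textit{entry}(p)}}$ forbids infinite $P$-executions from $\tuple{\textit{entry}(p),s}$.

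The crux is a simulation lemma: every infinite execution $\rho$ of $P$ from $\tuple{\textit{entry}(p), s}$ induces an infinite path $\pi$ in $\textit{ICFG}$ from $\textit{entry}(p)$, together with a run of $\textit{TS}(\tuple{\textit{ICFG}, L_\summary})$ over $\pi$ starting at $\tuple{\textit{entry}(p), s}$. We construct $\pi$ by tracing $\rho$ and following the bottom-most stack frame that is never popped (the initial incarnation of $p$, which must persist because $\rho$ is infinite and $p$ has no caller): (i) each intraprocedural step of that frame along $\tuple{u,v}\in E$ is copied verbatim, and the transition satisfies $L_\summary(u,v)=\Lambda(u,v)$ by definition of $P$'s semantics; (ii) each call $\tuple{u,v}$ with $\Lambda(u,v)=q$ that \emph{returns} in $\rho$ is copied as the original edge $\tuple{u,v}$, now labeled $\summary(q)$, which the transition satisfies because $\summary$ is a valid summary assignment (it over-approximates $q$'s input/output behaviour, preserving the caller's locals); (iii) for a call $\tuple{u,v}$ with $\Lambda(u,v)=q$ that \emph{never} returns, we instead take the inter-procedural edge $\tuple{u,\textit{entry}(q)}$, whose label $\bigwedge_{x\in\GVar}x'=x$ is satisfied since a call leaves the globals unchanged and the target state's locals are unconstrained, and then recurse into the non-returning sub-execution. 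A short well-foundedness argument on $\rho$ shows $\pi$ is infinite: either some persistent frame performs infinitely many of its own steps (cases (i)/(ii) loop forever inside one procedure) or there is an infinite chain of non-returning nested calls (infinitely many uses of case (iii)).

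Combining the pieces: given $s \models \mpsem{\opathexp{\textit{ICFG}}{\textit{entry}(p)}}$, Proposition~\ref{lem:ata-soundness} makes $\tuple{\textit{entry}(p), s}$ a mortal state of $\textit{TS}(\tuple{\textit{ICFG}, L_\summary})$; by the contrapositive of the simulation lemma, $\tuple{\textit{entry}(p), s}$ then has no infinite execution in $P$, i.e., it is a mortal state of $P$, as required. I expect the simulation lemma to be the main obstacle --- specifically, pinning down the local/global variable bookkeeping across call and inter-procedural edges (which depends on the precise definitions of a valid summary assignment and of $P$'s stack semantics, deferred to the technical report~\cite{TR:ZK2021}) and verifying the case analysis that $\pi$ is genuinely infinite in both the ``eventually loops within one frame'' and ``unbounded recursion depth'' regimes. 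The rest is either a direct appeal to Proposition~\ref{lem:ata-soundness} or routine.
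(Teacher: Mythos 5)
Your proposal is correct and follows essentially the same route the paper takes (and defers to its technical report for): infinite stack-based executions of $P$ are simulated by infinite paths in the ICFG---summarizing calls that return via $\summary$-labeled call edges (justified by the appendix lemma that $\trstar{\textit{TS}(P)}{\tuple{\textit{entry}(q),s}\textit{st}}{\tuple{\textit{exit}(q),s'}\textit{st}}$ implies $[s,s'] \models \summary(q)$) and descending through inter-procedural edges for calls that never return---after which Proposition~\ref{lem:ata-soundness} applied to $\tuple{\textit{ICFG}, L_\summary}$ closes the argument. The dichotomy you give for infiniteness of the constructed path and the local/global bookkeeping you flag are exactly the details handled in the technical report, so nothing in your outline diverges from the paper's intended proof.
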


\begin{example}
  Consider the recursive Fibonacci function and its inter-procedural control flow graph pictured in Figure~\ref{fig:fib}.  We have
  \begin{align*}
    \opathexp{\textit{ICFG}}{r} &= \textit{body}^\omega, \text{ where}\\
    \textit{body} &= \tuple{r,a}\tuple{a,b}(\tuple{b,r} + \tuple{b,c}\tuple{c,d}\tuple{d,r})
  \end{align*}
  Observe that any infinite execution of  \cinline{fib}
  corresponds to a path in its \textit{ICFG}, and therefore $\opathexp{\textit{ICFG}}{r}$.
  We can compute a precondition under which Fibonacci terminates by evaluating
  $\opathexp{\textit{ICFG}}{r}$, using $\mp_{\textit{LLRF}}$ as the mortal precondition operator:
\begin{align*}
  \tfsem{\textit{body}} &\equiv g \geq 2 \land (g' = g - 1 \lor g' = g - 2)\\
  \mpsem{\opathexp{\textit{ICFG}}{r}} &= \true
\end{align*}
\end{example}

\begin{figure}
  \begin{minipage}{7cm}
\begin{lstlisting}[style=base,numbers=left,xleftmargin=2em]
fib(n):
  if (n <= 1):
    return 1
  else
    return fib(n - 1) + fib(n - 2)
\end{lstlisting}
  \end{minipage}
  \hfill
  \begin{minipage}{6cm}
  \begin{tikzpicture}[thick,>=stealth,
      v/.style={circle,draw,minimum height=18pt},
      node distance=1.5cm]
    \small
  \node [v] (r) {$r$};
  \node [v,below of=r] (a) {$a$};
  \node [v,right of=a,yshift=-2cm,xshift=1cm] (x) {$x$};
  \node [v,below of=a] (b) {$b$};
  \node [v,below of=b] (c) {$c$};
  \node [v,below of=c] (d) {$d$};
  \node [v,right of=d,xshift=1cm] (e) {$e$};
  \draw (r) edge[->] node[left]{$n' = g$} (a);
  \draw (a) edge[->] node[above,xshift=7pt]{$\begin{array}{l@{\null}l}&n \leq 1\\ \land & r' = 1\end{array}$} (x);
  \draw (a) edge[->] node[left]{$\begin{array}{l@{\null}l} & n \geq 2\\ \land & n' = n\\ \land & g' = n - 1\end{array}$} (b);
  \draw (b) edge[->] node[left]{call \cinline{fib}} (c);
  \draw (c) edge[->] node[left]{$\begin{array}{l@{\null}l}&t' = r\\ \land& g' = n - 2\\\land&n'=n\end{array}$} (d);
  \draw (d) edge[->] node[below ]{call \cinline{fib}} (e);
  \draw (e) edge[->] node[right]{$r' = r + t$} (x);

  \draw [dashed,->] (b) -- ($ (b) - (2.75cm,0) $) -- ($ (r) - (2.75cm,0) $) -- (r);
  \draw [dashed,->] (d) -- ($ (d) - (3cm,0) $) -- ($ (r) - (3cm,-0.1cm) $) -- ($ (r.west) - (0,-0.1cm) $);
  \end{tikzpicture}
  \end{minipage}
  \caption{The recursive Fibonacci function (top), and representation
    as an inter-procedural control flow graph (bottom). The parameter
    and return are represented by the global variables $g$ and $r$
    (respectively); $t$ is a local temporary variable used to store
    the return value of the first recursive call.  Dashed edges are
    inter-procedural. \label{fig:fib}}
\end{figure}
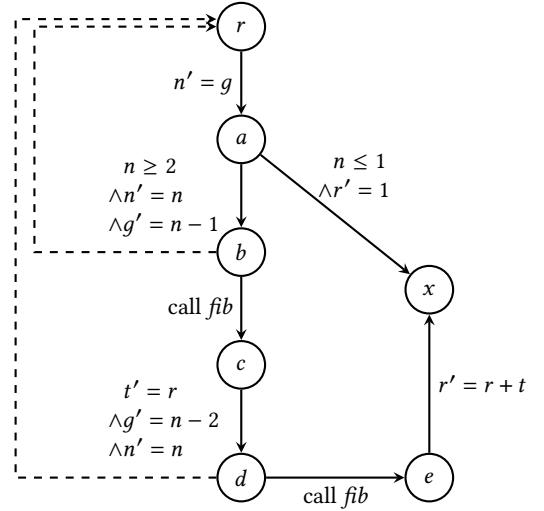

\section{Modular Design of Mortal Precondition Operators} \label{sec:mp-operator-design}

The interface provided by algebraic termination analysis is that the
analysis designer provides a mortal precondition operator for
transition formulas, and the framework ``lifts'' it to compute mortal
preconditions for whole programs.  Example~\ref{ex:LLRF} gives one
instantiation of the mortal precondition operator using linear
lexicographic ranking functions.  This section demonstrates the
applicability of the framework, by describing several combinators that
can be used to construct monotone mortal precondition operators.
A common theme to all is to take advantage of the properties of the
algebraic framework (compositionality and monotonicity, in
particular).


\subsection{Termination Analysis for Free} \label{sec:mp-exp}

Summarizing loops using an over-approximate transitive closure operator
is an integral component of our algebraic framework.
This section demonstrates that loop summarization can be also be
exploited to construct a mortal precondition operator; i.e., an
algebraic analyses for safety can be extended to prove termination analysis without any burden on the analysis designer.

Let $F$ be a transition formula.  A sufficient (but not necessary) condition for a state $s$ of $F$ to be mortal
is that there is a bound on the length of any execution starting from $s$; that is there is some
  $k$ such that for all $s'$ with $[s,s'] \models F^{k}$, $s'$
  has no $F$-successors.  This condition is not decidable, but it
  can be under-approximated using the procedure $\exp$
  described in Section~\ref{sec:transition-formulas} (or any other
  method for over-approximating the iterated behavior of a transition
  formula); this yields the following mortal precondition operator:
  \[ \mp_{\exp}(F) \defeq \exists k. \forall \Var',\Var''. k \geq 0 \land (\exp(F,k) \Rightarrow \lnot G)\]
  where $G \defeq F[\Var \mapsto \Var',\Var' \mapsto \Var'']$.

  The fact that $\mp_{\exp}$ is monotone follows from
  the monotonicity of quantification, conjunction, and the
  $\exp$ operator, and the fact that $F$ and
  $\exp(F,k)$ appear in negative positions in the formula.
 
\begin{example} \label{ex:mp-exp-example}
  Consider the loop
  \begin{center}
    \cinline{while (x != 0): x := x - 2},
  \end{center}
  with corresponding transition formula $F \defeq x \neq 0 \land x' = x - 2$.  In this case, we have $\exp(F,k) \models x' = x - 2k$, and
$\mp_{\exp}$ computes the exact precondition for termination of the loop:
\begin{align*}
  \mp_{\exp}(F) &\equiv \exists k.\forall x',x''. k \geq 0 \\
  & \quad  \land (x' = x - 2k \Rightarrow \lnot (x' \neq 0 \land x'' = x' - 2))\\
  &\equiv \exists k. k \geq 0 \land x-2k = 0
\end{align*}
i.e., the loop terminates provided that it begins in a state where
$x$ is a non-negative even number.
\end{example}

\subsection{Phase Analysis} \label{sec:mp-phase-analysis}
  This section describes a \textit{phase analysis} combinator that
  improves the precision of a given mortal precondition operator.
  The idea is to extract a \textit{phase transition graph} from  a transition formula, in which each vertex represents a phase of a loop, and each edge represents a phase transition.  Using
  the algebraic framework from Section~\ref{sec:ata} and a given mortal precondition operator $\mp$, we compute a mortal precondition for the phase transition graph, which (under mild assumptions) is guaranteed to be weaker than applying $\mp$ to the original transition formula (see Theorem~\ref{thm:phase-usefulness}).
  An important feature of phase analysis is that it can address the
  challenge of generating conditional termination arguments: even if some phases do not terminate, we can still use phase analysis to
  synthesize non-trivial mortal preconditions.

  
  Let $F$ be a transition formula.  We say that a transition formula
  $p$ is \textbf{$F$-invariant} if, should some transition of $F$
  satisfy $p$, then so too must any subsequent transition; that is,
  the formula $(F \land p) \circ (F \land \lnot p)$ is inconsistent.
  Let $P$ be a fixed set of transition formulas (e.g., in our
  implementation, we take $P$ to be the set of all direction
  predicates, $P = \set{ x \bowtie x' : x \in \Var,
    \bowtie \in \{<,=,>\} }$).  Let $I(F,P)$ denote the $F$-invariant
  subset of $P$; $I(F,P)$ can be computed by checking the invariance
  condition for each formula in $P$ using an SMT solver.  The set of
  predicates $I(F,P)$ defines a partition $\mathcal{P}(F,P)$ of the
  set of transitions of $F$, where each cell corresponds to a
  valuation of the predicates in $P$ (i.e., each cell has the form \[F
  \land \left(\bigwedge_{p \in X} p\right) \land \left(\bigwedge_{p
    \in I(F,P)\setminus X} \lnot p\right)\ ,\] where $X$ is a subset of
  $I(F,P)$).  Since the predicates in $I(F,P)$ are $F$-invariant, this
  partition has the property that any infinite computation of $F$ must
  eventually lie within a single cell of the partition.
    

  Define the \textbf{phase transition graph} $\textit{Phase}(F,P)$ to be a
  labeled control flow graph where the vertices are the cells of the partition
  $\mathcal{P}(F,P)$ plus a root vertex $s$, and which has the
  following properties: (1) each cell has a self-loop, labeled by the cell (2)
  if cell $F_j$ can immediately follow $F_i$ (i.e., $F_i \circ F_j$ is
  satisfiable), there is an edge from $F_i$ to $F_j$ with label
  $1^\TF$ (3) there is an edge from $s$ to every cell with label $1^\TF$. The
  idea is that any infinite sequence $s_0 \rightarrow_F s_1 \rightarrow_F
  \dotsi$ corresponds to an $\omega$-path starting from $s$ in $G$.  Observe
  that this property is maintained if we relax conditions (2) and (3) so that
  we require only $1^\TF$-labeled \textit{paths} rather than edges;  call a
  phase transition graph \textit{reduced} if it satisfies the relaxed
  conditions, and the number of edges is minimal. An algorithm that constructs a
  reduced phase transition graph is given in
  Algorithm~\ref{alg:reduced-phase-graph}.

\begin{algorithm}
  \caption{Phase transition graph construction \label{alg:reduced-phase-graph}}
  \Fn{$\textsf{phase-transition-graph}(F,P)$}{
    \Input{Formula $F$, set of transition predicates $P$}
    \Output{Reduced phase transition graph for $F$ and $P$}
    \tcc{$S$ is the set of literals for $F$-invariant predicates in $P$}
    $S \gets I(F,P) \cup \{ \lnot p : p \in I(F,P) \}$\;
    \tcc{Compute the cells of $\mathcal{P}(F,P)$}
    $n \gets 0$\;
    \While{$F \land \bigwedge_{i=1}^n \lnot F_i$ is SAT}{
      Select a model $t$ with $t \models F \land \bigwedge_{i=1}^n \lnot F_i$\;
      $n \gets n + 1$\;
      $F_n \gets F \land \bigwedge \{ p \in S : t \models p \}$\;
    }
    \tcc{Compute phase transitions}
    Sort $F_1,\dots,F_n$ by \# of positive literals\;
    $E \gets \{\}$\;
     \For{$i$ = $2$ to n} {
       \For{$j$ = $i-1$ downto $1$}{
         \If{$\tuple{F_j,F_i} \notin E^*$ and $F_j \circ F_i$ is SAT}{
           $E \gets E \cup \set{\tuple{F_j, F_i}}$\;
         }
       }
     }
     \tcc{Connect virtual start node $s$ to unreachable vertices}
     $E \gets E \cup \set{ \tuple{s,F_i} : \nexists j. \tuple{F_j,F_i} \in E }$\;
     $E \gets E \cup \set{ \tuple{F_i,F_i} : 1 \leq i \leq n }$ \tcc*{Add self-loops}
     $\elbl \gets \lambda (F_i,F_j). \textbf{if } i = j \textbf{ then } F_i \textbf{ else } 1^\TF$\;
     \Return{$\tuple{\tuple{\{s, F_1,\dots,F_n\},E,s},L}$}
  }
\end{algorithm}

  We now define the phase analysis combinator.
  Suppose that $\mp$ is a mortal precondition operator; define
  the mortal precondition operator
  $\mp_{\textit{Phase}(P,\textit{mp})}$ as follows.  Let $F$ be a transition formula.  Construct the (reduced) phase transition graph $\tuple{G = \tuple{V,E,s}, \elbl}$ using Algorithm~\ref{alg:reduced-phase-graph}.  Compute an $\omega$-path expression $\opathexp{G}{s}$ for $G$ as in Section~\ref{sec:pathexp}.
  Define an interpretation $\mathcal{T} \defeq \tuple{\TF,\MP,\elbl}$, where the $(-)^{\omega^\MP}$ operator is taken to be $\mp$.
  Finally, define
  \[
  \mp_{\textit{Phase}(P,\textit{mp})}(F) \defeq \mpsem{\opathexp{G}{s}} \ .
  \]

\begin{restatable}[Soundness]{theorem}{PhaseSoundness} \label{thm:phase-soundness}
  Let $\mp$ be a  mortal precondition operator and let $P$ be a set of transition predicates.   
  Then $\mp_{\textit{Phase}(P,\textit{mp})}$ is a mortal precondition operator.
\end{restatable}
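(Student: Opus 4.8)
The plan is to show that for an arbitrary transition formula $F$, the state formula $\mp_{\textit{Phase}(P,\textit{mp})}(F) = \mpsem{\opathexp{G}{s}}$ is a mortal precondition for $F$; that is, every state $s_0$ satisfying it admits no infinite $F$-computation. The key bridge is a simulation argument: I would show that the phase transition graph $G$ (viewed as the labeled CFG $\tuple{G,\elbl}$, defining a transition system $\textit{TS}(\tuple{G,\elbl})$) over-approximates the behavior of $F$ in the precise sense that every infinite computation $s_0 \rightarrow_F s_1 \rightarrow_F s_2 \rightarrow_F \cdots$ of $F$ lifts to an infinite computation of $\textit{TS}(\tuple{G,\elbl})$ starting from $\tuple{s, s_0}$. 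Granting that, any state $s_0$ that is \emph{not} mortal for $F$ gives a non-mortal state $\tuple{s,s_0}$ of the CFG, so by the Soundness proposition for algebraic termination analysis (Proposition~\ref{lem:ata-soundness}, applied to $\opathexp{G}{s}$ recognizing $\opaths{G}{s}$) we get $s_0 \not\models \mpsem{\opathexp{G}{s}}$; contrapositively, $\mpsem{\opathexp{G}{s}}$ is a mortal precondition for $F$, which is exactly the claim.

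The core of the argument is the simulation lemma. First I would handle the non-reduced graph (conditions (1)--(3) with edges rather than paths): given an infinite $F$-computation $s_0 \rightarrow_F s_1 \rightarrow_F \cdots$, for each $i \geq 1$ the transition $[s_{i-1}, s_i] \models F$ lies in exactly one cell $F_{c_i}$ of the partition $\mathcal{P}(F,P)$ (the cells partition the transitions of $F$ since they enumerate the valuations of $I(F,P)$). Because consecutive transitions $[s_{i-1},s_i]$ and $[s_i,s_{i+1}]$ compose, $F_{c_i} \circ F_{c_{i+1}}$ is satisfiable, so either $c_i = c_{i+1}$ (self-loop, labeled by the cell, and $[s_{i-1},s_i] \models F_{c_i} = \elbl(F_{c_i},F_{c_i})$ as required) or there is a $1^\TF$-labeled edge $\tuple{F_{c_i},F_{c_{i+1}}}$; in the latter case we insert a stutter step $\tuple{F_{c_i},s_i} \rightarrow \tuple{F_{c_{i+1}},s_i}$, which is legal since $1^\TF = \bigwedge_{x}x'=x$ holds of $[s_i,s_i]$. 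Prepending the edge $\tuple{s,F_{c_1}}$ (present by condition (3), labeled $1^\TF$, taken with the stutter $\tuple{s,s_0}\rightarrow\tuple{F_{c_1},s_0}$) yields an infinite run of $\textit{TS}(\tuple{G,\elbl})$ from $\tuple{s,s_0}$. For the \emph{reduced} graph I would observe that the relaxation of (2) and (3) only replaces some edges by $1^\TF$-labeled paths; the same construction works, inserting a finite block of stutter steps instead of a single one wherever a path replaces an edge, and the $F$-invariance of $I(F,P)$ guarantees that after the computation settles into a single cell (which it must, since the cell predicates are $F$-invariant) the run stays on that cell's self-loop forever, so the lifted run is genuinely infinite.

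The main obstacle I anticipate is getting the reduced-graph bookkeeping exactly right: in the reduced phase transition graph a phase transition $F_i \to F_j$ is witnessed by a $1^\TF$-labeled \emph{path} (via intermediate cells) rather than a direct edge, and one must check that such a path always exists whenever $F_i \circ F_j$ is satisfiable — this is precisely the invariant maintained by Algorithm~\ref{alg:reduced-phase-graph} (an edge $\tuple{F_j,F_i}$ is added unless $\tuple{F_j,F_i}\in E^*$ already), and the start-node edges are added for exactly those vertices not already reachable. I would state this reachability property as an auxiliary lemma about the output of the algorithm and use it to license each stutter block. A secondary subtlety is that the partition cells are indexed by subsets $X$ of $I(F,P)$ but the algorithm only materializes the \emph{nonempty} cells $F_1,\dots,F_n$ (those with a satisfying model); since an actual transition $[s_{i-1},s_i]\models F$ always lands in a nonempty cell, this causes no trouble — the lifted run only ever visits the materialized vertices. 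Everything else is routine once the simulation lemma is in hand, since it reduces the theorem to the already-established Proposition~\ref{lem:ata-soundness}.
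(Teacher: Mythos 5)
Your proposal is correct and follows essentially the same route as the paper's proof: argue the contrapositive by lifting an infinite $F$-computation to an infinite execution of the transition system of the phase transition graph (each transition lands in a unique cell, consecutive cells compose so a $1^\TF$-labelled path connects them, and the root connects to the first cell), then invoke Proposition~\ref{lem:ata-soundness}. You merely spell out the stuttering steps and the reduced-graph reachability invariant more explicitly than the paper does, which is a fine (and slightly more careful) elaboration rather than a different argument.
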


\begin{restatable}[Guaranteed improvement]{theorem}{PhaseUsefulness} \label{thm:phase-usefulness}
    Let $\mp$ be a monotone mortal precondition operator and let $P$ be a set of transition predicates.   
    Suppose that for any transition formula $F$, we have $\wp(F^\star,\mp(F)) = \mp(F)$.
    Then $\mp(F) \models \mp_{\textit{Phase}(P,\textit{mp})}(F)$.
\end{restatable}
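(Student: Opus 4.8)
The plan is to unfold $\mp_{\textit{Phase}(P,\textit{mp})}(F) = \mpsem{\opathexp{G}{s}}$ and show directly that $\mp(F) \models \mpsem{\opathexp{G}{s}}$, where $G = \tuple{\tuple{V,E,s},\elbl}$ is the reduced phase transition graph.  The crucial structural fact is that $G$ is almost acyclic: the cells $F_1,\dots,F_n$ are enumerated by increasing number of positive literals, and $E$ contains an edge $\tuple{F_j,F_i}$ between cells only when $j<i$ (by invariance of the predicates, a cell that can follow another has at least as many positive literals), while $s$ has no incoming edges.  Hence the only cycles in $G$ are the self-loops $\tuple{F_i,F_i}$, and every strongly connected component is a singleton.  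It follows that the $\omega$-path expression algorithm of Section~\ref{sec:pathexp} yields $\opathexp{G}{s}$ in a restricted normal form: up to the laws of the interpretation it is a finite sum $\sum_k e_k\,\tuple{F_{i_k},F_{i_k}}^\omega$ in which every Kleene star occurring in any $e_k \in \RegExp(E)$ is applied to a single self-loop letter (there are no nested stars, precisely because the only cycles are self-loops).  Verifying this normal form by tracing through \textsf{solve-sparse} and \textsf{solve-dense} is the main piece of bookkeeping; it can be sidestepped by instead taking $\opathexp{G}{s} \defeq \sum_{i=1}^n \pathexp{G}{s}{F_i}\cdot\tuple{F_i,F_i}^\omega$, which is a correct $\omega$-path expression for such a graph and is manifestly in the normal form above.

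Given the normal form, evaluating $\mpsem{-}$ — using that $+^\MP$ is conjunction, $\cdot^\MP$ is weakest precondition, $\wp$ distributes over conjunction and satisfies $\wp(A \circ B, S) \equiv \wp(A,\wp(B,S))$, and $\absosem{e^\omega} = \mp(\abssem{e})$ — produces a finite conjunction $\mpsem{\opathexp{G}{s}} \equiv \bigwedge_k \wp(\tfsem{e_k},\,\mp(C_{i_k}))$, where $C_i \defeq \elbl(F_i,F_i)$ is the cell formula of $F_i$, which entails $F$ by construction of the phase graph.  Thus it suffices to show $\mp(F) \models \wp(\tfsem{e_k},\mp(C_{i_k}))$ for each $k$.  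Since $C_{i_k} \models F$, monotonicity of $\mp$ gives $\mp(F) \models \mp(C_{i_k})$; as $\wp$ is monotone in its second argument, it is enough to prove $\mp(F) \models \wp(\tfsem{e_k},\mp(F))$.

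For this I would show $\tfsem{e_k} \models (F^\star)^{m_k}$ for some $m_k \in \mathbb{N}$ (one may take $m_k$ to be the number of letter occurrences in $e_k$), by induction on the structure of $e_k$.  Each letter is interpreted either as $1^\TF$, which entails $F^\star$ because $\rightarrow_{1^\TF}$ is the identity relation and $\rightarrow_{F^\star}$ contains the reflexive transitive closure of $\rightarrow_F$, or as a cell formula $C_j \models F \models F^\star$.  Each starred letter is a self-loop $\tuple{F_j,F_j}$, interpreted as $C_j^\star$, and $C_j \models F$ gives $C_j^\star \models F^\star$ by Lemma~\ref{lem:exp-monotone}.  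Composition is monotone, so a product of $m$ formulas each entailing $F^\star$ entails $(F^\star)^m$; a disjunction of formulas entailing $(F^\star)^{m_1}$ and $(F^\star)^{m_2}$ entails $(F^\star)^{\max(m_1,m_2)}$, using $(F^\star)^m \models (F^\star)^{m+1}$ (which holds because $\rightarrow_{F^\star}$ contains the identity relation).  The induction never stars a compound formula, so it never relies on any idempotence of $(-)^\star$ — a property that need not hold of an arbitrary over-approximate closure operator.

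Finally I would invoke the hypothesis.  Iterating $\wp(F^\star,\mp(F)) = \mp(F)$ and using $\wp(A\circ B,S)\equiv\wp(A,\wp(B,S))$ gives $\wp((F^\star)^m,\mp(F)) = \mp(F)$ for every $m \geq 0$.  Combining with $\tfsem{e_k}\models (F^\star)^{m_k}$ and antitonicity of $\wp$ in its first argument, $\mp(F) = \wp((F^\star)^{m_k},\mp(F)) \models \wp(\tfsem{e_k},\mp(F)) \models \wp(\tfsem{e_k},\mp(C_{i_k}))$.  Conjoining over $k$ yields $\mp(F) \models \mpsem{\opathexp{G}{s}} = \mp_{\textit{Phase}(P,\textit{mp})}(F)$.  (Alternatively, one can first apply Proposition~\ref{lem:ata-monotonicity} to relabel $G$, sending every self-loop to $F$ and every other edge to $F^\star$ — a sound weakening since $C_i \models F$ and $1^\TF \models F^\star$; then each $\omega$-subexpression evaluates directly to $\mp(F)$ and the step through $\mp(C_{i_k})$ disappears.)  The one genuine obstacle is thus the normal-form claim for $\opathexp{G}{s}$; the remainder is routine manipulation of $\wp$, $\mp$, and $(-)^\star$.
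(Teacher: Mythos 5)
Your proposal is correct and follows essentially the same route as the paper: reduce $\opathexp{G}{s}$ to a sum of canonical stem--body expressions whose interpretation under $\mpsem{-}$ is unchanged (the paper's Lemma~\ref{lem:path-exp-canonical-form}, which you rightly flag as the main bookkeeping obligation), then discharge each summand by iterating the hypothesis $\wp(F^\star,\mp(F))=\mp(F)$. The only difference is in the middle step and is minor: the paper takes exactly your parenthetical alternative, relabelling every letter to $F$ via Proposition~\ref{lem:ata-monotonicity}, whereas your main route establishes $\tfsem{e_k}\models (F^\star)^{m_k}$ directly and passes through $\mp(C_{i_k})$, which has the small advantage of not relying on idempotence of $(-)^\star$.
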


  \begin{restatable}[Monotonicity]{theorem}{PhaseMonotonicity} \label{thm:phase-monotonicity}
  Let $\mp$ be a monotone mortal precondition operator and let $P$ be a set of transition predicates.   Suppose that for any transition formula $F$, we have $\wp(F^\star,\mp(F)) = \mp(F)$.
  Then the mortal precondition operator $\mp_{\textit{Phase}(P,\textit{mp})}$ is monotone.
  \end{restatable}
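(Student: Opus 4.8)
The plan is to reduce monotonicity of $\mp_{\textit{Phase}(P,\textit{mp})}$ to the monotonicity result already established for the algebraic framework, namely Proposition~\ref{lem:ata-monotonicity}. Suppose $F_1 \models F_2$; I want $\mp_{\textit{Phase}(P,\textit{mp})}(F_2) \models \mp_{\textit{Phase}(P,\textit{mp})}(F_1)$. The difficulty is that the two phase transition graphs $G_1 = \textsf{phase-transition-graph}(F_1,P)$ and $G_2 = \textsf{phase-transition-graph}(F_2,P)$ need not have the same underlying graph structure: the $F$-invariant predicate set $I(F,P)$ can change as $F$ shrinks (more predicates may become invariant), the set of reachable cells can change, and the reduced-edge set depends on satisfiability of compositions $F_i \circ F_j$. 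So I cannot directly apply Proposition~\ref{lem:ata-monotonicity}, which assumes a fixed $f \in \oRegExp(E)$ and varies only the semantic function.

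The key step is therefore to build a \emph{common refinement}: instead of the reduced phase graphs, work with the \emph{unreduced} phase transition graphs over the \emph{full} predicate set $P$ (not just the $F$-invariant subset), i.e.\ a graph $\widehat{G}$ whose vertices are all $2^{|P|}$ sign-vectors over $P$ (plus $s$), with a self-loop on each cell labeled by that cell's conjunction of literals, $1^\TF$-labeled edges between every ordered pair of cells, and $1^\TF$-labeled edges from $s$ to every cell. This graph depends only on $P$, not on $F$. I would define two semantic functions $\elbl_1, \elbl_2$ on $\widehat{G}$: the self-loop on cell $X$ gets $F_i \wedge \bigwedge_{p\in X} p \wedge \bigwedge_{p\notin X}\lnot p$ (for $i=1,2$ respectively), every $1^\TF$-edge keeps its label $1^\TF$. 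Since $F_1 \models F_2$, each self-loop label under $\elbl_1$ entails its counterpart under $\elbl_2$, and the $1^\TF$ labels are literally identical, so $\elbl_1(e) \models \elbl_2(e)$ for every edge $e$. Now Proposition~\ref{lem:ata-monotonicity} applies to $\widehat{G}$ and any fixed $\omega$-path expression $f = \opathexp{\widehat{G}}{s}$, giving $\widehat{\mathcal{T}}^\omega_2\!\sem{f} \models \widehat{\mathcal{T}}^\omega_1\!\sem{f}$, which is exactly $\widehat{\mp}(F_2) \models \widehat{\mp}(F_1)$, where $\widehat{\mp}$ is the ``phase analysis over the unreduced full-predicate graph.''

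The remaining step — and the main obstacle — is to show that $\mp_{\textit{Phase}(P,\textit{mp})}(F)$ equals (is equivalent to) $\widehat{\mp}(F)$ for every $F$; equivalently, that passing from the unreduced full-$P$ graph to the reduced $I(F,P)$-graph does not change the interpreted value. This has three sub-parts. (i) \emph{Empty cells can be dropped}: if a cell's formula is unsatisfiable, its self-loop is interpreted as $\false$ and $\mp(\false)$ should be $\true$ (I would need the mild, already-assumed-style hypothesis that $\mp$ is sound and in particular $\mp(\false)=\true$, or argue that an unreachable/dead cell contributes nothing to any $\omega$-path's value); similarly unreachable cells contribute a conjunct of $\true$. (ii) \emph{Non-$F$-invariant predicates can be projected out}: splitting a cell along a non-invariant predicate $p$ of $P$ produces two sub-cells whose union (as transition formulas) is the original; I would argue, using the definition of the $\MP$ operators (choice $=$ conjunction, concatenation $=$ $\wp$), that the contribution is unchanged — this is where I rely on the hypothesis $\wp(F^\star,\mp(F)) = \mp(F)$, exactly as in Theorem~\ref{thm:phase-usefulness}, to handle the interaction of $\mp$ with the extra $1^\TF$-edges introduced by refinement. (iii) \emph{Reduction of edges preserves the interpretation}: replacing direct $1^\TF$-edges by $1^\TF$-labeled \emph{paths} (the ``reduced'' condition) does not change $\mpsem{-}$ because $1^\TF$ is a two-sided identity for $\cdot^\TF$ and for $\cdot^\MP$ (the latter since $\wp(1^\TF,S)\equiv S$), so any $\omega$-regular expression over the reduced graph and its expansion over the unreduced graph interpret to equivalent $\MP$-elements. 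Once this equivalence $\mp_{\textit{Phase}(P,\textit{mp})}(F) \equiv \widehat{\mp}(F)$ is in hand for both $F_1$ and $F_2$, the monotonicity of $\widehat{\mp}$ from the previous paragraph transfers immediately. I expect sub-part (ii) — controlling how $\mp$ behaves under cell refinement, which genuinely needs the $\wp(F^\star,\mp(F))=\mp(F)$ hypothesis — to be the technical crux, and it is presumably the same lemma that underlies the proof of Theorem~\ref{thm:phase-usefulness}, so I would factor it out and reuse it.
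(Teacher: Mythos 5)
Your reduction to Proposition~\ref{lem:ata-monotonicity} hinges on the claim that $\mp_{\textit{Phase}(P,\textit{mp})}(F)$ is equivalent to the interpretation over a \emph{fixed} graph $\widehat{G}$ built from \emph{all} of $P$, and that claim is where the argument breaks --- specifically your sub-part~(ii). Splitting a cell along a predicate that is \emph{not} $F$-invariant genuinely changes the semantics, because invariance is exactly what guarantees that an infinite $F$-computation eventually settles into a single cell. Concretely, your $\widehat{G}$ has $1^\TF$-labeled edges between every ordered pair of cells, so it contains cycles consisting solely of $1^\TF$-edges; the $\omega$-path expression for $\widehat{G}$ then contains $\omega$-terms whose base transition formula is entailed by $1^\TF$ (take the inter-cell cycle with zero self-loop iterations), and no state is mortal for a relation containing the identity, so any sound $\mp$ must map that base to an unsatisfiable formula. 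These conjuncts collapse $\widehat{\mp}(F)$ to (essentially) $\false$, so the equivalence with $\mp_{\textit{Phase}(P,\textit{mp})}(F)$ fails. If instead you orient the inter-cell $1^\TF$-edges acyclically to avoid this (mimicking Algorithm~\ref{alg:reduced-phase-graph}), then an infinite computation that alternates forever between two sub-cells obtained by splitting on a non-invariant predicate is no longer represented by any $\omega$-path of $\widehat{G}$, and $\widehat{\mp}(F)$ can become strictly weaker than $\mp_{\textit{Phase}(P,\textit{mp})}(F)$ (indeed unsound as a mortal precondition): e.g.\ for $F \defeq x \in \{0,1\} \land x' = 1-x$ and direction predicates, no predicate is $F$-invariant, yet each sign-vector sub-cell terminates on its own. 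So neither variant of the common graph yields the equivalence you need, and the hypothesis $\wp(F^\star,\mp(F))=\mp(F)$ does not repair this --- it controls stems of $F$-entailed formulas, not the loss or gain of alternating behaviors across non-invariant splits. (Your sub-part~(iii) is also asserted too strongly: reduction preserves soundness, not exact equality of interpretations.)

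The paper handles the structural mismatch between $\textit{PTG}(F_1,P)$ and $\textit{PTG}(F_2,P)$ differently: since $F_1 \models F_2$ implies $I(F_2,P) \subseteq I(F_1,P)$, the partition $\mathcal{P}(F_1,P)$ refines $\mathcal{P}(F_2,P)$ \emph{only along invariant predicates}, and each cell $c$ of $\mathcal{P}(F_1,P)$ projects to the unique cell $\textit{proj}(c)$ of $\mathcal{P}(F_2,P)$ with $c \models \textit{proj}(c)$. One then decomposes $\opathexp{G_1}{s_1}$ into canonical terms $c_1^*\cdots c_k^* c_{k+1}^\omega$ (Lemma~\ref{lem:path-exp-canonical-form}), uses letter-wise monotonicity of $\mp$ to replace each $c_j$ by $\textit{proj}(c_j)$, and invokes a key lemma (Lemma~\ref{lem:phase-mono-phase-path-lemma}, proved via canonical-form inclusion and the hypothesis $\wp(F^\star,\mp(F))=\mp(F)$) stating that $\mpsem{\opathexp{G_2}{s_2}}$ entails the canonical expression of \emph{any} chain of $G_2$-cells whose consecutive compositions are satisfiable. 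In other words, the comparison is done at the level of cell sequences and path expressions rather than by forcing both analyses onto one common graph; to salvage your plan you would need to replace the full-predicate refinement with the $\textit{proj}$ map and prove something playing the role of that key lemma.
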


  \begin{example} \label{ex:phase}
    Consider the loop in Figure~\ref{fig:phase}.  
    The loop does not always terminate, so $\mp_{\textit{LLRF}}$ (Example~\ref{ex:LLRF}) computes a trivial mortal precondition ($x \leq 0$). 
  However, Algorithm~\ref{alg:reduced-phase-graph} discovers a phase structure for this loop: 
  once it execute the \textbf{then} branch,
  it cannot ever execute the \textbf{else} branch;
  the opposite is also true.   Within the \textbf{then} branch,
  the variable $x$ may increase (or remain constant) for some transient period, but then must ultimately decrease.  This structure is depicted in the phase transition graph in Figure~\ref{fig:phase-ex-phase-graph}.

  Although the original loop has no linear lexicographic ranking
  function, the two phases in the \textbf{then} branch do: $-y$ is a
  ranking function for phase $a$ and $x$ is ranking function for phase
  $b$.  The \textbf{else} branch does not, and so $\mp_{\textit{LLRF}}$
  generates a mortal precondition $x \leq 0 \lor f \geq 0$ (which is
  the trivial mortal precondition for phase $m$, but is a precise
  description of the mortal states of the original loop).  Thus, by computing the mortal precondition of the loop using its phase graph rather than applying the mortal precondition operator to the loop itself, we get a
  weaker mortal precondition.
  \end{example}
  
  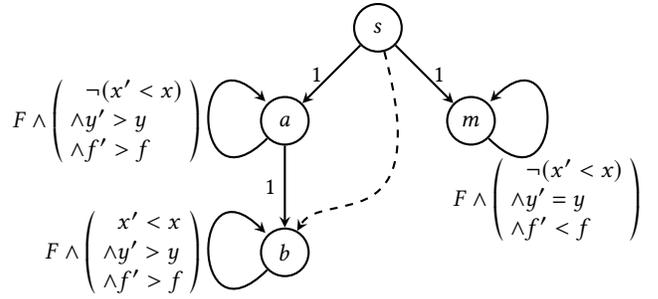
\begin{figure}
  \begin{minipage}{\linewidth}
  \begin{subfigure}{0.5\linewidth}
  \small
  \begin{lstlisting}[style=base,numbers=left,xleftmargin=2em]
while (x > 0):
  if ( f >= 0 ):
    x := x - y
    y := y + 1
    f := f + 1
  else
    x := x + 1
    f := f - 1\end{lstlisting}
  \caption{A loop with phase structure}
  \end{subfigure}
    \begin{subfigure}{0.45\linewidth}
    \small
    \[ \begin{array}{l@{}l}
      & x > 0\\
      \land & \tightparen{ \begin{array}{l@{}l}
      & \tightparen{ \begin{array}{l@{}l}
      & f \geq 0\\
       \land & x' = x - y\\
       \land & y' = y + 1\\
       \land & f' = f + 1\end{array}
      }\\
      \lor & \tightparen{ \begin{array}{l@{}l}
      & f < 0\\
      \land & x' = x + 1\\
      \land & f' = f - 1\\
      \land & y' = y \end{array}}
    \end{array}}
    \end{array} \]
      \caption{Loop transition formula, $F$}
    \end{subfigure}
  \end{minipage}
  \begin{subfigure}{\columnwidth}
    \centering
      \begin{tikzpicture}[thick,>=stealth,
        v/.style={circle,draw,minimum height=18pt},
        node distance=1.75cm
        ]
      \small
    \node [v] (s) {$s$};
    \node [v,below left of=s] (a) {$a$};
    \node [v,below right of=s] (m) {$m$};
    \node [v,below of=a] (b) {$b$};
    \draw (s) edge[->] node[left]{$1$} (a);
    \path[->] (a) edge [in=135,out=-135,looseness=3,loop] node[left] {$F \land \tightparen{\begin{array}{l@{\null}l} 
       & \lnot (x' < x)\\
      \land & y' > y \\  
      \land & f' > f
      \end{array}}$}
    (); 
    \draw (a) edge[->] node[left]{$1$} (b);
    \path[->] (b) edge [in=135,out=-135,looseness=2,loop] node[left] {$F \land \tightparen{\begin{array}{l@{\null}l} 
       & x' < x\\
      \land & y' > y \\  
      \land &  f' > f
      \end{array}}$} 
    (); 
    \draw (s) edge[->] node[right]{$1$} (m);
    \path[->] (m) edge [in=45,out=-45,looseness=2,loop] node[below=10pt] {$F \land \tightparen{\begin{array}{l@{\null}l} 
       & \lnot (x' < x)\\
      \land & y' = y\\ 
      \land & f' < f
      \end{array}}$} 
    (); 
    \draw[dashed,->] (s) .. controls +(-75:3cm) and +(60:1cm) .. (b);
    \end{tikzpicture}
    \caption{Phase transition graph for $F$.
    Solid edges form a reduced phase transition graph. 
     \label{fig:phase-ex-phase-graph}
     }
  \end{subfigure}
  \caption{Analysis of a loop with a phase structure
   \label{fig:phase}}
  \end{figure}

\subsection{Combining Mortal Precondition Operators}

State-of-the-art termination analyzers often use a portfolio of techniques to prove termination.  Heuristics for selecting among appropriate techniques in a portfolio can be another source of unpredictable (non-monotone) behavior.  A feature of our framework is that it makes it easy to combine the strengths of different termination analyses without such heuristics.

Suppose that $\mp_1$ and $\mp_2$
are mortal precondition operators.  Then we can combine $\mp_1$ and
$\mp_2$ into a single mortal precondition operator $\mp_1 \otimes \mp_2$ by defining
\[ (\mp_1 \otimes \mp_2)(F) \defeq \mp_1(F) \lor \mp_2(F)\ ;\]
if $\mp_1,\mp_2$ are
monotone, then so too is $\mp_1 \otimes mp_2$.

In fact, monotonicity allows us to do better.  Define a second combinator by
\[ (\mp_1 \ltimes \mp_2)(F) \defeq \mp_2(F \land \lnot \mp_1(F))\ .\]
The intuition is that $\mp_1 \ltimes \mp_2$ is an \textit{ordered} product, which asks $\mp_2$ only to find a mortal precondition for the region of the state space that $\mp_1$ cannot prove to be mortal.  If we suppose that for all $F$ we have $\textit{Pre}(F) \models \mp_2(F)$, then we have (for all $F$)
\[ (\mp_1 \otimes \mp_2)(F) \models (\mp_1 \ltimes \mp_2)(F) \ .\]

\section{Evaluation} \label{sec:evaluation}

Our tool ComPACT (\textbf{Com}positional and \textbf{P}redictable
\textbf{A}nalysis for \textbf{C}onditional \textbf{T}ermination)
implements the algebraic program analysis framework described in Sections~\ref{sec:pathexp} and~\ref{sec:ata}), two mortal precondition operators
$\mp_{\textit{LLRF}}$ (Example~\ref{ex:LLRF}) and $\mp_{\exp}$
(Section~\ref{sec:mp-exp}), and the combinator $\mp_{\textit{Phase}}$
(Section~\ref{sec:mp-phase-analysis}). ComPACT's default mortal precondition operator is
$\mp_{\textit{Phase}(P,\mp_{\textit{LLRF}} \ltimes \mp_{\exp})}$ (where $P$ is a set of direction predicates, $P \defeq \set{ x \bowtie x' : x \in \Var,
    \bowtie \in \{<,=,>\} }$).
 We compare ComPACT
against Ultimate Automizer~\cite{TACAS:DietschHNSS20},
2LS~\cite{TOPLAS:CDKSW2018}, and CPAchecker~\cite{CPACheckerTermination}, the top
three placing competitors in the termination category of the Competition on Software Verification (SV-COMP) 2020\footnote{\href{https://sv-comp.sosy-lab.org/2020}{https://sv-comp.sosy-lab.org/2020}}.  We also
compare with Termite \cite{PLDI:GMR2015}, which implements a complete procedure
for linear lexicographic ranking function (LLRF) synthesis, to
evaluate the effectiveness of our algebraic framework. With the exception of
2LS, all tools treat variables as unbounded integers.

\begin{table*}
\centering
\caption{Termination verification benchmarks; time in seconds.     \label{tab:all-tools-all-benchmarks}}
\begin{tabular}{@{}lc|c@{}r|c@{}r|c@{}r|c@{}r|c@{}r@{}}
\toprule
 & & \multicolumn{2}{c|}{ComPACT} & \multicolumn{2}{c|}{2LS} & \multicolumn{2}{c|}{UAutomizer} & \multicolumn{2}{c|}{CPAchecker} & \multicolumn{2}{c}{Termite}\\
 benchmark & \#tasks & \#correct & time & \#correct & time & \#correct & time & \#correct & time & \#correct & time\\\midrule
termination & 171 & 141 & \textbf{81.7} & 115 & 1925.8 & \textbf{161} & 4684.8 & 126 & 13434.6 & 78 & 937.5\\
bitprecise & 169 & 115 & \textbf{154.3} & 111 & 1911.8 & \textbf{122} & 26596.5 & 92 & 32755.8 & 4 & 693.8\\
recursive & 42 & \textbf{31} & \textbf{49.6} & -- & -- & 30 & 2073.8 & 23 & 710.1 & -- & --\\
polybench & 30 & \textbf{30} & 93.8 & 0 & 7944.2 & 0 & 16285.8 & 0 & 4397.3 & 26 & \textbf{36.7}\\
\midrule
Total & 412  & \textbf{317} & \textbf{379.5} & 226 & 11781.8 & 313 & 49640.8 & 241 & 51297.8 & 108 & 1668.0\\
\bottomrule
\end{tabular}
\end{table*}

\begin{table*}
\centering
\caption{Contributions of different components implemented in ComPACT; time in seconds.     \label{tab:compact-ablation}}
\begin{tabular}{@{}lc|c@{}r|c@{}r|c@{}r|c@{}r|c@{}r@{}}
\toprule
 & & \multicolumn{2}{c|}{\multirow{2}{*}{ComPACT}} & \multicolumn{4}{c|}{Using $\mp_{\textit{LLRF}}$ as base operator } & \multicolumn{4}{c}{Using $\mp_{\exp}$ as base operator }  \\
& & \multicolumn{2}{c|}{} & \multicolumn{2}{c|}{LLRF only} & \multicolumn{2}{c|}{LLRF + phase}    & \multicolumn{2}{c|}{exp only} & \multicolumn{2}{c}{exp + phase}     \\
benchmark &  \#tasks & \#correct & time & \#correct & time & \#correct & time & \#correct & time & \#correct & time\\\midrule
termination & 171 & \textbf{141} & 81.7 & 122 & \textbf{65.2} & 138 & 70.3 & 112 & 72.2 & 130 & 92.4\\
bitprecise & 169 & \textbf{115} & 154.3 & 105 & \textbf{134.5} & \textbf{115} & 142.6 & 103 & 175.7 & 113 & 240.6\\
recursive & 42 & \textbf{31} & 49.6 & 15 & \textbf{31.2} & 22 & 38.7 & 24 & 44.7 & \textbf{31} & 78.9\\
polybench & 30 & \textbf{30} & 93.8 & \textbf{30} & \textbf{60.8} & \textbf{30} & 93.4 & \textbf{30} & 86.5 & \textbf{30} & 565.0\\
\midrule
Total & 412  & \textbf{317} & 379.5 & 272 & \textbf{291.7} & 305 & 345.0 & 269 & 379.1 & 304 & 976.9\\
\bottomrule
\end{tabular}
   
\end{table*}

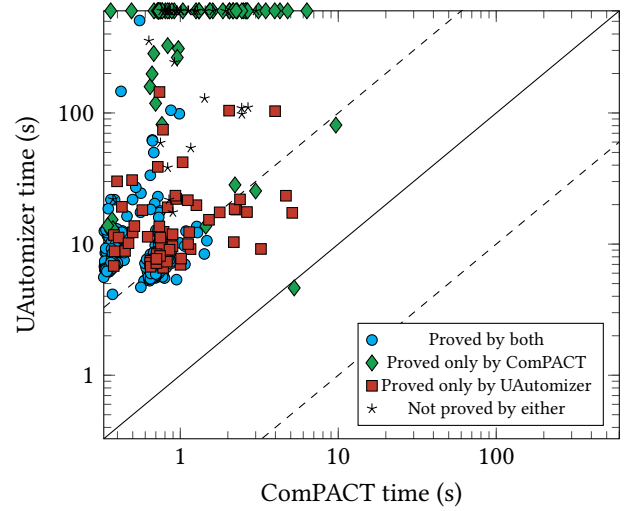
\begin{figure}
  \begin{tikzpicture}[scale=1]
  \begin{axis}[
      xlabel={ComPACT time (s)},
      ylabel={UAutomizer time (s)},
      xmax=601.0363106550003,
      ymax=601.0363106550003,
      xmin=0.3278084089999993,
      ymin=0.3278084089999993,
      xmode=log,
      ymode=log,
      legend pos=south east,
      mark size=2pt,
      log ticks with fixed point,
      legend style={nodes={scale=0.7, transform shape}}
    ]
    \addplot [only marks, style={solid, fill=cyan}, mark=*] table {scatter_ComPACT_uautomizer_tt.dat};
    \addlegendentry{Proved by both};
    \addplot [only marks, style={solid, fill=green}, mark=diamond*, mark size=3pt] table {scatter_ComPACT_uautomizer_tf.dat};
    \addlegendentry{Proved only by ComPACT};
    \addplot [only marks, style={solid, fill=red}, mark=square*] table {scatter_ComPACT_uautomizer_ft.dat};
    \addlegendentry{Proved only by UAutomizer};
    \addplot [only marks, style={solid}, mark=star ] table {scatter_ComPACT_uautomizer_ff.dat};
    \addlegendentry{Not proved by either};

    \addplot[mark=none,domain={0.3278084089999993:601.0363106550003}] {x};
    \addplot[mark=none,domain={0.3278084089999993:601.0363106550003},dashed] {10*x};
    \addplot[mark=none,domain={0.3278084089999993:601.0363106550003},dashed] {0.1*x};
  \end{axis}
\end{tikzpicture}
\caption{ComPACT vs. UAutomizer performance}
\label{fig:scatter-plot-performance}
\end{figure}

\paragraph{Environment} We ran
all experiments in a virtual machine with Ubuntu 18.04
and kernel version $5.3.0-62$, with a single-core Intel Core i7-10710U CPU @ 1.10GHz and 8GB of RAM.
All tools were run with a time limit of $10$ minutes. 

\paragraph{Benchmark design} 
We tested on a suite of 413 programs divided into 4 categories.
The \texttt{termination}, \texttt{bitprecise}, and \texttt{recursive} suites contain small programs with challenging termination arguments, while the \texttt{polybench}\footnote{\href{http://web.cs.ucla.edu/~pouchet/software/polybench}{http://web.cs.ucla.edu/{\textasciitilde}pouchet/software/polybench}} suite contains moderately sized kernels for numerical algorithms which have relatively simple termination arguments.
The \texttt{termination} category consists of the \textit{non-recursive, terminating} tasks in the \texttt{Termination-MainControlFlow} suite from SV-COMP.  The \texttt{recursive} category consists of the \textit{recursive, terminating} tasks from the \texttt{recursive} directory and \texttt{Termination-MainControlFlow}.      The \texttt{bitprecise} category consists of the same tasks as the \texttt{termination} category, except that bounded integer semantics 
are encoded into unbounded integer semantics for a more accurate comparison with 2LS (minus one task, which Ultimate Automizer was able to prove to be non-terminating).  Since signed overflow is undefined in C, proving termination necessitates proving absence of signed overflow.  The encoding was
 performed by using \texttt{goto-instrument} \cite{ESP:AKNT2013,CAV:AKT2013} to instrument the code with checks for signed overflow that enter an infinite loop on failure.

\paragraph{How does ComPACT compare with the state-of-the-art?}
A comparison of all tools across all test suites is shown in
Table~\ref{tab:all-tools-all-benchmarks}. Ultimate Automizer proves the most tasks in the \texttt{termination} and \texttt{bitprecise} suites, but
uses significantly more time than ComPACT (Figure~\ref{fig:scatter-plot-performance}).
ComPACT proves termination of the most tasks in the
\texttt{recursive} and \texttt{polybench} suite (note
that 2LS and Termite do not handle recursive programs, so we exclude them from
the \texttt{recursive} suite).
These results suggest that ComPACT is able to match or even exceed
the capabilities of state-of-the-art termination provers while
providing stronger behavioral guarantees.


\paragraph{How does each component contribute to ComPACT's capability?}
ComPACT implements 
two mortal precondition operators, LLRF-based mortal precondition operator
$\mp_{\textit{LLRF}}$ (LLRF) and transitive closure based $\mp_{\exp}$ (exp),
and the phase analysis (phase) combinator. 
We evaluate how each component contributes to ComPACT's ability
to prove termination in Table~\ref{tab:compact-ablation}.
First we notice that there is a large overlap between the tasks solved by 
$\mp_{\textit{LLRF}}$ and $\mp_{\exp}$.
This can be attributed to the fact that both are sufficient to prove termination of loops with linear ranking functions,
which is the case for the majority of the tasks in our suite.\footnote{We confirmed this fact by running the experiments with a modified $\mp_{\textit{LLRF}}$ that finds only linear ranking functions: it succeeds on 258 tasks without phase analysis and 292 tasks with phase analysis.}
The relative strength of $\mp_{\textit{LLRF}}$ is on loops with complex control structure (e.g., loops whose termination relies on precise reasoning about multiple paths through its body); the relative strength of $\mp_{\exp}$ is on loops with non-convex guards (e.g., recursive functions where the recursive case is guarded by a disequality).  
Theorem~\ref{thm:phase-usefulness} implies that the set of tasks that can be proved with phase analysis is a super-set of those that can be proved without; our experimental results show that the inclusion is strict for both configurations.
These results above suggest that the algebraic framework can be successfully applied using a variety of different mortal precondition operators, and that different operators can be profitably combined.

\paragraph{Impact of compositionality and monotonicity}
  The algebraic framework ``lifts'' a termination analysis for
  transition formulas to whole programs.  Comparing the LLRF column of Table~\ref{tab:compact-ablation} with the Termite results in
  Table~\ref{tab:all-tools-all-benchmarks} demonstrates the impact of this framework: both columns implement the same base analysis, but lift the analysis to whole programs in different ways.
  This comparison demonstrates the advantage of compositional summarization of nested loops,
  and also suggests that precision loss due to compositionality, i.e., synthesizing LLRFs
  without precise supporting invariants, is not substantial.

  A consequence of compositionality is that ComPACT has relatively
  stable running time across all tasks and scales to the larger tasks in the
  \texttt{polybench} suite.
  This suite contains program with loops
  that have complex control flow (e.g., nested loops) but
  simple termination arguments, in particular, \cinline{for} loops like
  \begin{lstlisting}[style=base,numbers=left,xleftmargin=2em]
  for(int i = 0; i < n; i++) { ... }\end{lstlisting}
  where the loop body does not contain instructions that decrease
  \cinline{i}.  ComPACT is assured to prove termination of such loops as
  a consequence of compositionality and
  monotonicity.  The other tools on our comparison, even those that employ
  complete procedures for linear ranking function synthesis, do not make 
  such guarantees and may get stuck in the logic of the loop body.  For
  example, ComPACT proves termination of the following loop in 0.3
  seconds:
  \begin{lstlisting}[style=base,numbers=left,xleftmargin=2em]
  for(int i = 0; i < 4096; i++)
    for(int j = 0; j < 4096; j++)
      i = i;\end{lstlisting}
  Ultimate Automizer and CPAchecker exceed the 10 minute time limit on
  this loop, and 2LS and Termite return ``unknown''.

\section{Related Work} \label{sec:related}

\paragraph{Summarization for termination}

At a high level, our procedure proves that a loop terminates by first
computing a transition formula that summarizes the behavior of its
body, and then performing termination analysis on the transition
formula.  There are several approaches to termination analysis that
similarly apply summarization to handle nested loops and procedure
calls
\cite{POPL:BCCDO2007,SAS:ZGSV2011,TACAS:TSWK2011,TOPLAS:CDKSW2018}.
There are various ways of formulating such an analysis.
\citet{POPL:BCCDO2007} generates loop body summaries using a program
transformation and a conventional state-based invariant generator
(e.g., polyhedra analysis). \citet{TACAS:TSWK2011} takes an approach
more similar to ours: summarization is an operation that replaces a
subgraph of a control flow graph by edges that summarize that
subgraph, and it is applied recursively to summarize nested loops.  We
take an algebraic view, inspired by
\citet{JACM:Tarjan1981,JACM:Tarjan1981b}, in which we generate an
$\omega$-regular expression representing the paths through a program
and then define the analysis by recursion on that expression.

The contribution of Section~\ref{sec:ata} is to provide a unified
framework in which these analyses can be understood.
In view of the algebraic framework, prior work can be understood in
terms of (1) the method used to summarize loops (i.e, the $(-)^*$
operator), and (2) the method used to prove termination (i.e, the
$(-)^\omega$ operator).
A concrete benefit of our framework in light of this prior work is
that our approach handles recursive procedures and irreducible control
flow, which are not supported by some of the prior approaches (including 2LS)
\cite{SAS:ZGSV2011,TACAS:TSWK2011,TOPLAS:CDKSW2018}.

\paragraph{Complete ranking function synthesis}

A ranking function synthesis algorithm is \textit{complete} if it is
guaranteed to find a ranking function for a loop if one exists.  Such
techniques are related to our work in that we sought a termination
analysis for which we can make guarantees about its behavior.
Complete ranking function synthesis algorithms exist for a variety of
classes of ranking functions, such as linear \cite{VMCAI:PR2004},
linear-lexicographic \cite{CAV:BMS2005}, nested \cite{TACAS:LH2014},
multi-phase \cite{CAV:BG2017}, \dots).  These algorithms apply only to
very restricted classes of loops, and in particular there are no
complete ranking function synthesis algorithms that operate on nested
loops or recursive procedures.  The seminal work on {\sc Terminator}
gives a general method for applying complete ranking function
synthesis algorithms to general programs by using them in a
counter-example guided refinement loop \cite{PLDI:CPR2006}.  Our
framework of \textit{algebraic termination analysis} provides another
general method, which allows the \textit{completeness} guarantee
 to carry over to a \textit{monotonicity}
guarantee for the whole analysis.

\paragraph{Conditional termination}

In a compositional setting it is natural to formulate the termination
problem as the problem of finding a sufficient condition under which a
fragment of code is guaranteed to terminate (i.e., a \textit{mortal
  precondition}), rather than the decision problem of universal
termination.  
Approaches to conditional termination include quantifier
elimination \cite{CAV:CGLRS2008}, abstract interpretation
\cite{POPL:CC2012,SAS:Urban2013,ESOP:UM2014,SAS:UM2014}, abductive
inference \cite{PLDI:LQC2015}, conflict-driven learning
\cite{CAV:DU2015}, incremental backwards reasoning \cite{CAV:GG2013}, and
constraint-based methods \cite{TACAS:BBLORR2017}.  Our approach is
unique in that we provide a conditional termination analysis that is both
monotone and can be applied to a general program model.

\citet{TACAS:BIK2012} is closest to our work in that they give an
algorithm for which there are guarantees about its behavior beyond
soundness, albeit for a limited class of loops.  They give a technique
for synthesizing the set of mortal states of a loop,
provided a logical formula representing the exact transitive closure
of that loop in a logical theory that admits quantifier elimination.
In Section~\ref{sec:mp-exp}, we use a related idea to \textit{under-approximate} the mortal states of a loop using an
\textit{over-approximation} of the transitive closure of the loop.

\paragraph{Control flow refinement}

Section~\ref{sec:mp-phase-analysis} defines a mortal precondition
combinator that improves the precision of a given mortal precondition
operator by exposing phase structure in loops.  There are several
related approaches for improving analysis results by program
transformation
\cite{TOPLAS:RM07,EMSOFT:BSIG09,PLDI:GJK09,CAV:SDDA2011,APLAS:FH14,POPL:CBKR2019,CAV:FWSS2019}.
In particular, the \textit{transition invariant predicates} from
Section~\ref{sec:mp-phase-analysis} are essentially a
transition-predicate analogue of the (state-based) \textit{splitter
  predicates} from \cite{CAV:SDDA2011}; our method for checking
whether a candidate transition predicate is invariant and partitioning
the transition space are new.  Cyphert et al.'s work
\cite{POPL:CBKR2019} on refinement of path expressions is closest to
ours in that it is based on an algebraic program analysis and 
provides a guarantee of improvement.  The refinement strategy is based
on altering the path expression algorithm, whereas phase
analysis alters the algebra of the analysis.  Operating at the algebra
level enables us to formulate and prove a monotonicity theorem.

\section{Conclusion} \label{sec:conclusion}

This paper presents a termination analysis that is both
\textit{compositional} and \textit{monotone}.  We extended
\citet{JACM:Tarjan1981,JACM:Tarjan1981b}'s path expression method from
safety analysis to termination analysis, by using $\omega$-regular
expressions to represent languages of infinite paths and
$\omega$-algebras to interpret those expressions.  One direction for
future work is to apply this framework to other analyses that require
reasoning about infinite and potentially infinite paths, such as
non-termination analysis, resource bound analysis, and verification of
linear temporal properties.

\begin{acks}
This work was supported in part
by the NSF under grant number 1942537 and
by ONR under grant N00014-19-1-2318.
Opinions, findings, conclusions, or recommendations
expressed herein are those of the authors and do not necessarily
reflect the views of the sponsoring agencies.
\end{acks}

\bibliography{references}
\appendix

\section{Proofs}

First, we observe that $\MP$ and $\TF$ satisfy the following algebraic
laws (where formulas are considered to be equal if they are logically
equivalent):
\begin{lemma} \label{lem:idempotent-semiring}
  $\tuple{\TF, +^\TF, \cdot^\TF, 0^\TF, 1^\TF}$ is an idempotent semiring:
  \begin{itemize}
  \item $\cdot^\TF$ is associative and has $1^\TF$ as its identity
  \item $+^\TF$ is associative, commutative, idempotent, and has $0^\TF$ as its identity
  \item $\cdot^\TF$ distributes over $+^\TF$ (on the left and right)
  \item $F \cdot^\TF 0^\TF = 0^\TF \cdot^\TF F = 0^\TF$ for any $F$
  \end{itemize}
\end{lemma}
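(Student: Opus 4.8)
The plan is to transport all four bullets from the classical idempotent semiring of binary relations. Introduce the semantic map $\rho$ that sends a transition formula $F$ to the relation $\rightarrow_F$ on $\State$. Since $\Var$ and $\Var'$ are disjoint, the valuations over $\Var\cup\Var'$ are exactly the transitions $[s,s']$ for pairs of states $s,s'$, and distinct pairs yield distinct transitions; hence $\rho(F_1)=\rho(F_2)$ iff $F_1$ and $F_2$ are satisfied by the same valuations, i.e. iff $F_1\equiv F_2$. Because we regard transition formulas as equal precisely when they are logically equivalent, $\rho$ descends to an injection on the quotient, so it suffices to verify each identity after applying $\rho$.

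Next I would check that $\rho$ carries the $\TF$-operations to the corresponding relational operations. Three of these are immediate: $\rho(0^\TF)=\rho(\false)=\emptyset$; $\rho(F_1 +^\TF F_2)=\rho(F_1\lor F_2)=\rho(F_1)\cup\rho(F_2)$; and $\rho(1^\TF)$ is the diagonal relation $\set{(s,s):s\in\State}$, directly from $1^\TF=\bigwedge_{x\in\Var}x'=x$. The remaining case, $\rho(F_1\cdot^\TF F_2)=\rho(F_1)\mathbin{;}\rho(F_2)$ where $\mathbin{;}$ denotes relational composition, is the one substantive step: unfolding $F_1\circ F_2=\exists\Var''.\,F_1[\Var'\mapsto\Var'']\land F_2[\Var\mapsto\Var'']$ and using the standard semantics of (parallel) substitution and existential quantification over the fresh copy $\Var''$, one shows that $[s,s']\models F_1\circ F_2$ iff there is a state $s''$ with $[s,s'']\models F_1$ and $[s'',s']\models F_2$.

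Finally, I would invoke the well-known fact that for any set $X$ the structure $\tuple{\mathcal{P}(X\times X),\cup,\mathbin{;},\emptyset,\Delta_X}$ is an idempotent semiring: $\mathbin{;}$ is associative with unit the diagonal $\Delta_X$, $\cup$ is an idempotent commutative monoid with unit $\emptyset$, $\mathbin{;}$ distributes over $\cup$ on both sides, and $\emptyset$ annihilates $\mathbin{;}$ on both sides. Each of the four bullet points corresponds under $\rho$ to one of these laws applied to $\rho(F_1),\rho(F_2),\rho(F_3)$ (the relevant relational expressions landing in the image of $\rho$, which contains $\emptyset$ and $\Delta_X$ and is closed under $\cup$ and $\mathbin{;}$, composition being expressible via an existential quantifier); since $\rho$ is injective on the quotient, equality of the images yields equivalence of the corresponding $\TF$-formulas, establishing the lemma. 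I do not anticipate any real difficulty here: the only part beyond bookkeeping is the substitution/quantifier reasoning for the composition case, which is routine once the naming convention for $\Var''$ is pinned down.
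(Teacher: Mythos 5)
Your proposal is correct. The paper itself states this lemma without proof, treating it as a routine observation about formulas-modulo-logical-equivalence; the intended justification is direct verification of each equivalence by unfolding the definitions of $\lor$, $\circ$, $\false$, and $\bigwedge_{x\in\Var}x'=x$. Your transport argument is a clean way to organize exactly that verification: the map $F\mapsto{\rightarrow_F}$ is well-defined and injective on equivalence classes (every valuation over $\Var\cup\Var'$ is $[s,s']$ for a unique pair of states, so two transition formulas with the same relation are satisfied by the same valuations), it sends $0^\TF,1^\TF,+^\TF,\cdot^\TF$ to $\emptyset$, the diagonal, union, and relational composition respectively, and the semiring laws are then inherited from the standard idempotent semiring of binary relations on $\State$. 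The one step you rightly single out as substantive---that $[s,s']\models F_1\circ F_2$ iff there exists $s''$ with $[s,s'']\models F_1$ and $[s'',s']\models F_2$, via the semantics of substitution into the fresh copies $\Var''$ and existential quantification---is exactly the content the paper's direct route would also have to check, so the two approaches coincide in substance; yours merely packages the bookkeeping as a homomorphism-plus-injectivity argument rather than four separate equivalence checks.
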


\begin{lemma} \label{lem:module}
 $\MP$ is a module over $\TF$:
  \begin{itemize}
  \item $+^\MP$ is associative, commutative, and idempotent
  \item $F .^\MP (S_1 +^\MP S_2) = (F .^\MP S_1) +^\MP (F .^\MP S_2)$
  \item $(F_1 +^\TF F_2) .^\MP S = (F_1 .^\MP S) +^\MP (F_2 .^\MP S)$
  \item $(F_1 \cdot^\TF F_2) .^\MP S = F_1 .^\MP (F_2 .^\MP S)$
  \item $1^\TF .^\MP S = S$
  \end{itemize}
\end{lemma}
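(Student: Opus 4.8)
The plan is to dispatch the first bullet trivially and then verify the four identities about $\wp$, three of them by short logical manipulations and one (multiplicativity) semantically. Since $+^\MP$ is literally logical conjunction, its associativity, commutativity, and idempotence modulo logical equivalence are inherited directly from those of $\land$, which settles the first bullet. For the remaining bullets it is convenient to keep in mind the transition-system reading of $\wp$: $s \models \wp(F,S)$ iff every $s'$ with $s \rightarrow_F s'$ satisfies $S$; and recall (per the convention preceding Lemma~\ref{lem:idempotent-semiring}) that two formulas count as equal iff they have the same satisfying valuations.

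The two distributivity laws and the identity law I would prove syntactically. For $F \cdot^\MP (S_1 +^\MP S_2) = (F \cdot^\MP S_1) +^\MP (F \cdot^\MP S_2)$: substitution commutes with $\land$, implication distributes over a conjunctive consequent, and $\forall$ commutes with $\land$, so $\wp(F, S_1 \land S_2) \equiv \forall \Var'.\, F \Rightarrow (S_1[\Var \mapsto \Var'] \land S_2[\Var \mapsto \Var']) \equiv \wp(F,S_1) \land \wp(F,S_2)$. For $(F_1 +^\TF F_2) \cdot^\MP S = (F_1 \cdot^\MP S) +^\MP (F_2 \cdot^\MP S)$: use $(F_1 \lor F_2) \Rightarrow C \equiv (F_1 \Rightarrow C) \land (F_2 \Rightarrow C)$ together with $\forall$ over $\land$. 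The identity law $1^\TF \cdot^\MP S = S$ follows from the one-point rule: the antecedent $\bigwedge_{x \in \Var} x' = x$ forces $\Var'$ to equal $\Var$, so $\forall \Var'.\, (\bigwedge_{x} x' = x) \Rightarrow S[\Var \mapsto \Var']$ collapses to $S$.

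The one bullet requiring genuine care is multiplicativity, $(F_1 \cdot^\TF F_2) \cdot^\MP S = F_1 \cdot^\MP (F_2 \cdot^\MP S)$, i.e.\ $\wp(F_1 \circ F_2, S) \equiv \wp(F_1, \wp(F_2, S))$. A direct syntactic proof is awkward, since it must juggle the existential witness $\Var''$ from the definition of $\circ$, the $\forall \Var'$ from $\wp$, and nested substitutions, all while keeping intermediate variable copies fresh to avoid capture. I would instead argue semantically: $s \models \wp(F_1 \circ F_2, S)$ iff every $t$ with $s \rightarrow_{F_1 \circ F_2} t$ satisfies $S$; since $s \rightarrow_{F_1 \circ F_2} t$ holds iff there is $s'$ with $s \rightarrow_{F_1} s'$ and $s' \rightarrow_{F_2} t$, this is equivalent to: for every $s'$ with $s \rightarrow_{F_1} s'$, every $F_2$-successor $t$ of $s'$ satisfies $S$ — that is, every such $s'$ satisfies $\wp(F_2, S)$, which is exactly $s \models \wp(F_1, \wp(F_2, S))$. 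Having the same satisfying valuations, the two formulas are equal. This semantic argument is the main (and essentially the only nontrivial) step; everything else is bookkeeping.
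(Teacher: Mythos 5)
Your proof is correct. The paper states Lemma~\ref{lem:module} without proof, treating it as a routine consequence of the definitions of $+^\MP$, $\cdot^\MP$, $1^\TF$, $\circ$, and $\wp$, and your verification supplies exactly those routine steps — in particular, your semantic argument that $\wp(F_1 \circ F_2, S)$ and $\wp(F_1,\wp(F_2,S))$ have the same satisfying states is a sound way to handle the only law that is not an immediate propositional manipulation.
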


Since $+^\TF$ is associative, commutative, and idempotent, it defines
a partial order relation $\leq^\TF$, where $F \leq^\TF G$ iff $F +^\TF
G = G$.  Observe that since $+^\TF$ is disjunction, $\leq^\TF$
coincides with logical entailment.  Similarly, $+^\MP$ defines a
partial order $\leq^\MP$, which coincides with \textit{reverse}
logical entailment.  From Lemmas~\ref{lem:idempotent-semiring} and
Lemma~\ref{lem:module}, we see that the operations $+^\TF$,
$\cdot^\TF$, $+^\MP$, and $.^\MP$ are monotone with respect to
these orders.  We will show monotonicity of $.^\MP$; the other
operators are similar.  Suppose that $F_1 \leq^\TF F_2$ and $S_1
\leq^\MP S_2$--we wish to show that $F_1 .^\MP S_1 \leq^\MP F_2
.^\MP S_2$:
\begin{align*}
  F_2 .^\MP S_2 &= (F_1 +^\TF F_2) .^\MP S_2 & \text{Since } F_1 \leq^\TF F_2\\
  &= (F_1 .^\MP S_2) +^\MP (F_2 .^\MP S_2) & Lemma~\ref{lem:module}\\
  &\geq^\MP F_1 .^\MP S_2\\
  &= F_1 .^\MP (S_1 +^\MP S_2) & \text{Since } S_1 \leq^\MP S_2 \\
  &= (F_1 .^\MP S_1) +^\MP (F_1 .^\MP S_2) & Lemma~\ref{lem:module}\\
  &\geq^\MP F_1 .^\MP S_1
\end{align*}

Let $P = \tuple{G,\elbl}$ be a labeled control flow graph, with $G =
\tuple{V,E,r}$.  Define a computation to be a sequence $\tau =
\tuple{v_0,s_0}\tuple{v_1,s_1}\dots\tuple{v_n,s_n} \in (V \times
\textsf{State})^*$ such that for all $i$ we have $\tuple{v_i,v_{i+1}}
\in E$ and $[s_i,s_{i+1}] \models \elbl(v_i,v_{i+1})$.  Define an
$\omega$-computation to be an infinite sequence in $(V \times
\textsf{State})^\omega$ such that every finite prefix is a
computation.  For a computation $\tau =
\tuple{v_0,s_0}\tuple{v_1,s_1}\dots\tuple{v_n,s_n}$, define its
underlying path to be $\textit{path}(\tau) \defeq
\tuple{v_0,v_1}\tuple{v_1,v_2}\dots\tuple{v_{n-1},v_n}$; define
$\textit{path}^\omega$ analogously for $\omega$-computations.

\begin{lemma} \label{lem:sound-pathex}
  Let $P$ be a labeled control flow graph, let $\tau =
  \tuple{v_0,s_0}\tuple{v_1,s_1}\dots\tuple{v_n,s_n}$ be a computation
  of $P$, and let $e \in \RegExp(E)$.  If $e$ recognizes
  $\textit{path}(\tau)$, then $[s_0,s_n] \models \tfsem{e}$.
\end{lemma}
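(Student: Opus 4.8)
The plan is to prove this by structural induction on the regular expression $e$, with the three base cases ($0$, $1$, and a single edge) handled directly from the definitions and the three inductive cases ($+$, concatenation, $*$) handled by lifting the factorization of the recognized word to a decomposition of the computation $\tau$. The base cases are routine: if $e = 0$ then $e$ recognizes no path, so the claim is vacuous; if $e = 1$ then the only recognized word is empty, forcing $n = 0$ and $s_0 = s_n$, and $\tfsem{1} = 1^\TF = \bigwedge_{x \in \Var} x' = x$ is satisfied by $[s_0,s_0]$; if $e = a$ for an edge $a = \tuple{v_0,v_1} \in E$, then $\textit{path}(\tau) = a$ forces $\tau = \tuple{v_0,s_0}\tuple{v_1,s_1}$, and since $\tau$ is a computation we have $[s_0,s_1] \models \elbl(v_0,v_1) = \tfsem{a}$ by definition.

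For $e = e_1 + e_2$, the word $\textit{path}(\tau)$ is recognized by $e_1$ or by $e_2$; applying the inductive hypothesis to whichever one and using $\tfsem{e_1 + e_2} = \tfsem{e_1} \lor \tfsem{e_2}$ gives the result. For $e = e_1 e_2$, we write $\textit{path}(\tau) = p_1 p_2$ with $p_i$ recognized by $e_i$; since $\textit{path}(\cdot)$ maps a computation of length $m{+}1$ to a word of length $m$, this factorization picks out an index $m$ with $p_1 = \textit{path}(\tuple{v_0,s_0}\dots\tuple{v_m,s_m})$ and $p_2 = \textit{path}(\tuple{v_m,s_m}\dots\tuple{v_n,s_n})$, each a computation. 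The inductive hypothesis gives $[s_0,s_m] \models \tfsem{e_1}$ and $[s_m,s_n] \models \tfsem{e_2}$, and unfolding the definition of relational composition yields $[s_0,s_n] \models \tfsem{e_1} \circ \tfsem{e_2} = \tfsem{e_1 e_2}$.

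The case $e = e_1^\ast$ is the crux. Here $\textit{path}(\tau) = p_1 \cdots p_k$ with each $p_j$ recognized by $e_1$ (with $k = 0$ meaning $\textit{path}(\tau)$ is empty, hence $s_0 = s_n$). As in the concatenation case, this factorization lifts to a splitting of $\tau$ into consecutive sub-computations with endpoints $s_0 = s_{i_0}, s_{i_1}, \dots, s_{i_k} = s_n$, and the inductive hypothesis gives $s_{i_{j-1}} \rightarrow_{\tfsem{e_1}} s_{i_j}$ for each $j$, so $s_0 \rightarrow_{\tfsem{e_1}}^\ast s_n$ (reflexively when $k = 0$). By the defining property of the $(-)^\star$ operator (Section~\ref{sec:transition-formulas}), $\rightarrow_{\tfsem{e_1}}^\ast \subseteq \rightarrow_{\tfsem{e_1}^\star}$, and since $\tfsem{e_1^\ast} = \tfsem{e_1}^\star$ we conclude $[s_0,s_n] \models \tfsem{e_1^\ast}$.

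I expect the only delicate point — and the one worth stating carefully — to be the bookkeeping in the concatenation and star cases: one must check that the unique factorization of the finite word $\textit{path}(\tau)$ into subwords lifts to a decomposition of the \emph{computation} at matching vertices and states, and in particular that a length-zero factor corresponds to remaining at a single state, which is exactly what licenses the appeal to the reflexive part of transitive closure. Beyond that, everything follows mechanically from the definition of $\tfsem{-}$ on regular-expression operators and the soundness of $(-)^\star$.
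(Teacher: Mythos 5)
Your proposal is correct and follows essentially the same route as the paper's proof: structural induction on $e$, with the concatenation and star cases lifting the factorization of $\textit{path}(\tau)$ to a splitting of the computation, and the star case concluding via $\rightarrow_{\tfsem{e_1}}^* \subseteq \rightarrow_{\tfsem{e_1}^\star}$. The only differences are cosmetic: you spell out the $e_1+e_2$ case and the $k=0$ subcase of the star, which the paper's proof leaves implicit.
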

\begin{proof}
  By induction on $e$.
  \begin{itemize}
  \item Case $e$ is $\tuple{u,v}$:  Since $e$ recognizes $\textit{path}(\tau)$, $\tau$ must take the form $\tuple{u,s}\tuple{v,s'}$ with $[s,s'] \models \elbl{u,v} = \tfsem{\tuple{u,v}}$.
  \item Case $e$ is $0$: trivial--$0$ does not recognize any paths.
  \item Case $e$ is $1$: $1$ recognizes only the empty path, so $\tau$ must have the form $\tuple{v_0,s_0}$, and $[s_0,s_0] \models 1^\TF = \tfsem{1}$.
  \item Case $e$ is $e_1e_2$: Since $e$ recognizes
    $\textit{path}(\tau)$, there is some $m$ such that $e_1$
    recognizes $\textit{path}(\tuple{v_0,s_0}\dots\tuple{v_m,s_m}))$
    and $e_2$ recognizes
    $\textit{path}(\tuple{v_m,s_m}\dots\tuple{v_n,s_n}))$.  By the
    induction hypothesis, $[s_0,s_m] \models \tfsem{e_1}$ and
    $[s_m,s_n] \models \tfsem{e_2}$.  It follows that $[s_0,s_n] \models \tfsem{e_1}\circ\tfsem{e_2} = \tfsem{e_1e_2}$.
  \item Case $e$ is $e_1^*$: Since $e$ recognizes
    $\textit{path}(\tau)$, there is some $i_0 = 0 ,i_1,\dots,i_m = n$
    such that $e_1$ recognizes the path
    $\textit{path}(\tuple{v_{i_j},s_{i_j}}\dots\tuple{v_{i_{j+1}},s_{i_{j+1}}})$
      for each $j$.  By the induction hypothesis, we have \[s_{i_0}
      \rightarrow_{\tfsem{e_1}} s_{i_1} \rightarrow_{\tfsem{e_1}}
      \dots \rightarrow_{\tfsem{e_1}} s_{i_n}\ ,\] and so $s_0
      \rightarrow_{\tfsem{e_1}}^* s_n$.  Since
      $\rightarrow_{\tfsem{e_1}}^* \subseteq
      \rightarrow_{\tfsem{e_1}^*}$, we have $[s_0,s_1] \models
      \tfsem{e_1}^\star = \tfsem{e_1^*}$.
  \end{itemize}
\end{proof}

\begin{lemma} \label{lem:sound-opathex}
  Let $P$ be a labeled control flow graph, let $\tau =
  \tuple{v_0,s_0}\tuple{v_1,s_1}\dots$ be an $\omega$-computation
  of $P$, and let $f \in \oRegExp(E)$.  If $f$ recognizes
  $\textit{path}^\omega(\tau)$, then $s_0 \not\models \mpsem{f}$.
\end{lemma}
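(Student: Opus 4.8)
The plan is to prove the statement by structural induction on $f$, following the grammar of $\oRegExp(E)$, which has $e^\omega$ as its only base case and $ef$ and $f_1+f_2$ as the two inductive cases. In every case the argument unfolds the definition of $\mpsem{-}$ via the $\MP$ operations $F^{\omega^\MP} = \mp(F)$, $F \cdot^\MP S = \wp(F,S)$, and $S_1 +^\MP S_2 = S_1 \land S_2$, and bridges between finite sub-paths of $\tau$ and transition formulas using Lemma~\ref{lem:sound-pathex}.

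For the base case $f = e^\omega$: by the semantics of the $\omega$-operator, recognition of $\textit{path}^\omega(\tau)$ by $e^\omega$ gives a factorization $\textit{path}^\omega(\tau) = p_1 p_2 p_3 \dots$ into finite-length words, each recognized by $e$. Because the recognized word is infinite, the cumulative lengths $k_j \defeq |p_1 \dots p_j|$ tend to infinity, and each factor $p_{j+1}$ is exactly the underlying path of the contiguous sub-computation $\tuple{v_{k_j},s_{k_j}}\dots\tuple{v_{k_{j+1}},s_{k_{j+1}}}$ of $P$. Applying Lemma~\ref{lem:sound-pathex} to each of these sub-computations yields $[s_{k_j},s_{k_{j+1}}] \models \tfsem{e}$, so $s_0 = s_{k_0} \rightarrow_{\tfsem{e}} s_{k_1} \rightarrow_{\tfsem{e}} s_{k_2} \rightarrow_{\tfsem{e}} \dots$ is an infinite $\rightarrow_{\tfsem{e}}$-sequence. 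Hence $s_0$ is not a mortal state of the transition system defined by $\tfsem{e}$, and since $\mpsem{e^\omega} = \mp(\tfsem{e})$ is a mortal precondition for that system, $s_0 \not\models \mpsem{e^\omega}$.

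For $f = ef'$: since $ef'$ recognizes $\textit{path}^\omega(\tau)$, we split $\textit{path}^\omega(\tau) = pq$ where $e$ recognizes the finite prefix $p = \textit{path}(\tuple{v_0,s_0}\dots\tuple{v_m,s_m})$ and $f'$ recognizes the infinite suffix $q = \textit{path}^\omega(\tuple{v_m,s_m}\tuple{v_{m+1},s_{m+1}}\dots)$. Lemma~\ref{lem:sound-pathex} gives $[s_0,s_m] \models \tfsem{e}$, and the induction hypothesis applied to the suffix $\omega$-computation gives $s_m \not\models \mpsem{f'}$. Now $\mpsem{ef'} = \wp(\tfsem{e},\mpsem{f'}) = \forall \Var'.\,\tfsem{e} \Rightarrow \mpsem{f'}[\Var \mapsto \Var']$; were $s_0$ to satisfy this formula, instantiating the universal with the transition $[s_0,s_m]$ (which satisfies $\tfsem{e}$) would force $s_m \models \mpsem{f'}$, a contradiction, so $s_0 \not\models \mpsem{ef'}$. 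For $f = f_1 + f_2$: $\textit{path}^\omega(\tau)$ is recognized by $f_1$ or by $f_2$; in either case the induction hypothesis gives $s_0 \not\models \mpsem{f_i}$ for the relevant $i$, and since $\mpsem{f_1 + f_2} = \mpsem{f_1} \land \mpsem{f_2}$ we conclude $s_0 \not\models \mpsem{f_1 + f_2}$.

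The main obstacle is the $e^\omega$ case: one must take care that the chosen factorization of the infinite path into $e$-factors really supplies infinitely many checkpoints $s_{k_0}, s_{k_1}, \dots$—i.e., that $k_j \to \infty$—even when $e$ can recognize the empty path, so that the resulting $\rightarrow_{\tfsem{e}}$-chain is genuinely infinite rather than eventually stalling at a single state. Once this is in hand, the remaining cases are routine unfoldings of the $\MP$ operations combined with Lemma~\ref{lem:sound-pathex}.
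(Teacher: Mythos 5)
Your proof is correct and takes essentially the same route as the paper's: structural induction on $f$, with Lemma~\ref{lem:sound-pathex} converting each finite factor into a $\tfsem{e}$-transition so that the $e^\omega$ case yields an infinite $\rightarrow_{\tfsem{e}}$-sequence from $s_0$, and the $ef'$ and $f_1+f_2$ cases unfolding $\wp$ and conjunction exactly as in the paper. The ``obstacle'' you flag about empty $e$-factors is harmless: even if consecutive checkpoints coincide, each step is still a valid $\rightarrow_{\tfsem{e}}$ transition (by the $1$-case of Lemma~\ref{lem:sound-pathex}), and an infinite transition sequence with repeated states already witnesses that $s_0$ is not mortal.
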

\begin{proof}
  By induction on $f$.
  \begin{itemize}
  \item Case $f$ is $e^\omega$:  Since $f$ recognizes $\textit{path}^\omega(\tau)$, there is a sequence
    $0 = i_0, i_1, i_2, \dots$ such that $e$ recognizes the path $\tuple{v_{i_j},v_{i_j+1}}\dots\tuple{v_{i_{j+1}-1},v_{i_{j+1}}}$ for all $j$.  By Lemma~\ref{lem:sound-pathex}, we have
    $[s_{i_j},s_{i_{j+1}}] \models \tfsem{e}$ for all $j$.  It follows that
    \[ s_{i_0} \rightarrow_{\tfsem{e}} s_{i_1} \rightarrow_{\tfsem{e}} s_{i_2} \dotsi \]
    is an infinite computation in $\tfsem{e}$, and so $s_0 = s_{i_0}$
    is not a mortal state of $\tfsem{e}$.  Since $(-)^{\omega^\MP}$ is
    a mortal precondition operator, have  $s_0 \not\models \mpsem{e^\omega} = \tfsem{e}^{\omega^\MP}$
  \item Case $f$ is $f_1 + f_2$: Since $f$ recognizes $\textit{path}^\omega(\tau)$, we must have $f_1$ or $f_2$ recognize $\textit{path}^\omega(\tau)$.  Without loss of generality, suppose $f_1$ recognizes $\textit{path}^\omega(\tau)$.  By the induction hypothesis, $s_0 \not\models \mpsem{f_1}$, and therefore $s_0 \not\models \mpsem{f} = \mpsem{f}_1 \land \mpsem{f}_2$.
  \item Case $f$ is $e \cdot f'$: Since $f$ recognizes $\textit{path}^\omega(\tau)$, there is some $m$ such that $e$ recognizes $\tuple{v_0,v_1}\dots\tuple{v_{m_1},v_m}$ and $f$ recognizes
    $ \tuple{v_m,v_{m+1}}\tuple{v_{m+1},v_{m+2}}\dots $. \linebreak By Lemma~\ref{lem:sound-pathex}, we have
    $[s_0,s_m] \models \tfsem{e}$ and by the induction hypothesis we have $s_m \not\models \mpsem{f'}$.  It follows that $s_0 \not\models \mpsem{e \cdot f'} = \wp(\tfsem{e},\mpsem{f})$. \qedhere
  \end{itemize}
\end{proof}

\soundness*
\begin{proof}
  We show the contrapositive.  If $\tuple{r,s}$ is not mortal, then
  there is an $\omega$-computation $\tau$ of $P$ beginning with
  $\tuple{r,s}$.  Since $\opathexp{G}{r}$ recognizes all
  $\omega$-paths beginning at $r$, it must recognize
  $\textit{paths}^\omega(\tau)$.  By Lemma~\ref{lem:sound-opathex}
  we have $s \not\models \mpsem{\opathexp{G}{r}}$.
\end{proof}

\monotonicity*
\begin{proof}
  By induction on $f$.  The base case is immediate from the assumption
  that $L_1(\tuple{u,v}) \models L_2(\tuple{u,v})$ for all
  $\tuple{u,v} \in E$.  The inductive steps follow from monotonicity
  of all the operations of $\TF$ (Lemma~\ref{lem:idempotent-semiring},
  Lemma~\ref{lem:exp-monotone}) and $\MP$ (Lemma~\ref{lem:module} and
  the assumption that $\mp$ is monotone).
\end{proof}

\subsection{Phase analysis}

We first prove some properties of our 
over-approximating transitive closure operator $\star$.



\begin{lemma}
  For any transition formula $F \in \TF$, state formula $S \in \MP$, we have
  \begin{align}
    1^\TF &\models F^\star\\
    F^\star &= F^\star \circ F^\star  \\
    F^\star .^\MP S &\models S
  \end{align}
  \label{lem:props-star-and-omega}
\end{lemma}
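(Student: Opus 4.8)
All three claims follow by unfolding the definition $F^\star \defeq \exists k.\, k \geq 0 \land \exp(F,k)$, where $\exp(F,k) \equiv G_F \land (A\vec{x}' \geq A\vec{x} + k\vec{b})$ and $G_F \defeq \left(\bigwedge_{x\in\Var} x' = x\right) \lor (\textit{Pre}(F) \land \textit{Post}(F))$, with $A,\vec{b}$ the matrix/vector from the convex-hull construction of Section~\ref{sec:transition-formulas}. The two features to exploit are that $G_F$ does not mention $k$, and that the $k$-dependent conjunct is a system of linear inequalities relating $\vec{x}'$ to $\vec{x}$. For $1^\TF \models F^\star$ I would take $k = 0$ as the witness: if $[s,s'] \models 1^\TF$ then $\vec{x}'$ and $\vec{x}$ agree, so the left disjunct of $G_F$ holds and $A\vec{x}' = A\vec{x} \geq A\vec{x} + 0\cdot\vec{b}$; hence $[s,s'] \models \exp(F,0)$, so $[s,s'] \models F^\star$.

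The claim $F^\star .^\MP S \models S$ is then immediate: $F^\star .^\MP S = \wp(F^\star,S) = \forall\Var'.\, F^\star \Rightarrow S[\Var\mapsto\Var']$, so if $s \models \wp(F^\star,S)$ I instantiate the quantified primed variables with the values $s$ assigns to their unprimed counterparts; since $[s,s] \models 1^\TF \models F^\star$ (as just shown), the implication forces $s \models S$. The easy half of $F^\star = F^\star \circ F^\star$, namely $F^\star \models F^\star \circ F^\star$, uses the same trick: given $[s,s'] \models F^\star$, take $s'$ as the intermediate state of the composition, since $[s',s'] \models 1^\TF \models F^\star$ supplies the second factor.

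The substantive step is $F^\star \circ F^\star \models F^\star$. The key sublemma I would isolate is
\[ \exp(F,k_1) \circ \exp(F,k_2) \models \exp(F, k_1 + k_2) \]
for fresh variable symbols $k_1, k_2$. Suppose $[s,s''] \models \exp(F,k_1)$ and $[s'',s'] \models \exp(F,k_2)$ for some intermediate state $s''$ and some values of $k_1, k_2$. Viewing states as integer vectors over $\Var$, adding the two inequality systems $A s'' \geq A s + k_1\vec{b}$ and $A s' \geq A s'' + k_2\vec{b}$ componentwise and cancelling the common term $A s''$ yields $A s' \geq A s + (k_1 + k_2)\vec{b}$, which is the $k$-dependent conjunct of $\exp(F,k_1 + k_2)$ evaluated at $[s,s']$. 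For the $k$-independent conjunct $G_F$ at $[s,s']$ I would case-split on which disjunct is witnessed at $[s,s'']$ and at $[s'',s']$: if $[s,s'']$ witnesses $\bigwedge_x x' = x$ then $s = s''$ and $G_F$ at $[s,s']$ coincides with $G_F$ at $[s'',s']$; symmetrically if $[s'',s']$ witnesses the identity disjunct; and otherwise the right disjunct gives $s \models \textit{Pre}(F)$ (from the first transition, as $\textit{Pre}(F)$ constrains only $\Var$) and $s' \models \textit{Post}(F)$ (from the second, as $\textit{Post}(F)$ constrains only $\Var'$), so the right disjunct of $G_F$ holds at $[s,s']$. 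Given the sublemma, the inclusion follows by pulling the existentials over $k_1,k_2$ out of the relational composition (they do not occur in $\Var \cup \Var'$) and using $k_1, k_2 \geq 0 \Rightarrow k_1 + k_2 \geq 0$, so $F^\star \circ F^\star \models \exists k_1, k_2 \geq 0.\, \exp(F, k_1 + k_2) \models \exists k \geq 0.\, \exp(F,k) = F^\star$.

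The only step needing real care is the case analysis on $G_F$; everything else is a mechanical unfolding of definitions. It is worth noting that the argument works directly from the closed form of this particular $\star$ operator and never appeals to $\rightarrow_F^* \subseteq \rightarrow_{F^\star}$: an arbitrary over-approximation of reflexive transitive closure need not satisfy the idempotence equation, so some use of the specific construction is unavoidable.
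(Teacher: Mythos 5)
Your proof is correct. The paper states Lemma~\ref{lem:props-star-and-omega} without an accompanying proof, so there is nothing to diverge from; your direct verification from the closed form $F^\star = \exists k.\, k \geq 0 \land \exp(F,k)$ --- the witness $k=0$ for $1^\TF \models F^\star$, additivity of the counter plus the case split on the two disjuncts of the $k$-independent conjunct for $F^\star \circ F^\star \models F^\star$, and reflexivity for $F^\star .^\MP S \models S$ --- is exactly the argument the lemma needs, and your closing observation is apt: idempotence genuinely depends on the specific operator of Section~\ref{sec:transition-formulas}, not merely on $(-)^\star$ being some over-approximation of reflexive transitive closure.
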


In the following, let $\textit{PTG}(F,P)$ denote the phase transition graph for the formula $F$ and set of predicates $P$.

\PhaseSoundness*
\begin{proof} 
Let $F$ be a transition formula, and let $G = \textit{PTG}(F,P)$.  Let $r$ be the root of $G$.
Towards the contrapositive,
suppose that there is an infinite $F$ computation
$s_0 \rightarrow_F s_1 \rightarrow_F s_2 \dotsi$.
For each pair of states $[s_i, s_{i+1}]$, there is a unique cell 
$c_i \in \mathcal{P}(F,P)$ such that
$[s_i,s_{i+1}]$ (since $\mathcal{P}(F,P)$ is a partition of the transitions of $F$).  Since for each $i$, we have  $s_i \rightarrow_F s_{i+1} \rightarrow_F s_{i+2}$, $[s_i,s_{i+1}] \models c_i$ and $[s_{i+1},s_{i+2}] \models c_{i+1}$, we have that $c_i \circ c_{i+1}$ is satisfiable, and so there is a $1^\TF$-labelled path from $c_i$ to $c_{i+1}$ in $G$.  Finally, there is a $1^\TF$-labelled path from $r$ to $c_0$.  It follows that there is an infinite execution of $\textit{TS}(G)$ starting from $\tuple{s_0,r}$, and so by Proposition~\ref{lem:ata-soundness} we have $s_0 \not\models \mpsem{\opathexp{G}{r}}$.
\end{proof}

Now we move on to prove the monotonicity of phase analysis.
First we define \textit{canonical form} of path expressions and 
prove some related results. 
\begin{definition}
An $\omega$-path expression $P$ is in canonical form if it has form 
\[
(P_{1}^* P_{2}^* \ldots P_{n}^* ) L^\omega
\]
where $P_i$'s and $L$ are letters.  
The transition formula part $P_{1}^*  P_{2}^* \ldots P_{n}^*$ is
called the \textit{stem} and the expression $L^\omega$ is called the
\textit{body}.
\end{definition}

Let $\mathcal{L}^\omega(f)$ denote the language recognized by the $\omega$-path
expression $f$.  For an $\omega$-path expression $f$ for a phase transition graph $\textit{PTG}(F,P)$, define $h_1(f)$ to be the $\omega$-regular expression obtained by deleting phase transition edges (recall: phase transition edges are labelled with $1^TF$--the loops of the phase transition graph contain all of its significant content).  Clearly, we have
$\mpsem{f} = \mpsem{h_1(f)}$.


\begin{lemma}[Canonical form of path expression]
  Let $F$ be a transition formula,  $P$ be a set of predicates,
  $G = \textit{PTG}(F,P)$, and $r$ be the root of $G$.
  There exists canonical form path expressions $P_1, \ldots, P_N$ such that 
  \[ 
  \mathcal{L}^\omega({h_1(\opathexp{G}{r})}) = \mathcal{L}^\omega(P_1 + P_2 + \ldots + P_N)
  \]
  Furthermore, we have
  \[
  \mpsem{\opathexp{G}{r}} = \mpsem{P_1 + P_2 + \ldots + P_N}] .
  \]
  \label{lem:path-exp-canonical-form}
\end{lemma}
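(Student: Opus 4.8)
The plan is to treat the two conclusions in turn — first the language identity, then the $\mpsem{-}$ identity — after pinning down the shape of $G = \textit{PTG}(F,P)$. The root $r$ of $G$ has no incoming edges, and by Algorithm~\ref{alg:reduced-phase-graph} every non-self-loop edge of $G$ is added either out of $r$ or from a cell $F_j$ to a cell $F_i$ occurring strictly later in the ordering by number of positive $F$-invariant literals. Hence, writing $G'$ for the graph obtained from $G$ by deleting the self-loop edges $\tuple{F_i,F_i}$ (those labelled by the cells themselves), $G'$ is a finite DAG and the \emph{only} cycles of $G$ are self-loops (reflecting the fact noted in Section~\ref{sec:mp-phase-analysis} that infinite $F$-computations eventually remain within a single cell). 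Write $\ell_i$ for the self-loop letter $\tuple{F_i,F_i}$ of cell $F_i$.

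For the language identity: since $G'$ is a finite DAG, any $\omega$-path in $G$ from $r$ uses only finitely many edges of $G'$, so it has the form $\sigma\cdot\tuple{F_m,F_m}^\omega$ for some cell $F_m$ and some finite $r$-to-$F_m$ path $\sigma$ whose $G'$-edges spell out a directed path $\rho : r \to F_{j_1} \to \dots \to F_{j_p} = F_m$ of $G'$. Since $h_1$ erases exactly the $1^\TF$-labelled (phase-transition) letters, applying it to such an $\omega$-path yields a word of $\mathcal{L}^\omega(\ell_{j_1}^*\ell_{j_2}^*\cdots\ell_{j_{p-1}}^*\,\ell_{j_p}^\omega)$, and conversely every word of that language arises this way. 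As $G'$ has only finitely many directed paths from $r$, enumerate them as $\rho_1,\dots,\rho_N$ and let $P_k$ be the corresponding expression — which is in canonical form, with stem $\ell_{j_1}^*\cdots\ell_{j_{p-1}}^*$ and body $\ell_{j_p}^\omega$. This gives $\mathcal{L}^\omega(h_1(\opathexp{G}{r})) = \mathcal{L}^\omega(P_1+\dots+P_N)$.

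For the $\mpsem{-}$ identity, recall from the observation preceding the lemma that $\mpsem{\opathexp{G}{r}} = \mpsem{h_1(\opathexp{G}{r})}$, so it suffices to rewrite $\opathexp{G}{r}$ into $P_1+\dots+P_N$ using only transformations that preserve $\mpsem{-}$: identities valid in the idempotent semiring $\TF$ (Lemma~\ref{lem:idempotent-semiring}) and the module $\MP$ (Lemma~\ref{lem:module}), applied in context, together with the facts $1^\TF \models F^\star$ and $F^\star = F^\star\circ F^\star$ of Lemma~\ref{lem:props-star-and-omega}; these are $\mpsem{-}$-preserving because $\mpsem{-}$ is defined by structural recursion through the operations of $\TF$ and $\MP$, which are monotone and, for equalities, congruent. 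The key observation is that, because every cycle of $G$ is a self-loop, every strongly connected component met by \textsf{solve-sparse} (Algorithm~\ref{alg:solve-sparse}) on $G$ — at every level of its dominator-tree recursion — is a singleton (a nontrivial component would give a cycle of $G$ through two distinct cells), and in the singleton case \textsf{solve-dense} produces $\textit{pe}(F_i,F_i) = \ell_i$ literally; so the only starred subexpressions created are $\ell_i^*$, with no nested star and no star of a non-trivial product. Unwinding the recursion, $\opathexp{G}{r}$ is therefore a sum of products of $1^\TF$-labelled letters and expressions $\ell_i^*$, each product terminated by a single $\ell_j^\omega$. Distributing $+^\TF$ over $\cdot^\TF$ and pushing through the module laws turns this into a sum of canonical-form expressions whose non-$1^\TF$ content traces the directed paths of $G'$; erasing the $1^\TF$-letters (applying $h_1$) yields exactly $P_1+\dots+P_N$, so $\mpsem{\opathexp{G}{r}} = \mpsem{P_1+\dots+P_N}$.

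The main obstacle is the bookkeeping in this last step: one must follow the expressions built by \textsf{solve-sparse}/\textsf{solve-dense} closely enough to confirm that (i) all the strongly connected components really are singletons, so that the only starred subexpressions are the $\ell_i^*$'s, and (ii) the flattening to a sum of canonical forms uses only $\mpsem{-}$-preserving rewrites — which matters because $\mpsem{-}$ is \emph{not} language-invariant in general (the operator $(-)^\star$ only over-approximates $(-)^*$), so the language identity cannot simply be invoked here. Everything else is elementary graph reasoning or routine semiring/module calculation.
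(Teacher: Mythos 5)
Your proposal is correct and follows essentially the same route as the paper, whose entire proof is a one-line appeal to structural analysis of $\opathexp{G}{r}$ together with the semiring and module laws (Lemmas~\ref{lem:idempotent-semiring} and~\ref{lem:module}): you exploit the fact that the reduced phase transition graph has only self-loop cycles, so the path expression flattens—via interpretation-preserving distributivity, identity, and module identities—into a sum of canonical forms, with the language identity handled by direct path reasoning. Your version is in fact more detailed than the paper's (tracking \textsf{solve-sparse}/\textsf{solve-dense} to see that all starred subexpressions are single-cell self-loops), and you correctly isolate the one point the paper leaves implicit: the $\mpsem{-}$ identity cannot be derived from language equality alone, since $(-)^\star$ only over-approximates iteration.
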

\begin{proof}
  By structural induction on $\opathexp{G}{r}$ and Lemma~\ref{lem:idempotent-semiring} and \ref{lem:module}.
\end{proof}

\begin{lemma}
  Suppose that $P = c_1^* \dotsi c_k^* c_{k+1}^\omega$ is a canonical path
  expression and $\mathcal{L}^\omega(P) \subseteq \mathcal{L}^\omega(Q_1 + Q_2 + \ldots + Q_n) $
  where each $Q_i$ is also a canonical path. Then 
  there exists some $Q_i$ such that $\mathcal{L}^\omega(P) \subseteq \mathcal{L}^\omega(Q_i)$.
  \label{lem:canonical-path-inclusion}
\end{lemma}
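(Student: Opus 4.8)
The plan is to exhibit a single ``canonical witness'' word inside $\mathcal{L}^\omega(P)$ whose membership in one of the $Q_i$ already forces that $Q_i$ to contain all of $\mathcal{L}^\omega(P)$. As a preliminary normalization I would first rewrite $P$ so that consecutive letters differ: whenever $c_i = c_{i+1}$ replace the factors $c_i^* c_{i+1}^*$ by a single $c_i^*$, and replace $c_k^* c_{k+1}^\omega$ by $c_{k+1}^\omega$ when $c_k = c_{k+1}$. Each such rewrite produces another canonical path expression recognizing the same $\omega$-language, so it suffices to prove the lemma under the assumption that $c_i \neq c_{i+1}$ for all $i$ (the $Q_i$ need not be normalized).

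Next I would fix the word $w \defeq c_1 c_2 \dots c_k c_{k+1}^\omega$, obtained by instantiating every exponent of $P$ to $1$; clearly $w \in \mathcal{L}^\omega(P)$. Since $\mathcal{L}^\omega(P) \subseteq \mathcal{L}^\omega(Q_1 + \dots + Q_n) = \bigcup_i \mathcal{L}^\omega(Q_i)$, some $Q_i = d_1^* \dots d_\ell^* d_{\ell+1}^\omega$ satisfies $w \in \mathcal{L}^\omega(Q_i)$, i.e.\ $w = d_1^{p_1} \dots d_\ell^{p_\ell} d_{\ell+1}^\omega$ for some $p_1,\dots,p_\ell \geq 0$. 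The heart of the argument is to extract structural information about $Q_i$ from this equation by comparing the (unique) decompositions of the two sides into maximal constant blocks: because $c_i \neq c_{i+1}$, the left side splits into the length-one blocks $c_1,\dots,c_k$ followed by the infinite block $c_{k+1}^\omega$; on the right, after discarding the indices with $p_j = 0$ and merging consecutive surviving indices carrying the same letter, the two decompositions must agree. Matching them forces the existence of indices $1 \le j_1 < j_2 < \dots < j_k \le \ell$ with $p_{j_t} = 1$ and $d_{j_t} = c_t$ for every $t$, with $p_j = 0$ for all other $j$, and with $d_{\ell+1} = c_{k+1}$.

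Finally, I would observe that these facts about $Q_i$ do not mention the exponents we started with, so they can be reused: given any $m_1,\dots,m_k \geq 0$, setting $p_{j_t} \defeq m_t$ and $p_j \defeq 0$ otherwise yields
\[ d_1^{p_1} \dots d_\ell^{p_\ell} d_{\ell+1}^\omega \;=\; c_1^{m_1} \dots c_k^{m_k} c_{k+1}^\omega \;\in\; \mathcal{L}^\omega(Q_i)\ . \]
Since every word of $\mathcal{L}^\omega(P)$ has this shape, we conclude $\mathcal{L}^\omega(P) \subseteq \mathcal{L}^\omega(Q_i)$, as desired. (It is worth noting that one might expect to need a pumping argument over the ``diagonal'' words $c_1^N \dots c_k^N c_{k+1}^\omega$ for growing $N$, but uniqueness of block decomposition makes the single instance $N=1$ sufficient.)

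I expect the main obstacle to be the block-matching step of the second paragraph --- turning ``the block decompositions of the two sides coincide'' into the clean one-to-one, order-preserving correspondence between the $k$ finite $c$-blocks and a subset of the $d$-indices. One must check that no two consecutive $c$-blocks can be served by the same $d$-index (this is exactly where $c_i \neq c_{i+1}$ is used) and that the last finite $c$-block cannot be absorbed into the $\omega$-tail on the right (again using $c_k \neq c_{k+1}$). The remaining cases --- $k = 0$, and canonical expressions involving $0$ or $1$ --- are routine, as is the verification that the normalization of the first paragraph preserves $\mathcal{L}^\omega$.
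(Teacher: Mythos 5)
Your proposal is correct. Note that the paper itself states Lemma~\ref{lem:canonical-path-inclusion} without proof, so there is no printed argument to compare against; your single-witness argument fills that gap soundly. The key point — that membership of the one word $w = c_1c_2\dots c_k c_{k+1}^\omega$ (after normalizing $P$ so consecutive letters differ) already forces $Q_i = d_1^*\dotsi d_\ell^* d_{\ell+1}^\omega$ to contain an order-preserving occurrence of $c_1,\dots,c_k$ among the $d_j$ with $d_{\ell+1}=c_{k+1}$ — is exactly right, and the uniqueness of the maximal-block decomposition of an eventually constant $\omega$-word makes the matching step rigorous: no finite block can straddle two adjacent positions carrying distinct letters, and the $\omega$-tail cannot begin before position $k+1$ since $c_k \neq c_{k+1}$. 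In effect your structural extraction also proves the paper's subsequent Lemma~\ref{lem:canonical-path-mono-map-existence}. One small inaccuracy: the matching does \emph{not} force $p_j = 0$ for all indices outside $\{j_1,\dots,j_k\}$ — the chosen decomposition of $w$ may place some copies of $c_{k+1}$ in the finite prefix, so trailing indices with $d_j = c_{k+1}$ can carry positive exponents. This is harmless, since your final step only uses the letter information ($d_{j_t} = c_t$ and $d_{\ell+1} = c_{k+1}$) and re-chooses all exponents itself, but the sentence should be weakened accordingly (e.g., ``all other surviving indices carry the letter $c_{k+1}$ and occur after $j_k$'').
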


\begin{lemma}
Suppose that $P = c_1^* \dotsi c_k^* c_{k+1}^\omega$ and $Q = d_1^* \dotsi
d_\ell^* d_{\ell+1}^\omega$ are canonical path expressions such that
$\mathcal{L}^\omega(P) \subseteq \mathcal{L}^\omega(Q)$.  Then there is a
monotone map $f : \{1,\dots,k+1\} \rightarrow \{1,\dots,\ell+1\}$ such that $f(k+1)
= \ell+1$ and $c_i = d_{f(i)}$ for all $i$.
\label{lem:canonical-path-mono-map-existence}
\end{lemma}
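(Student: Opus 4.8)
The plan is to test the inclusion $\mathcal{L}^\omega(P) \subseteq \mathcal{L}^\omega(Q)$ on a single, carefully chosen $\omega$-word, and then read the map $f$ off the way $Q$ factors that word. Concretely, I would take
\[
w \defeq c_1 c_2 \cdots c_k\, c_{k+1}^\omega = c_1^1 c_2^1 \cdots c_k^1\, c_{k+1}^\omega.
\]
Since $1 \in \{0,1,2,\dots\}$, we have $w \in \mathcal{L}^\omega(P)$, hence $w \in \mathcal{L}^\omega(Q)$, so by the definition of $\mathcal{L}^\omega$ there exist $m_1,\dots,m_\ell \geq 0$ with $w = d_1^{m_1} d_2^{m_2} \cdots d_\ell^{m_\ell}\, d_{\ell+1}^\omega$. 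The point is that $w$ has a very rigid position structure in \emph{both} representations, and matching them position-by-position produces $f$.

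Number positions of $w$ from $0$. In the first representation, position $p$ carries letter $c_{p+1}$ for $0 \leq p \leq k-1$ and letter $c_{k+1}$ for $p \geq k$. In the second, writing $M_0 \defeq 0$ and $M_j \defeq m_1 + \dots + m_j$, position $p$ carries $d_j$ whenever $M_{j-1} \leq p < M_j$ (for $j \leq \ell$) and $d_{\ell+1}$ whenever $p \geq M_\ell$. I then define $f : \{1,\dots,k+1\} \to \{1,\dots,\ell+1\}$ by letting, for $i \leq k$, $f(i)$ be the index of the $d$-block containing position $i-1$ (i.e.\ $f(i)=j$ if $M_{j-1}\le i-1<M_j$ with $j\le\ell$, and $f(i)=\ell+1$ if $i-1\ge M_\ell$), and setting $f(k+1)\defeq\ell+1$. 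The three required facts are then immediate: $f(k+1)=\ell+1$ by definition; $f$ is monotone because $i \le i'$ gives $i-1 \le i'-1$ and the ``$d$-block index of a position'' is non-decreasing in the position (the blocks are consecutive intervals, some possibly empty), plus $f(i) \le \ell+1 = f(k+1)$; and $c_i = d_{f(i)}$ because for $i \le k$ position $i-1$ of $w$ carries $c_i$ in the first representation and $d_{f(i)}$ in the second, while for $i=k+1$ the letter occurring at all sufficiently large positions of $w$ is $c_{k+1}$ in the first and $d_{\ell+1}$ in the second.

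I do not expect a real obstacle: the whole argument uses only the inclusion evaluated at the one word $w$, and needs neither Lemma~\ref{lem:canonical-path-inclusion} nor any distinctness or reducedness hypothesis on the letters $c_i$, $d_j$. The only delicate point is the bookkeeping at the seam between the finite prefix $c_1\cdots c_k$ of $w$ and its periodic tail $c_{k+1}^\omega$ (and symmetrically for $Q$), which is exactly what motivates the separate clause $f(k+1)=\ell+1$ and the case $p \ge M_\ell$ in the definition of $f$; getting those conventions consistent is the one spot that warrants care.
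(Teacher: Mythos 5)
Your proof is correct. There is nothing in the paper to compare it against: Lemma~\ref{lem:canonical-path-mono-map-existence} is stated in the appendix without proof, so your argument fills a gap rather than replicating one. The single-witness strategy works exactly as you describe: $w = c_1c_2\cdots c_k\,c_{k+1}^\omega \in \mathcal{L}^\omega(P) \subseteq \mathcal{L}^\omega(Q)$ yields some factorization $w = d_1^{m_1}\cdots d_\ell^{m_\ell}d_{\ell+1}^\omega$; the intervals $[M_{j-1},M_j)$ together with the tail $[M_\ell,\infty)$ partition the positions of $w$, so the block index of a position is well defined and non-decreasing, which gives monotonicity of $f$ (including $f(i)\le \ell+1=f(k+1)$); $c_i = d_{f(i)}$ for $i\le k$ because both factorizations must assign the same unique letter to position $i-1$ of the one word $w$ (this covers the case $f(i)=\ell+1$, i.e.\ $i-1\ge M_\ell$, where that letter is $d_{\ell+1}$); and $c_{k+1}=d_{\ell+1}$ because the factorizations agree on every position $\ge\max(k,M_\ell)$. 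Your remarks that no distinctness of the letters and no appeal to Lemma~\ref{lem:canonical-path-inclusion} are needed are also accurate. The only point worth spelling out in a final write-up is the one you flag: the seam bookkeeping, i.e.\ that for $i\le k$ with $i-1\ge M_\ell$ the definition sends $i$ to $\ell+1$ and the letter comparison still goes through, so that case is not silently dropped.
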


\begin{lemma}
Let $F$ be a transition formula, $P$ be a set of predicates, let $G = \textit{PTG}(F)$, and let $r$ be the root of $G$.  Let
$c_1, c_2, \ldots, c_k, c_{k+1}$ be a sequence of cells such that 
for every $i \in [k]$, $c_i \circ c_{i+1}$ is satisfiable.  
Suppose that for any transition formula $F$, $\wp(F^\star, \mp(F)) = \mp(F)$.
Then we have 
\[
\mpsem{\opathexp{G}{r}} \models \mpsem{(c_1^* \ldots c_k^*) c_{k+1}^\omega} 
\]
\label{lem:phase-mono-phase-path-lemma}
\end{lemma}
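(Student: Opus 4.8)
The plan is to bound $\mpsem{\opathexp{G}{r}}$ from below, in the entailment order, by the mortal precondition of the single straight-line phase path $(c_1^* \dots c_k^*)\,c_{k+1}^\omega$, routing the argument through the canonical-form decomposition of Lemma~\ref{lem:path-exp-canonical-form}. First I would apply Lemma~\ref{lem:path-exp-canonical-form} to obtain canonical path expressions $P_1,\dots,P_N$ with $\mathcal{L}^\omega(h_1(\opathexp{G}{r})) = \mathcal{L}^\omega(P_1 + \dots + P_N)$ and $\mpsem{\opathexp{G}{r}} = \mpsem{P_1 + \dots + P_N}$. Since $+^\MP$ is conjunction (Lemma~\ref{lem:module}), $\mpsem{\opathexp{G}{r}} \models \mpsem{P_j}$ for every $j$, so it suffices to exhibit one $P_j$ with $\mpsem{P_j} \models \mpsem{(c_1^* \dots c_k^*)\,c_{k+1}^\omega}$.

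To locate that $P_j$, I would show $\mathcal{L}^\omega\bigl((c_1^* \dots c_k^*)\,c_{k+1}^\omega\bigr) \subseteq \mathcal{L}^\omega(h_1(\opathexp{G}{r}))$. A word of the left-hand language has the form $c_{i_1}^{a_1}\dots c_{i_m}^{a_m} c_{k+1}^\omega$ for an increasing subsequence $i_1 < \dots < i_m$ of $1,\dots,k$, and I claim each such word is the $h_1$-image of an $\omega$-path of $G$ starting at $r$: the self-loops of $G$ realize the repetitions, and it remains to connect $r$ to $c_{i_1}$, each $c_{i_j}$ to $c_{i_{j+1}}$, and $c_{i_m}$ to $c_{k+1}$ by $1^\TF$-labelled paths. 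Because each $c_i \circ c_{i+1}$ is satisfiable and $F$-invariance of the predicates forces the set of positive literals of a cell to be nondecreasing along a composable chain (and strictly increasing whenever the cell changes), consecutive cells of the sequence occur in the sort order used by Algorithm~\ref{alg:reduced-phase-graph}, so that algorithm records a $1^\TF$-labelled path between them; concatenating these connects any subsequence, and the virtual source $s = r$ reaches every cell since the cell-edges form a DAG with an $s$-edge into each source. Lemma~\ref{lem:canonical-path-inclusion} then yields a $P_j$ with $\mathcal{L}^\omega\bigl((c_1^* \dots c_k^*)\,c_{k+1}^\omega\bigr) \subseteq \mathcal{L}^\omega(P_j)$; writing $P_j = d_1^* \dots d_\ell^* d_{\ell+1}^\omega$, Lemma~\ref{lem:canonical-path-mono-map-existence} supplies a monotone $f : \{1,\dots,k+1\} \to \{1,\dots,\ell+1\}$ with $f(k+1) = \ell+1$ and $c_i = d_{f(i)}$ for all $i$.

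The last step is the algebraic comparison. By the module laws (Lemma~\ref{lem:module}) we have $\mpsem{(c_1^* \dots c_k^*)\,c_{k+1}^\omega} = \wp\bigl(c_1^\star \circ \dots \circ c_k^\star,\ \mp(c_{k+1})\bigr)$ and $\mpsem{P_j} = \wp\bigl(d_1^\star \circ \dots \circ d_\ell^\star,\ \mp(d_{\ell+1})\bigr)$, where $c_{k+1} = d_{\ell+1}$ since $f(k+1) = \ell+1$. Let $i^*$ be least with $f(i^*) = \ell+1$; monotonicity of $f$ forces $c_{i^*} = \dots = c_{k+1} = d_{\ell+1}$, so the tail factors of the first stem contribute nested applications of $\wp(d_{\ell+1}^\star, -)$ to $\mp(d_{\ell+1})$, which collapse to $\mp(d_{\ell+1})$ by the hypothesis $\wp(F^\star,\mp(F)) = \mp(F)$. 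The surviving prefix relation $c_1^\star \circ \dots \circ c_{i^*-1}^\star = d_{f(1)}^\star \circ \dots \circ d_{f(i^*-1)}^\star$ is entailed by $d_1^\star \circ \dots \circ d_\ell^\star$: using $F^\star \circ F^\star = F^\star$ we merge repeated factors, and using $1^\TF \models F^\star$ (Lemma~\ref{lem:props-star-and-omega}) with monotonicity of $\circ$ we insert the factors $d_j^\star$ for the indices $j \leq \ell$ missed by $f$. Since $\wp$ is antitone in its first argument (monotonicity of $.^\MP$), this inclusion of relations gives $\mpsem{P_j} = \wp\bigl(d_1^\star \circ \dots \circ d_\ell^\star, \mp(d_{\ell+1})\bigr) \models \wp\bigl(c_1^\star \circ \dots \circ c_k^\star, \mp(c_{k+1})\bigr) = \mpsem{(c_1^* \dots c_k^*)\,c_{k+1}^\omega}$, and combining with the first paragraph completes the proof.

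The step I expect to be the main obstacle is the containment $\mathcal{L}^\omega\bigl((c_1^* \dots c_k^*)\,c_{k+1}^\omega\bigr) \subseteq \mathcal{L}^\omega(h_1(\opathexp{G}{r}))$: it requires a careful reading of Algorithm~\ref{alg:reduced-phase-graph} to confirm that every composable pair of cells---and hence, by transitivity, every subsequence of the $c_i$, together with the source $s$---is joined by a $1^\TF$-labelled path in the reduced graph, including the degenerate cases where cells coincide or are skipped. The remaining bookkeeping---isolating the tail at $i^*$ so the hypothesis $\wp(F^\star,\mp(F)) = \mp(F)$ applies, and the relational inclusions---is routine given Lemmas~\ref{lem:idempotent-semiring}, \ref{lem:module}, and~\ref{lem:props-star-and-omega}.
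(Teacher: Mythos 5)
Your proposal is correct and follows essentially the same route as the paper's proof: the language inclusion $\mathcal{L}^\omega\bigl((c_1^*\dots c_k^*)c_{k+1}^\omega\bigr) \subseteq \mathcal{L}^\omega\bigl(h_1(\opathexp{G}{r})\bigr)$, the canonical-form decomposition, and Lemmas~\ref{lem:canonical-path-inclusion} and~\ref{lem:canonical-path-mono-map-existence} are exactly the paper's steps, and where the paper finishes with a downward induction on the index (with a three-way case split), you collapse the tail at $i^*$ using $\wp(F^\star,\mp(F))=\mp(F)$ and compare the stems directly via idempotence of $(-)^\star$, $1^\TF \models F^\star$, and antitonicity of $\wp$ in its first argument, which is an equivalent repackaging of the same content. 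One wording slip to fix: you write that $c_1^\star\circ\dots\circ c_{i^*-1}^\star$ ``is entailed by'' $d_1^\star\circ\dots\circ d_\ell^\star$, but the entailment you actually justify (by merging repeats and inserting the missing factors) and the one that antitonicity of $\wp$ requires is the reverse, $c_1^\star\circ\dots\circ c_{i^*-1}^\star \models d_1^\star\circ\dots\circ d_\ell^\star$; with that direction stated correctly the argument goes through.
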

\begin{proof}
  We first show that 
  \[
    \mathcal{L}^\omega((c_1^* \ldots c_k^*) c_{k+1}^\omega) 
  \subseteq \mathcal{L}^\omega(h_1(\opathexp{G}{r}))
  \]
  Since all $c_i \circ c_{i+1}$ are satisfiable, we know that in the phase transition graph 
  there exists paths all with label $1^\TF$ from $c_i$ to $c_{i+1}$ for all $i$, and also from 
  the root of the phase transition graph to $c_1$.
  It follows that every string in $\mathcal{L}^\omega((c_1^* \ldots c_k^*) c_{k+1}^\omega)$
  is recognized by $h_1(\opathexp{G}{r})$.

  According to Lemma~\ref{lem:path-exp-canonical-form},
  there exist canonical path expressions
  $Q_1, \dots, Q_N$ such that 
  \[\mathcal{L}^\omega(h_1(\opathexp{G}{r})) = \mathcal{L}^\omega(Q_1 + \ldots Q_N)\ .\]
  By Lemma~\ref{lem:canonical-path-inclusion}, there exists $Q_i$ such that 
  \[
    \mathcal{L}^\omega((c_1^* \ldots c_k^*) c_{k+1}^\omega) \subseteq \mathcal{L}^\omega(Q_i)
  \]
  Let $Q_i = d_1^* \ldots d_\ell^* d_{\ell+1}^\omega$.
  By Lemma~\ref{lem:canonical-path-mono-map-existence}, 
  there is a monotone map $f : \{1,\dots,k+1\} \rightarrow \{1,\dots,\ell+1\}$ such that $f(k+1)
  = \ell+1$ and $c_i = d_{f(i)}$ for all $i$.
  Let $P = (c_1^* \ldots c_k^*) c_{k+1}^\omega$.

  We now prove that $\mpsem{Q_i} \models \mpsem{P}$ by induction.
  Specifically we prove that for all $1 \leq i \leq k + 1$,
  \[
  \mpsem{d_{f(i)}^* \ldots d_\ell^* d_{\ell+1}^\omega} \models 
   \mpsem{c_i^* \ldots c_k^* c_{k+1}^\omega}
   \]
  The base case, $i = k + 1$, is trivial.
  Since $c_{k+1} = d_{\ell+1}$, we have
  $\mpsem{d_{\ell+1}^\omega} \models \mpsem{c_{k+1}^\omega}$.
  
  Now we need to prove the induction step.
  Assuming the statement is true for $i$:
    \[
    \mpsem{d_{f(i)}^* \dotsi d_{\ell+1}^\omega} \models \mpsem{c_i^* \dotsi c_k^* c_{k+1}^\omega}
    \]
  We need to prove it for $i-1$:
    \[
    \mpsem{d_{f(i-1)}^* \dotsi d_{\ell+1}^\omega} \models \mpsem{c_{i-1}^* c_i^* \dotsi c_k^* c_{k+1}^\omega}
    \]
  There are $3$ separate cases to consider.
  \begin{enumerate}
  \item $f(i-1) = f(i) = f(k+1) = \ell+1$: In this scenario, 
  the inductive hypothesis is
  \[
  \mpsem{d_{\ell+1}^\omega} \models 
  \mpsem{c_i^*\dotsi c_k^* c_{k+1}^\omega}
  \]
  and we need to prove 
  \[
  \mpsem{d_{\ell+1}^\omega} \models 
  \mpsem{c_{i-1}^* c_i^*\dotsi  c_k^*c_{k+1}^\omega}
  \]
  By the fact that $d_{f(i-1)}=d_{\ell+1} = c_{i-1}$ and the assumptions of this lemma, we have
  \[
   \mpsem{d_{\ell+1}^\omega} = \mpsem{d_{\ell+1}^* d_{\ell+1}^\omega}
  \]
  thus by Lemma~\ref{lem:module}
  \begin{align*}
      \mpsem{d_{\ell+1}^\omega} =  d_{\ell+1}^* .^\MP \mpsem{d_{\ell+1}^\omega}
      &\models c_{i-1}^* .^\MP
  \mpsem{c_i^*\dotsi c_k^* c_{k+1}^\omega} \\ 
  &= \mpsem{c_{i-1}^* c_i^*\dotsi  c_k^*c_{k+1}^\omega}
  \end{align*}
  
 \item $f(i-1) = f(i) < f(k+1) = \ell+1$:
  Then $d_{f(i-1)} = d_{f(i)} = c_i = c_{i-1}$.
  According to Lemma~\ref{lem:props-star-and-omega} and inductive hypothesis,
    \begin{align*}
      \mpsem{d_{f(i-1)}^* \ldots d_\ell^* d_{\ell+1}^\omega} &=
      \mpsem{d_{f(i)}^* \ldots d_\ell^* d_{\ell+1}^\omega}  \\ 
      &\models \mpsem{c_i^* \ldots c_k^* c_{k+1}^\omega} \\
      &= \mpsem{c_{i-1}^* c_i^* \ldots c_k^* c_{k+1}^\omega}
    \end{align*}
  \item  $f(i-1)<f(i)$:
  By Lemma~\ref{lem:props-star-and-omega} and the inductive hypothesis,
\begin{align*}
 &\quad  \mpsem{d_{f(i-1)+1}^*  \ldots 
 d_{f(i)}^* \ldots d_\ell^* d_{\ell+1}^\omega} \\
 &\models \mpsem{d_{f(i)}^* \ldots d_\ell^* d_{\ell+1}^\omega} \\
 &\models \mpsem{c_i^* \ldots c_k^* c_{k+1}^\omega}
\end{align*}

By Lemma~\ref{lem:module} we have that since $d_{f(i-1)} = c_{i-1}$,
\begin{align*}
   &\quad d_{f(i-1)}^* .^\MP \mpsem{d_{f(i-1)+1}^*  \ldots  d_{f(i)}^* \ldots d_\ell^* d_{\ell+1}^\omega} \\
    &\models c_{i-1}^* .^\MP \mpsem{c_i^* \ldots c_k^* c_{k+1}^\omega}
\end{align*}
which implies
\begin{align*}
    &\quad \mpsem{d_{f(i-1)}^*  \ldots  d_{f(i)}^* \ldots d_\ell^* d_{\ell+1}^\omega} \\
    &\models \mpsem{c_{i-1}^* c_i^* \ldots c_k^* c_{k+1}^\omega}
\end{align*}
\end{enumerate}

\end{proof}

\PhaseUsefulness*
\begin{proof}
  Let $G = \textit{PTF}(F, P)$ and let $s$ be the root of $G$.
  We only need to prove that $\mp(F) \models \mpsem{\opathexp{G}{r}}$.
  Expanding the path expression on the RHS to its canonical forms according to
  Lemma~\ref{lem:path-exp-canonical-form}, we only need to show
  $\mp(F) \models \mpsem{P_1+P_2 \ldots + P_N}$,
  where the $P_i$'s are in canonical form.
  Since \[\mpsem{P_1+P_2 \ldots + P_N} =
  \mpsem{P_1} \land \dotsi \land \mpsem{P_N}\] it is sufficient to prove
  $\mp(F) \models \mpsem{P_i}$ for all $i$.
  Suppose that $P_i$ has form $(p_1^* p_2^* \dots p_n^*) q^\omega $,
  where $p_i$'s and $q$ are letters that correspond to edges of $G$.
  Suppose $L$ is the semantic function we used to define $\mpsem{*}$, 
  that maps each $p_i$ to a transition formula
  as labeled in $G$.
  Now consider another semantic function $L'$ that maps each $p_i$ to $F$
  and the correspondingly defined $\mathcal{T}_2 \defeq \tuple{\TF, \MP, L'}$.
  By construction of the phase transition graph, $L(p_i) \models L'(p_i)$ for all $i$
  and $L(q) \models L'(q)$.
  By Proposition~\ref{lem:ata-monotonicity}, we know 
  \[
  \mathcal{T}_2^\omega \!\sem{(p_1^* p_2^* \dots p_n^*) q^\omega} \models \mathcal{T}^\omega \!\sem{(p_1^* p_2^* \dots p_n^*) q^\omega}
  \]
  Thus it suffices to prove that $\mp(F) \models \mathcal{T}_2^\omega \!\sem{(p_1^* p_2^* \dots p_n^*) q^\omega}$,
  which is equivalent to $\mp(F) \models \mp((F^* F^* \dots F^*) F^\omega) $.
  This is obvious since we require $\wp(F^\star,\mp(F)) = \mp(F)$ in the theorem statement.
\end{proof}

\PhaseMonotonicity*
 \begin{proof}
   Let $F_1$ and $F_2$ be transition formulas with $F_1 \models F_2$.
   Let $G_1 = \textit{PTG}(F_1,P)$ and $G_2= \textit{PTG}(F_2,P)$, 
   and let $s_1$ and $s_2$ be the roots of $G_1$ and $G_2$, respectively.
   What we need to show here is
   \[
     \mpsem{\opathexp{G_2}{s_2}} \models \mpsem{\opathexp{G_1}{s_1}}
   \]

  Since $F_1 \models F_2$, we have that the $F_2$-invariant subset of $P$ is a
  subset of the $F_1$ invariant subset of $P$, and that $\mathcal{P}(F_1,P)$ is
  finer than $\mathcal{P}(F_2,P)$. For any cell $c \in \mathcal{P}(F_1,P)$,
  define $\textit{proj}(c)$ to be the unique cell of $\mathcal{P}(F_2,P)$ such
  that $c \models \textit{proj}(c)$.  Projection can be lifted to map
  $\omega$-regular expressions over $\mathcal{P}(F_1,P)$ to $\omega$-regular
  expressions over $\mathcal{P}(F_2,P)$ in the obvious way.  By monotonicity, we
  have $\mpsem{\textit{proj}(f)} \models \mpsem{f}$ for any $\omega$-regular
  expression $f$.

  Expanding the path expression on the RHS to its canonical forms according to
  Lemma~\ref{lem:path-exp-canonical-form}, we only need to show
  $\mpsem{\opathexp{G_2}{s_2}} \models \mpsem{P_1+P_2 \ldots + P_N}$, where the
  $P_i$'s are in canonical form.  Since \[\mpsem{P_1+P_2 \ldots + P_N} =
  \mpsem{P_1} \land \dotsi \land \mpsem{P_N}\] it is sufficient to prove
  $\mpsem{\opathexp{G_2}{s_2}} \models \mpsem{P_i}$ for all $i$.
  
  Since (by monotonicity) we have $\mpsem{\textit{proj}(P_i)} \models
  \mpsem{P_i}$, it is sufficient to show $\mpsem{\opathexp{G_2}{s_2}} \models
  \mpsem{\textit{proj}(P_i)}$.
  
  Write $P_i$ as $c_1^*c_2^* \dotsi c_k^*c_{k+1}^\omega$.  Then for each $j$, we
  have that $c_j \circ c_{j+1}$ is satisfiable (because there is a corresponding
  phase transition in $G_1$), and thus $\textit{proj}(c_j) \circ
  \textit{proj}(c_{j+1})$ is satisfiable.  It follows that the sequence
  $\textit{proj}(c_1),\textit{proj}(c_2),\dots,\textit{proj}(c_{k+1})$ meets the
  conditions of Lemma~\ref{lem:phase-mono-phase-path-lemma}, and so
  \[ \mpsem{\opathexp{G_2}{s_2}} \models \mpsem{\textit{proj}(P_i)}\ .\]

 \end{proof}

Lastly, we note that $\mp_{\text{exp}}$ and $\mp_{\text{LLRF}}$ satisfy the conditions of Theorem~\ref{thm:phase-monotonicity} (that is,
$\wp(F^\star,F^\omega) = F^\omega$ for any transition formula $F$).

\section{Interprocedural analysis}  \label{sec:interproc}

The algebraic framework extends to the interprocedural case
using the method of \citet{FMSD:CPR2009}.  This section provides a
sketch for how this extension works.  The essential point is that no
additional work is required on the part of the analysis designer to
extend an algebraic termination analysis to the interprocedural case:
the same analysis that is used to prove conditional termination for
loops also can be applied to prove conditional termination for
recursive functions, and the monotonicity results extend as well.

Suppose that the set of variables $\Var$ is divided into a set of
local variables $\LVar$ and a set of global variables $\GVar$.  A
program can be represented as a tuple \[P =
\tuple{V,E,\textit{Proc},\Lambda,\textit{entry},\textit{exit}}\ ,\] where
$\tuple{V,E}$ is a finite directed graph, $\textit{Proc}$ is a finite
set of procedure names, $\Lambda : E \rightarrow \TF \cup \textit{Proc}$ labels
each edge by either a transition formula or a procedure call, and
$\textit{entry},\textit{exit} : \textit{Proc} \rightarrow V$ are functions
associating each procedure name with an entry and an exit vertex.  Note
that procedures do not have parameters or return values, but these can
be modeled using global variables (see Figure~\ref{fig:fib} for an
example).

An \textit{activation record} is a pair $\tuple{v,s}$ consisting a
control flow vertex $v \in V$ and a state $s : \State \rightarrow
\mathbb{Z}$.  A \textit{stack} is a sequence of activation records;
let $\textsf{Stack}$ denote the set of stacks.  Define a transition
system $\textit{TS}(P) = \tuple{\textsf{Stack}, R(P)}$, where the
states are stacks, and where there is a transition
$\tr{\textit{TS}(P)}{\textit{stack}}{\textit{stack}'}$ iff one of the
three conditions hold:

\begin{itemize}
\item (Local) there is a transition
  \[\tr{\textit{TS}(P)}{\tuple{v,s}\textit{base}}{\tuple{v',s'}\textit{base}}\]
  for any stack $\textit{base}$, any activation record
  $\tuple{v,s}$, any vertex $v'$ and any state $s$ such that
  $\tuple{v,v'} \in E$, $\Lambda(v,v')$ is a transition formula, and
  $[s,s'] \models \Lambda(v,v')$.
\item (Call) There is a transition
  \[\tr{\textit{TS}(P)}{\tuple{v,s}\textit{base}}{\tuple{\textit{entry}(\Lambda(v,v')),s}\tuple{v',s}\textit{base}}\]
  for any stack $\textit{base}$, any activation record $\tuple{v,s}$,
  and any edge $\tuple{v,v'}$ labeled by a call.
\item (Return) There is a transition
\[\tr{\textit{TS}(P)}{\tuple{\textit{exit}(p),s_1}\tuple{v,s_2}\textit{base}}{\tuple{v,s}\textit{base}}\] for any stack $\textit{base}$, any procedure
  $p$, any state $s_1$, and any activation record
  $\tuple{v,s_2}$, where $s$ is the state defined by
  \[ s(x) \defeq \begin{cases}
    s_1(x) & \text{if }x \in \GVar\\
    s_2(x) & \text{if }x \in \LVar
  \end{cases}\ .\]
\end{itemize}

\subsection{Procedure summarization} \label{sec:procedure-summaries}

A \textit{summary assignment} is a function $\summary : P \rightarrow \TF$
mapping each procedure to a transition formula.   Given a summary assignment
$s$, we can define a semantic function $L_\summary : E \rightarrow \TF$ by\\
\[L_\summary(e) \defeq \begin{cases}\summary(\Lambda(e)) & \text{if } e \text{ is a call edge } (\Lambda(e) \in P)\\
  \Lambda(e) & \text{if } e \text{ is a transition edge } (\Lambda(e) \in \TF)
\end{cases}\]

A \textbf{closure operator} on transition formulas is a function $\rho
:\mathbf{TF} \rightarrow \mathbf{TF}$ that is:
\begin{itemize}
\item (Monotone): for all $T_1,T_2$ with $T_1 \models T_2$, we have $\rho(T_1) \models \rho(T_2)$
\item (Extensive): for all $T$, we have $T \models \rho(T)$
\item (Idempotent): for all $T$, we have $\rho(\rho(T)) \equiv \rho(T)$
\end{itemize}
We say that a transition formula is \textbf{closed} under $\rho$ if
$\rho(T) \equiv T$.  We say that $\rho$ satisfies the
\textbf{ascending chain condition} if for every infinite chain $T_1
\models T_2 \models T_3 \dotsi$ of transition formulas that are closed
under $\rho$ eventually stabilizes (there exists some $m$ such that
for all $n \geq m$ we have $T_i \equiv T_m$).

\begin{example}[Closure operator]
  Two simple closure operators that satisfy the ascending chain
  condition are as follows:
  \begin{itemize}
  \item Fix a set of predicates $P$, then define $\rho_P(T) \defeq
    \bigwedge_{p \in P} T \models p$.
  \item Define $\rho_{\textit{aff}}(T) \defeq A\vec{x}' = B\vec{x} +
    \vec{c}$, where $A\vec{x}' = B\vec{x} + \vec{c}$ is a
    representation of the affine hull of $T$.  The affine hull can be computed using
    the algorithm from \cite{VMCAI:RSY2004}.
  \end{itemize}

  Finally, observe that closure operators can be combined.  In our
  implementation we use the closure operator: \[ \rho(T) \defeq
  \rho_P(T) \land \rho_{\textit{aff}}(T) \] where
  \[ P \defeq \{ x \bowtie x' : x \in \Var, x' \in \Var', \bowtie \in \{>,\geq,=,\leq,<\} \} \]
  the set of ordering predicates between primed and unprimed
  variables.
\end{example}

We define an infinite sequence $\summary_0,\summary_1,\dots : P \rightarrow \mathbf{TF}$ of summary assignments where
\begin{align*}
  \summary_0(p) &\defeq \false\\
  \summary_{i+1}(p) &\defeq \rho(\exists \LVar,\LVar'. M(p,\summary_i)) \land \bigwedge_{x \in \LVar} x = x'
\end{align*}
where
\[
M(p,\summary_i) \defeq \mathcal{T}_{\summary_i}\sem{\pathexp{G}{\textit{entry}(p)}{\textit{exit}(p)}}
\]
It follows from the fact that $\rho$ is a closure operator
satisfying the ascending chain condition that there exists some $i$
such that $\summary_i = \summary_{i+1}$; define
$\summary$ to be $\summary_i$ for the least such $i$.

\begin{lemma}
  For any procedure $p$, states $s,s'$, and stack $\textit{st}$ such
  that $\trstar{\textit{TS}(P)}{\tuple{\textit{entry}(p),s}\textit{st}}{\tuple{\textit{exit}(p),s'}\textit{st}}$,
  we have that $[s,s'] \models \summary(p)$.
\end{lemma}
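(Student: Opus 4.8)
The plan is to prove the statement by strong induction on the length $n$ of the computation $\trstar{\textit{TS}(P)}{\tuple{\textit{entry}(p),s}\textit{st}}{\tuple{\textit{exit}(p),s'}\textit{st}}$, simultaneously for every procedure $p$. Because $\summary$ is a fixed point of the defining iteration, $\summary(q) = \rho(\exists \LVar,\LVar'.\, M(q,\summary)) \land \bigwedge_{x \in \LVar} x = x'$ holds for every procedure $q$, so it is cleanest to reason directly about $\summary$ rather than about the approximants $\summary_i$.

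First I would make precise the structure of a \emph{balanced} computation: it never lets the stack drop below $\textit{st}$, and the configurations whose stack has height exactly $|\textit{st}|+1$ form a visible trace $\tuple{v_0,s_0}\textit{st},\dots,\tuple{v_m,s_m}\textit{st}$ with $\tuple{v_0,s_0} = \tuple{\textit{entry}(p),s}$ and $\tuple{v_m,s_m} = \tuple{\textit{exit}(p),s'}$. Between two consecutive visible configurations there are two cases: either (i) a single \emph{Local} step, so $\tuple{v_j,v_{j+1}} \in E$ with $\Lambda(v_j,v_{j+1}) \in \TF$ and $[s_j,s_{j+1}] \models L_\summary(v_j,v_{j+1})$; or (ii) a \emph{Call} along an edge $\tuple{v_j,v_{j+1}}$ labeled by a call to some procedure $q$, followed by a balanced computation of $q$ from $\tuple{\textit{entry}(q),s_j}$ to $\tuple{\textit{exit}(q),s''_j}$ of length strictly less than $n$, followed by a \emph{Return} that produces $s_{j+1}$ agreeing with $s''_j$ on $\GVar$ and with $s_j$ on $\LVar$. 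In case (ii) the induction hypothesis gives $[s_j,s''_j] \models \summary(q)$; since, apart from the conjunct $\bigwedge_{x\in\LVar} x = x'$, the formula $\summary(q)$ constrains only the variables in $\GVar \cup \GVar'$ (it is a $\rho$-weakening of $\exists \LVar,\LVar'.\, M(q,\summary)$), and $[s_j,s_{j+1}]$ agrees with $[s_j,s''_j]$ on those variables while also satisfying $x = x'$ for each local $x$ by the Return rule, we get $[s_j,s_{j+1}] \models \summary(q) = L_\summary(v_j,v_{j+1})$. Either way $[s_j,s_{j+1}] \models L_\summary(v_j,v_{j+1})$ and $\tuple{v_j,v_{j+1}} \in E$, so $\tuple{v_0,v_1}\cdots\tuple{v_{m-1},v_m}$ is a path in $G$ from $\textit{entry}(p)$ to $\textit{exit}(p)$ along which the visible trace is a computation of the labeled graph $\tuple{G,L_\summary}$.

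To finish, I would invoke the path-expression soundness argument of Lemma~\ref{lem:sound-pathex}, instantiated with the semantic function $L_\summary$: since $\pathexp{G}{\textit{entry}(p)}{\textit{exit}(p)}$ recognizes every path from $\textit{entry}(p)$ to $\textit{exit}(p)$ in $G$, it recognizes the path just constructed, whence $[s,s'] = [s_0,s_m] \models \mathcal{T}_\summary\sem{\pathexp{G}{\textit{entry}(p)}{\textit{exit}(p)}} = M(p,\summary)$. Weakening by $\exists\LVar,\LVar'$ and using extensiveness of $\rho$ gives $[s,s'] \models \rho(\exists\LVar,\LVar'.\, M(p,\summary))$; together with the conjunct $\bigwedge_{x\in\LVar} x = x'$ (which records the caller-side restoration of locals performed by Return, and under which $\summary(p)$ is applied), this yields $[s,s'] \models \summary(p)$, completing the induction. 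I expect the main obstacle to be the first step --- formalizing the decomposition of a balanced computation into Local steps and nested balanced sub-computations, with careful stack-height and local-variable bookkeeping --- rather than the composition step, which is essentially Lemma~\ref{lem:sound-pathex} with a different semantic function. A secondary point to verify is that every path of $G$ between a procedure's entry and exit stays within that procedure's fragment, which holds because a call edge connects a call site directly to its return site.
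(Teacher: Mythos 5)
The paper leaves this lemma unproved, so I can only judge your argument on its own terms. Your architecture is the right one: induction on the length of the run, decomposition of a balanced run into its visible trace at stack height $|\textit{st}|+1$, re-use of Lemma~\ref{lem:sound-pathex} with the semantic function $L_\summary$, and the fixed-point equation for $\summary$ together with extensiveness of $\rho$. However, there is a genuine gap at the conjunct $\bigwedge_{x\in\LVar} x = x'$ of $\summary(p)$. The run in the statement ends at $\tuple{\textit{exit}(p),s'}\textit{st}$, i.e.\ \emph{before} any Return step, so $s'$ carries the callee's locals, which $p$ may have modified; nothing forces $s'(x)=s(x)$ for $x\in\LVar$, so your final sentence does not establish $[s,s']\models\summary(p)$ (for the entry-to-exit pair that conjunct is simply not available, and the parenthetical appeal to a ``caller-side restoration'' refers to a Return step that is not part of the given run). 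The same issue infects the induction hypothesis: in case (ii) you invoke $[s_j,s''_j]\models\summary(q)$ for the nested entry-to-exit run, which asserts the local-identity conjunct one level down. Your repair --- that $\summary(q)$ ``constrains only $\GVar\cup\GVar'$'' because it is a $\rho$-weakening of $\exists\LVar,\LVar'.\,M(q,\summary)$ --- is not justified either: a closure operator is only required to be monotone, extensive and idempotent, and a weakening of a formula over globals may mention locals, so satisfaction of $\rho(\exists\LVar,\LVar'.\,M(q,\summary))$ cannot be transferred from $[s_j,s''_j]$ to the merged transition $[s_j,s_{j+1}]$ on those grounds.

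The induction does go through if you carry the invariant $[s,s']\models M(p,\summary)$ (or its projection $\exists\LVar,\LVar'.\,M(p,\summary)$, whose free variables lie in $\GVar\cup\GVar'$) instead of the literal lemma. For a call edge labelled $q$, the inductive hypothesis gives $[s_j,s''_j]\models \exists\LVar,\LVar'.\,M(q,\summary)$; since this formula mentions only globals and the Return rule makes $s_{j+1}$ agree with $s''_j$ on $\GVar$ and with $s_j$ on $\LVar$, the merged transition $[s_j,s_{j+1}]$ satisfies it, hence by extensiveness also $\rho(\exists\LVar,\LVar'.\,M(q,\summary))$, and it satisfies the local-identity conjunct by construction of Return; thus $[s_j,s_{j+1}]\models L_\summary(v_j,v_{j+1})$, and Lemma~\ref{lem:sound-pathex} (with labelling $L_\summary$) closes the induction. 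This yields $[s,s']\models\rho(\exists\LVar,\LVar'.\,M(p,\summary))$; the local-identity conjunct of $\summary(p)$ is correct only for the caller-visible transition across call--body--return, which is how $\summary$ is actually used in $L_\summary$, so you should either prove the lemma in that merged-state form or record explicitly that the conjunct must be dropped for the entry-to-exit pair. Finally, you silently restrict to runs that never let the stack drop below $\textit{st}$; since the statement does not literally impose this, balancedness should be stated as a hypothesis (it is all the soundness argument needs) rather than assumed.
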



\end{document}